\newcommand{\R}{\mathbb{R}}
\newtheorem{theorem}{Theorem}[section]
\newtheorem{lemma}[theorem]{Lemma}
\newtheorem{proposition}[theorem]{Proposition}
\newtheorem{remark}[theorem]{Remark}
\newtheorem{example}[theorem]{Example}
\newtheorem{definition}[theorem]{Definition}
\numberwithin{equation}{section}
\def\E{{\mathbb{E}}}
\def \d{\mathrm{d}}
\def\Cov#1#2{{\rm Cov}\left(#1,\, #2\right)}
\renewcommand{\d}{{\rm d}}
\title{Heath–Jarrow–Morton meet lifted Heston in energy markets for joint historical and implied calibration}
\author[1]{Eduardo Abi Jaber\thanks{eduardo.abi-jaber@polytechnique.edu. I am grateful for the financial support from the Chaires FiME-FDD, Financial Risks, Deep Finance \& Statistics and Machine Learning and systematic methods in finance at Ecole Polytechnique.}}
\author[1]{Soukaïna Bruneau}
\author[2,3]{Nathan De Carvalho\thanks{nathan.decarvalho@engie.com. I am grateful for the financial support provided by Engie Global Markets.}}
\author[1,2]{Dimitri Sotnikov\thanks{dimitri.sotnikov@gmail.com. I am grateful for the financial support provided by Engie Global Markets.}}
\author[2]{Laurent Tur}
\affil[1]{École Polytechnique, CMAP}
\affil[2]{Engie Global Markets}
\affil[3]{Université Paris Cité, LPSM}
\definecolor{h_color}{rgb}{0.118, 0.827, 0.722}
\definecolor{g_color}{HTML}{FF5500}
\definecolor{sigma_color}{rgb}{0.00, 0.666, 1}
\definecolor{v_color}{HTML}{D400FF}
\begin{document}

\maketitle

\begin{abstract}
In energy markets, joint historical and implied calibration is of paramount importance for practitioners, yet notoriously challenging due to the need to align historical correlations of futures contracts with implied volatility smiles from the option market. We address this crucial problem with a multiplicative multi-factor Heath-Jarrow-Morton (HJM) model for forward curves, combined with a stochastic volatility factor coming from the lifted Heston model. We develop a sequential fast calibration procedure leveraging the Kemna-Vorst approximation of futures contracts: (i) historical correlations and the Variance Swap (VS) volatility term structure are captured through Level, Slope, and Curvature factors, (ii) the VS volatility term structure can then be corrected for a  perfect match via a fixed-point algorithm, (iii) implied volatility smiles are calibrated using Fourier-based techniques. The main advantage of the proposed calibration framework is the decoupling of the calibration steps: each step tackles a simpler calibration subproblem and guaranties that the previously optimized parameters remain unchanged. Our model displays remarkable joint historical and implied calibration fits on the German power market and enables realistic interpolation within the implied volatility hypercube.
\end{abstract}

\begin{description}
\item[Mathematics Subject Classification (2010): 91G20, 91G60] 
\item[JEL Classification: C02, Q41, C63] 
\item[Keywords:] Energy markets, Multi-factor HJM, Nelson-Siegel, Stochastic Volatility, Lifted Heston, Kemna-Vorst, Calibration
\end{description}

\newpage

\tableofcontents

\newpage

\section*{Introduction}

In power markets, futures contracts deliver electricity continuously over a fixed period, rather than on a fixed delivery date as is typical for commodities like oil. These contracts are settled either physically or financially with respect to the average spot price of electricity over the delivery period. Due to the inability to efficiently store electricity in large quantities, these instruments are often referred to as \textit{swaps}, as the holder effectively exchanges the forward rate against the spot price of the commodity.

The unique characteristics of electricity markets, such as the need to maintain equilibrium between real-time production and consumption, distinguish them from other commodity markets. Real-time delivery is typically managed through intra-day and imbalance markets, which align production and consumption levels and can exhibit frequent price spikes or even negative prices during periods of significant imbalance. Moreover, financial futures with very short-term deliveries often show low correlation with long-term futures, which are highly correlated among themselves. This behavior is accompanied by exponentially increasing realized volatility as the time to delivery reduces, the so-called \cite{samuelson2016proof} effect.

Following the deregulation of European electricity markets, an extensive body of literature has emerged on the modeling of electricity markets. We refer to the survey by \cite*{deschatre2021survey} and the book by \cite*{benth2008stochastic} for an overview of key modeling approaches and market dynamics.

Several approaches to energy market modeling are considered in the literature, differing in the choice of the initial stochastic quantity to be modeled. The first class of models focuses on the spot price process, as seen in works like \cite*{Mishura2023GaussianVP, Schmeck2021TheEO, cortazar2017multifactor}. Another approach, inspired by the LIBOR market model , see \cite*{Mercurio2006}, models futures contracts with specific delivery periods (e.g., monthly contracts, as in \cite*{Kiesel2009}, \cite*{gardini2023heath}), using these contracts as building blocks to derive the prices of other contracts under no-arbitrage conditions. The third class of models, inspired by the well-known \cite*{HJM1992} (HJM) interest rate model, takes infinitesimal futures contracts as a starting point and uses them to reconstruct futures contracts with any delivery period. 
This type of model, adapted to the energy-markets setting by \cite{Clewlow1999}, ensures consistent dynamics across all futures contracts while also guaranteeing the 
absence of arbitrage between futures contracts with overlapping delivery periods.
The ability to model the dynamics of the entire futures curve, that is, the joint evolution of all futures contracts, is essential for pricing and hedging derivatives that are sensitive to cross-contract correlations, such as swing and spread options \citep{carmona2003pricing}.

Since the HJM approach naturally leads to modeling the entire futures curve, it is common to work with stochastic processes taking values in infinite-dimensional Hilbert spaces. In particular, several classes of infinite-dimensional stochastic volatility models have been developed; see, for instance, \citet*{cox2022infinite}. \citet{benth2018heston} consider an infinite-dimensional Heston-type model, \citet{benth2018space} employ an NIG Lévy process as the stochastic driver, and \citet{benth2021infinite} analyze infinite-dimensional Volterra-type dynamics. For a more systematic treatment of infinite-dimensional modeling, we refer the reader to the monograph by \citet{benth2023stochastic}.

Early commodity models primarily focus on historical calibration, which involves calibrating the covariance structure of traded futures contracts' returns; see for example \cite{Andersen2010} for the calibration of a HJM model to gas prices, \cite*{edoli2013calibration} for a multi-underlyings calibration using the quadratic variations of two-factor models for each market, \cite*{gardini2023heath} where they calibrate a LIBOR market Black-Scholes-type factor model using historical swap prices, and \cite*{feron2024estimation} for a historical calibration of a multi-factor HJM model using maximum likelihood and a Kalman filter where they also comment on the number of factors to capture the historical covariance of futures' returns.

Since April 2024, brokers have started quoting smiles for vanilla options on German power. These vanilla options are written on monthly contracts, quarterly contracts, and calendar (yearly) contracts, which can overlap. For example, the first quarter of 2025 and the calendar 2025 can have quoted smiles. Another specificity of the power market is that for one calendar underlying, three or four smiles can be quoted. With the increasing liquidity of the electricity derivatives market in Europe, there has been significant growth in research in option pricing in power market models, see \cite*{Schmeck2021TheEO, cortazar2017multifactor, benth2017additive}. However, few studies as in \cite*{piccirilli2021capturing, Fanelli2016path-dep} address the challenge of implied calibration, which involves calibrating model parameters to fit available option prices. Note that implied calibration in the energy market is exceptionally challenging, as it requires the simultaneous calibration of multiple volatility surfaces associated with different underlyings, which are interconnected through the futures price curve.

In energy markets, the limited liquidity of options and the relatively small set of underlyings for which they are traded make it impossible to calibrate pricing models using option prices alone, a marked contrast to more developed derivatives markets, such as equities or interest rates. Consequently, a crucial requirement for any pricing model in this low-liquidity environment is the ability to produce reasonable interpolations and extrapolations of option prices, not only across strikes and maturities but also across different futures contracts, in order to fill the missing regions of the implied volatility hypercube.
Moreover, as discussed above, calibrating the correlation structure is essential, yet cannot be inferred from option prices. Historical calibration therefore becomes indispensable for practical implementation. At the same time, given the gradually increasing liquidity of options, the options market can no longer be neglected, both because of regulatory requirements and due to the arbitrage opportunities that arise under purely historical models.
The most significant limitation of the approaches mentioned above is that they allow for either historical calibration or implied calibration, but not both simultaneously.
Hence the following questions:
\begin{center}
    \textit{Is joint historical and implied calibration possible in energy markets? 
    \\ If so, can we do it with a tractable model that provides a consistent interpolation of option prices?
    }
\end{center}

We answer both questions \textit{affirmatively} with a multiplicative multi-factor \cite*{HJM1992} (HJM) model for forward curves, combined with a stochastic volatility factor coming from the lifted Heston model of \cite{lifted2019}. To the best of our knowledge, this paper is the first to address both calibration problems with a single model, taking into account the entire implied volatility smile.

On the one hand, by capturing historical covariances and implied volatility levels, our model provides insights on the futures correlation structure between liquid and non-liquid futures contracts. On the other hand, implied calibration allows the entire implied volatility smile, not just the At-The-Money (ATM) volatilities, to be represented by the model. Moreover, the proposed implied calibration takes into account potential multiple maturities for a given underlying futures contract. Thus, the joint historical and implied calibration matches both the correlation between futures and quoted smiles --— a problem of paramount importance for market practitioners --- leading to more accurate pricing of exotic contracts like Asian options or swing options. One potential difficulty is that implied calibration may affect the historical one, but we will see that this influence is negligible. To the best of our knowledge, this paper is the first to address both calibration problems with a single model, taking into account the entire implied volatility smile.

\paragraph{Contributions.} More precisely, to solve the joint historical and implied calibration problem, we introduce in Section \ref{S:model} an HJM model with
\begin{itemize}
    \item[(i)] parsimonious parametric Level, Slope and Curvature risk-factors in order to capture both historical covariances of rolling futures' log returns and implied volatility levels with a few factors,

    \item [(ii)] two piece-wise constant functions to perfectly match the implied volatility term structure, including early maturities,

    \item[(iii)] a  stochastic volatility component coming from the lifted Heston model with three time-scales to match the implied volatility skews.
\end{itemize}
Such model is by construction arbitrage-free with respect to futures contracts with overlapping delivery periods.

In Section \ref{s:market_data}, we present our Market data. We recall the typical liquid ``absolute'' futures quoting on power markets, we detail a stripping algorithm to construct rolling futures contracts satisfying absence of overlapping arbitrage and used to estimate historical covariances. Furthermore, we propose a novel multi-contract SSVI parametrization to extract Variance Swap (VS) volatilities from listed options.

Then,  we detail a novel three-step sequential calibration methodology relying on the \cite{KEMNA1990113} (KV) approximation  of futures contracts in order to
\begin{itemize}
    \item[1)] jointly capture the historical covariances of rolling futures' daily log returns and the implied VS volatility levels of absolute futures via a non-linear -- linear cone program,  see Section \ref{s:joint_cov_vs_calib},
    \item[2)] correct and fit perfectly the VS volatility term structure via a fast fixed-point algorithm based on VS prices, see Section \ref{section:gh_calib},
    \item[3)] fit the volatility smile's shapes via Fourier inversion techniques as in \cite{Lewis2001} for fast and efficient vanilla option pricing, see Section \ref{S:fastpricing}. 
\end{itemize}
We emphasize that the calibration procedure is fully decoupled: the three steps are performed sequentially, with each step preserving the calibration results achieved in the previous ones. 
Moreover, thanks to both the KV approximation and the decoupled structure, each step remains highly tractable. Decomposing the calibration into simpler subproblems yields a procedure that is a priori faster and more efficient than calibrating all parameters simultaneously. Our model displays remarkable joint historical and implied calibration fits  to both German power and TTF gas markets.  In order to validate our calibration methodology, we show a posteriori how close the KV approximated futures are to the exact arbitrage-free futures in terms of sample trajectories, implied volatility smiles and correlations between futures contracts. Finally, we show that such a fully calibrated model can be used to interpolate the implied volatility hypercube in a consistent manner. The main calibration results are collected in Section \ref{s:numerical_results}. 
Additional model  algorithmic insights and  calibration results  are postponed to the appendices.

\paragraph{Related literature.} The paper of \cite*{piccirilli2021capturing} is the work most closely related in spirit to our approach, although there are several important differences. The authors propose a two-factor model with Normal Inverse Gaussian Lévy factors, while our model is a stochastic volatility model with a continuous process as the variance. Their calibration procedure considers only one smile per contract, ignoring multiple maturities for a given futures contract. Our model accounts for these ``early maturities'', enabling a more refined calibration of the volatility term structure. 

\paragraph{Notations.}
For $N \in \mathbb{N}^{*}$, we denote by $\mathbb{S}_{++}^{N}$ (resp. $\mathbb{S}_{+}^{N}$) the set of $N \times N$ definite (resp. semi-definite) positive matrices, and $\|.\|_{w}$ denotes a weighted Euclidean norm with weights $w := \left( w_{i} \right)_{i \in \{1, \ldots, N \}^{2}} \in \mathbb{R}_{+}^{N}$ such that $\|u\|_{w} := \sqrt{\sum_{i=1}^{N} w_{i} u_{i}^{2}}, \; u \in \mathbb{R}^{N}$. We define similarly the weighted Frobenius norm $\|A\|_{\Gamma} := \sqrt{\sum_{i=1}^{N} \sum_{j=i}^{N} \Gamma_{i,j} A_{i,j}^{2}}, \; A \in \mathbb{R}^{N \times N}$, with matrix weights $\Gamma := \left( \Gamma_{i,j} \right)_{i,j \in \{1, \ldots, N \}^{2}} \in \mathbb{R}_{+}^{N \times N}$. We will omit the indices $w$ and $\Gamma$ for standard Euclidean and Frobenius norms $\|u\|$ and $\|A\|$ respectively. We will also denote by $\|\cdot\|_{\infty}$ the infinity norm $\|u\|_{\infty} := \max_{i \in \{1, \ldots, N\}}|u_{i}|$ and $\|A\|_{\infty} := \max_{i, j \in \{1, \ldots, N\}}|A_{ij}|$.

\section{The model: HJM with lifted Heston}\label{S:model}

{Fix a filtered probability space $\left( \Omega, \mathcal F, \left( \mathcal{F}_t \right)_{t \geq 0}, \mathbb Q \right)$ satisfying the usual conditions, where $\mathbb Q$ represents the risk-neutral probability.} We model the futures price curve {under $\mathbb Q$}, i.e.~the infinitesimal futures contract prices $\left( f(t, T) \right)_{0 \leq t \leq T < \infty}$, à la \citet*{HJM1992} (HJM), enhanced with a stochastic volatility component coming from the lifted Heston model of \citet{lifted2019} in the form
\begin{equation}\label{eq:HJM_def}
     \frac{\d f(t,T)}{f(t,T)} =  \textcolor{black}{g(T)}\textcolor{black}{h(t)}\textcolor{black}{\sqrt{V_t}}\sum_{i=1}^N \color{black} \sigma_{i}(t,T) \color{black} \d W_t^{i}, \quad f(0,T) \in \mathbb R_+, \quad 0 \leq t \leq T,
\end{equation}
where 
\begin{itemize}
    \item $W := (W_t)_{t \geq 0}$ is an $N$-dimensional Brownian motion  with a correlation matrix $R \in \mathbb{S}_{++}^{N}$.
    
    \item Each $\sigma_i:\R^2_+ \to \mathbb R, \; i \in \{1 \ldots, N \}$ is a deterministic continuous and bounded function, capturing the \textit{realized futures contracts' covariances} and the \textit{implied volatility levels}. We set  $\sigma = (\sigma_1, \ldots, \sigma_N)^\top$.
    
    \item $\color{black} V$ is a nonnegative stochastic variance process responsible for the \textit{implied smile} of  the form
    \begin{equation}\label{eq:V_def}
        \textcolor{black}{V_t} = 1 + \sum_{i=1}^M c_i U_t^i, \quad t \geq 0
    \end{equation}
    where the pseudo-factors $(U^i)_{i \in \{1,\ldots,M\}}$, weighted by $c_i \geq 0$,  are driven by the same Brownian motion $B$, but mean-revert at different speeds $ 0< x_1< x_2< \ldots < x_M$ such that
\begin{equation}\label{eq:U_def}
        \d U_t^i = -x_i U_t^i\, \d t + \sqrt{\textcolor{black}{V_t}}\, \d B_t, \quad U_0^{i} = 0, \quad i \in \{1, \ldots, M\}.
    \end{equation}
    Here  $B = (B_t)_{t \geq 0}$ is a one-dimensional Brownian motion correlated with $W$, {via $\left( \hat\rho_i \right)_{i \in \{1,\ldots, N\}} \in [-1,1]^{N} $}, to take into account \textit{the leverage effect} such that
    \begin{equation}\label{eq:B_def}
        B_t = \sum_{i=1}^N\hat\rho_i \widehat W_t^i + \sqrt{1 - \sum_{i=1}^N \hat\rho_i^2}\, W_t^\perp, \quad t \geq 0,
    \end{equation}
    where $W^\perp$ is a scalar Brownian motion independent from $\widehat W$ which is a standard $N$-dimensional Brownian motion constructed from $W$ via the Cholesky decomposition (see, for example, \cite{horn2013matrix})
    \begin{equation}\label{eq:Cholesky}
        R = L^\top L, \quad \widehat W = L^{-1} W, \quad \|\hat\rho\| \leq 1,
    \end{equation}
    so that
    \begin{equation} \label{eq:get_rho_tilde_from_rho_hat}
        \d \langle B, W^i\rangle_t = \tilde\rho_i\,\d t, \quad \tilde\rho = L\hat\rho.
    \end{equation}

    \item $T \mapsto g(T)$ and $t \mapsto h(t)$ are deterministic bounded positive functions \textit{correcting the implied volatility levels} to match the volatility term structure for the contracts with several maturity dates.
\end{itemize}

For a fixed $t \in [0,T]$, $f(t, T)$ represents the quote observed at date $t$ of the contract that delivers a unit amount of commodity between dates $T$ and $T+\d T$, where $\d T>0$ denotes an infinitesimal amount of time. The special case $t=T$ is seen as the spot price $S$ of such a commodity: $S_{t} := f(t,t)$, which is well-defined as soon as $\lim_{t \to T} f(t,T) < \infty$.

We prove in the next theorem that the model \eqref{eq:HJM_def} is indeed well-defined leveraging results from \cite{lifted2019}. In particular, we note that although the different factors $U^i$ can become negative, the variance process $V$ is always nonnegative \cite[Theorem~7.1]{jaber2019affinevolterraprocesses}, as illustrated on  Figure \ref{fig:var_traj}. Furthermore, the variance process \eqref{eq:V_def} is Markovian in the state variables $U := (U^i)_{i \in \{1, \ldots, M\}}$ with a state space corresponding to the set of $ u \in \mathbb{R}^M$ such that
$$
{1 + }\sum_{j=1}^M c_j u_j \geq 0 \quad \text{and} \quad {\mu\sum_{j=1}^i \frac{c_j}{x_j} +} \sum_{j=1}^i c_j u_j \geq {\mu\sum_{j=1}^i\frac{c_j}{x_{i + 1}} +} \sum_{j=1}^i c_j u_{i+1}  \quad \text{for } i \in \{1, \ldots, M-1\},
$$
where $\mu = \left(\sum_{k=1}^M\frac{c_k}{x_k}\right)^{-1}$ see \cite*{abijaber2024state}.

\begin{remark}
    We chose the lifted Heston model for two reasons. First, its affine structure allows for fast vanilla option pricing using Fourier techniques, which is highly beneficial for calibration speed. Second, it offers significantly more flexibility in the shapes of implied volatility surfaces that can be calibrated, compared to the simpler standard Heston model, as we demonstrate in Appendix~\ref{section:attainable}. Thus, the lifted Heston model strikes a good balance between numerical tractability and universality. However, we emphasize that the modeling framework we propose is not limited to this model and is flexible enough to accommodate any stochastic variance process $V_t$ satisfying $\E[V_t] \equiv 1$.
\end{remark}

\begin{figure}[H]
\begin{center}
\includegraphics[width=1\linewidth]{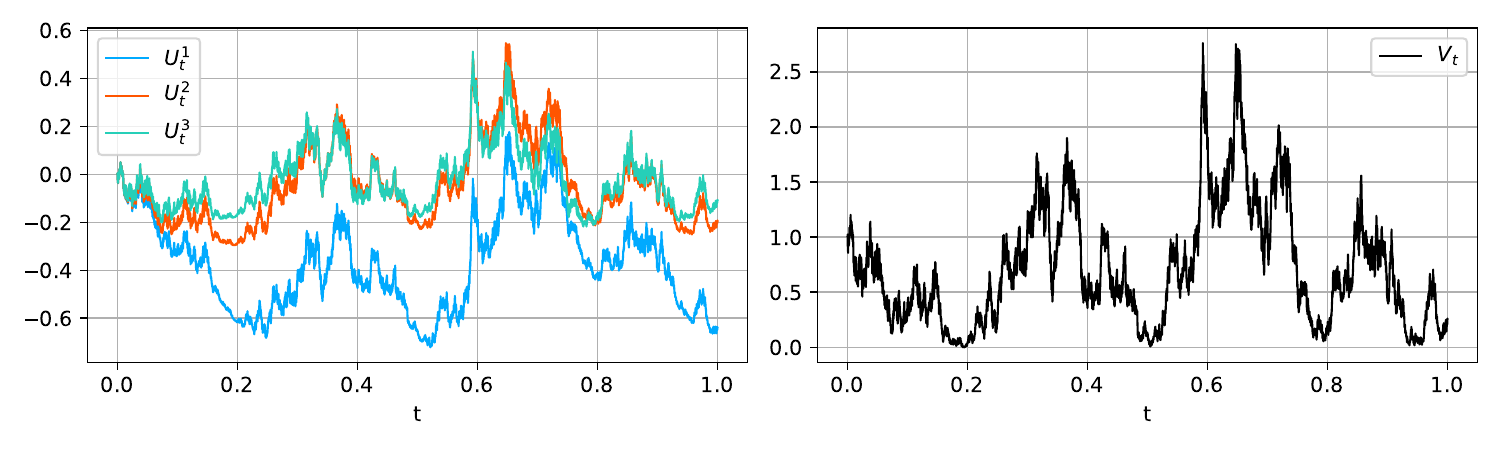}
\caption{Trajectory of the stochastic variance factors $(U^i)_{i=1,2,3}$ (on the left) and the resulting stochastic variance $V$ (on the right) corresponding to the parameters $x = (4.6\cdot 10^{-6},\, 9.712,\, 20.249)$ and $c = (0.492,\, 0.68,\, 2.79)$. The simulation scheme is described in Appendix \ref{section:monte_carlo_scheme}.}
\label{fig:var_traj}
\end{center}
\end{figure}

\begin{theorem}\label{Thm:lifted_heston_existence}
    Fix $T > 0$. Let  $g,h:[0,T]\to \mathbb R$ and $\sigma(\cdot,T):[0,T]\to \mathbb R^N$ be bounded and measurable functions. Then, there exists a unique strong solution $(U^i)_{i \in \{1,\ldots, M\}}$ to \eqref{eq:U_def} such that $V$ given by  \eqref{eq:V_def} remains non-negative. Furthermore, the process $f(\cdot,T)$ defined by 
    \begin{align}\label{eq:f_explicit}
        f(t,T) = f(0,T) \exp\left( -\frac{g^2(T)}{2} \int_0^t h^2(s)V_s \sigma^\top (s,T)  R \sigma (s,T)\, \d s + g(T) \int_0^t h(s)\sqrt{V_s}  \sigma^\top (s,T)\, \d W_s \right), \quad t \in [0,\, T],
    \end{align}
    is the unique strong solution to \eqref{eq:HJM_def}. In particular, $f(\cdot, T)$ is a true martingale.
\end{theorem}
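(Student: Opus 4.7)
The plan is to handle the three claims in order: (i) well-posedness of the variance factors $U^i$ with $V \geq 0$ almost surely; (ii) verification that the explicit formula \eqref{eq:f_explicit} solves \eqref{eq:HJM_def}; (iii) upgrading the resulting non-negative local martingale to a true martingale.

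For (i), the system \eqref{eq:U_def}--\eqref{eq:V_def} is exactly the finite-dimensional Markovian lift of a Heston-type variance studied in \cite{lifted2019}, with mean-reversion speeds $x_1<\dots<x_M$, weights $c_i \geq 0$, and a constant forward-variance curve equal to $1$. The existence of a unique strong solution $(U^i)$ with $V \geq 0$ is then a direct application of the well-posedness result of \cite{lifted2019}, and the invariant state set recalled just before the theorem statement is the one characterised in \cite{abijaber2024state}; no further analysis of $V$ is needed. For (ii), given $V$ and the bounded measurable coefficients $g$, $h$, $\sigma(\cdot,T)$, the right-hand side of \eqref{eq:HJM_def} is a linear SDE in $f$. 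Applying It\^o's formula to the logarithm of the right-hand side of \eqref{eq:f_explicit} and using $\d\langle W^i,W^j\rangle_t = R_{ij}\,\d t$, the drift $-\tfrac12 g^2(T)h^2(t) V_t\,\sigma^\top(t,T) R \sigma(t,T)$ cancels exactly the It\^o correction and recovers \eqref{eq:HJM_def}. Pathwise uniqueness is immediate since the equation is linear in $f$.

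For (iii), using the Cholesky factorisation $R = L^\top L$, the vector stochastic integral rewrites as a scalar continuous martingale $M$ with $\d\langle M\rangle_t = g^2(T) h^2(t) V_t\,\sigma^\top(t,T) R \sigma(t,T)\,\d t$, so that $f(t,T) = f(0,T)\,\mathcal E(M)_t$ is a non-negative local martingale, hence a supermartingale. The hard part is precisely upgrading this to a true martingale: a direct Novikov check on $[0,T]$ may fail, because the Laplace transform of $\int_0^T V_s\,\d s$ need not be finite at the argument dictated by $\sup_{s\in[0,T]} g^2(T) h^2(s)\,\sigma^\top(s,T) R \sigma(s,T)$, exactly as in classical Heston. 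Instead, I would exploit the affine structure of $(\log f(\cdot,T), U^1, \ldots, U^M)$: the moment generating function of $\log f(t,T)$ evaluated at $u=1$ is characterised by a multidimensional Riccati ODE whose well-posedness on $[0,T]$ at that specific argument reduces — because $g$, $h$ and $\sigma$ are bounded — to the corresponding statement for the lifted Heston stock price in \cite{lifted2019}. This identifies $\E[\mathcal E(M)_t] = 1$ for every $t \in [0,T]$, and combined with the supermartingale property promotes $f(\cdot,T)$ to a true martingale.
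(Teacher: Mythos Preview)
Your proposal is correct and follows essentially the same route as the paper. The paper likewise reduces \eqref{eq:HJM_def} to a one-dimensional diffusion driven by a scalar Brownian motion (your Cholesky step), recognises $V$ as the Volterra square-root process with kernel $K(t)=\sum_j c_j e^{-x_j t}$, and invokes \cite[Theorem~A.1]{lifted2019} for existence, uniqueness and non-negativity of $V$ together with the explicit solution \eqref{eq:f_explicit}; for the true-martingale property it simply points to the proof of \cite[Lemma~7.3]{jaber2019affinevolterraprocesses}, adapted to a time-dependent spot--vol correlation, which is precisely the affine/Riccati mechanism you outline in part~(iii).
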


\begin{proof}
The equation \eqref{eq:HJM_def} can be rewritten as a one-dimensional diffusion,
\begin{equation}\label{eq:f_one_dim_diff}
    \dfrac{\d f(t, T)}{f(t, T)} = \textcolor{black}{h(t)}\textcolor{black}{\sqrt{V_t}}\sqrt{\sigma(t, T)^\top R \sigma(t, T)}\d \tilde W_t, 
    \quad d\langle B, \tilde W\rangle_t = \dfrac{\sigma(t, T)^\top \tilde\rho\,\d t}{{\sqrt{\sigma(t, T)^\top R \sigma(t, T)}}},
    \quad t \in [0, T],
\end{equation}
where the Brownian motion $\tilde W$ is given by 
\begin{equation}
    \tilde W_t = \sum_{i=1}^N\dfrac{\sigma_i(t, T)}{\sqrt{\sigma(t, T)^\top R \sigma(t, T)}}W_t^i, \quad 0 \leq t \leq T.
\end{equation}

The factor processes $(U^i)_{i \in \{1,\ldots, M\}}$ can be written as
\begin{equation}
    U_t^i = \int_0^t e^{-x_i (t - s)}\sqrt{V_s}\, \d B_s, \quad i \in \{1, \ldots, M\}, \quad 0 \leq t \leq T,
\end{equation}
so that the variance dynamics \eqref{eq:V_def} reads
\begin{equation}\label{eq:V_volterra_formulation}
    V_t = 1 + \int_0^t \left(\sum_{i=1}^M c_i e^{-x_i (t - s)}\right)\sqrt{V_s}\, \d B_s = 1 + \int_0^t K(t - s)\sqrt{V_s}\, \d B_s, \quad 0 \leq t \leq T,
\end{equation}
where $K(t) := \sum_{i=1}^M c_i e^{-x_i t}$. The equation \eqref{eq:V_volterra_formulation} means that $V$ is a Volterra square-root process understood in the sense of \cite*[Section 6]{jaber2019affinevolterraprocesses}. An application of \cite[Theorem A.1]{lifted2019} yields the existence and uniqueness of strong solution $(U^i)_{i \in \{1,\ldots, M\}}$, such that the variance process $V \geq 0$, as well as the existence and uniqueness of the strong solution to \eqref{eq:f_one_dim_diff} given by \eqref{eq:f_explicit}.  The proof of the  martingality of $f(\cdot, T)$ follows exactly the proof of \cite*[Lemma 7.3]{jaber2019affinevolterraprocesses} taking into account that the correlation coefficient in \eqref{eq:f_one_dim_diff} is time-dependent.
\end{proof}

In practice, the infinitesimal futures contracts given by \eqref{eq:HJM_def} are not observed in the market and therefore must be related to market traded futures contracts. Using practitioners' vocabulary, ``absolute'' futures contracts delivering electricity on a fixed calendar delivery period $[T_{s}, T_{e}]$ are typically quoted in organized markets until a few days before their first delivery date i.e.~for dates $t$ such that $0 \leq t \leq T_{s}-\delta$, $\delta$ equal to a few days. We recall in Section \ref{ss:typical_futures} the typical futures quoting in power markets. From the model perspective, the unitary absolute futures contract delivering continuously a unitary power unit of electricity over $[T_s, T_e]$ is given by\footnote{We assume a zero discount rate ($r = 0$) for simplicity.}
\begin{equation} \label{eq:futures_contract_def}
    F_t(T_s,T_e) := \frac{1}{T_e-T_s} \int_{T_s}^{T_e} f(t,T) \d T, \quad 0 \leq t < T_s.
\end{equation}
ensuring the absence of arbitrage opportunity for futures contracts with overlapping delivery periods.

On the other hand, a ``rolling'' futures contract is a contract that depends on the observation date, $t \geq 0$, and maintains a constant time to delivery, $T_{s} > 0$, with a fixed and contiguous delivery period of duration $T_{e}-T_{s} > 0$. As a result, the contract’s delivery period adjusts as the observation date changes, ensuring that the contract always quotes. However, its quote is typically not directly observable in the market. Instead, it must be derived using no-arbitrage principles based on market quotes available at each observation date. Its quote is expressed in our model by the formula
\begin{equation} \label{eq:rolling_forward_model_definition}
    F_t(t+T_s,t+T_e) = \frac{1}{T_e-T_s} \int_{t+T_s}^{t+T_e} f(t,T) \d T, \quad t \geq 
    0.
\end{equation}

\begin{example}[Distinction between absolute and rolling futures]
    For example, on the $20^{th}$ of August 2024, the next absolute monthly futures contract quoting on the market corresponds to the contract September 2024 delivering electricity between the $1^{st}$ to the $30^{th}$ of September 2024. Fixing $T_{s}$ to seven days, and $T_{e}-T_{s}$ to thirty days, one can define a month-ahead rolling contract delivering electricity from the $27^{th}$ of August to the $25^{th}$ of September 2024. On the $21^{st}$ of August 2024, such rolling contract becomes the one delivering electricity from the $28^{th}$ of August to the $26^{th}$ of September 2024, while the forward September 2024 keeps the same delivery period.
\end{example}

\begin{remark}
In the discussion above, we assumed that the risk-free rate is $r = 0$. In the general case, however, a similar no-arbitrage argument \citep[Section 4]{benth2008modeling} yields
\begin{equation*}
    F_t(T_s, T_e) := \int_{T_s}^{T_e} w(T)\, f(t,T)\, \mathrm{d}T, 
    \qquad 0 \le t < T_s,
\end{equation*}
where
$w(T) := \frac{r e^{-rT}}{e^{-rT_s} - e^{-rT_e}}.$
Thus, the extension of our approach to a non-zero constant interest rate is straightforward, although it makes the resulting expressions and computations more cumbersome. For notational convenience, we will therefore assume throughout the paper that $r = 0$.
\end{remark}

\subsection{Correlation structure in the model}

One of the main objectives of the HJM modeling framework is the consistent joint simulation of multiple futures contracts. This capability is essential for pricing derivatives that depend on several underlyings and are sensitive to their correlations, such as spread options. Consequently, the correlation structure between futures contracts plays a crucial role in both pricing and hedging such derivatives and should therefore be carefully calibrated.  

We focus on the calibration of \emph{rolling covariances}, that is, the covariances between returns of rolling futures contracts, defined as
\begin{equation}\label{eq:rolling_cov_def}
    \mathrm{Cov}\left(r_{t}^{k_1}(\tau_{d}),\, r_{t}^{k_2}(\tau_{d})\right), 
    \quad \text{where} \quad 
    r_{t}^k(\tau_{d}) := \log \frac{{F}_{t} (t+T_{s}^k, t+T_{e}^k)}{{F}_{t-\tau_{d}} (t+T_{s}^k, t+T_{e}^k)},
\end{equation}
where the indices $k_1$ and $k_2$ correspond to the rolling contracts under consideration, and $\tau_{d}$ denotes the time step used for computing returns, typically set to one day.  

The main motivation for calibrating rolling covariances is that they are assumed to be approximately constant over time and can therefore be estimated from historical data. The computation of rolling covariances in the model will be discussed in Section~\ref{s:joint_cov_vs_calib}, while the historical estimation procedure is outlined in Appendix~\ref{ss:covariance_estimation}.  

However, although rolling covariances are convenient for calibrating the covariance structure of historical factors, pricing problems require dealing with \emph{instantaneous covariances} of \emph{absolute contracts}, given by
\begin{equation}\label{eq:absolute_inst_correl_def}
    \rho_{ij}(t) = 
    \frac{\left\langle \d \log {F}(T_{s}^i, T_{e}^i),\, \d \log {F}(T_{s}^j, T_{e}^j)\right\rangle_t}
    {\sqrt{\langle \d \log {F}(T_{s}^i, T_{e}^i)\rangle_t}\,
     \sqrt{\langle \d \log {F}(T_{s}^j, T_{e}^j)\rangle_t}},
\end{equation}
which have a direct impact on option prices, but cannot be estimated historically. The numerical precision of computing these correlations is discussed in Subsection~\ref{ss:validation_by_MC_of_kv_futures}.  

We now turn to an approximation that renders the model tractable and, in particular, allows us to derive closed-form expressions for \eqref{eq:rolling_cov_def} and \eqref{eq:absolute_inst_correl_def}.

\subsection{The Kemna--Vorst approximation} \label{section:KV_approx_presentation}
The dynamics of the futures contract $F_{.}(T_s, T_e)$ cannot be written explicitly in our model, since \eqref{eq:futures_contract_def} involves an arithmetic mean, and not a geometric one. For this reason, we will use the \citet{KEMNA1990113} approximation such that
\begin{align}
     \frac{\d F_t(T_s,T_e)}{F_t(T_s,T_e)} &\approx \dfrac{1}{T_e - T_s}\int_{T_s}^{T_e}\dfrac{\d f(t, T)}{f(t, T)}\,\d T \\
     &= h(t) \sqrt{V_t}\sum_{i=1}^N \left(\frac{1}{T_s-T_e}\int_{T_s}^{T_e} g(T) \sigma_i(t, T) \,\d T\right)\,\d W_t^i \\
     &= h(t) \sqrt{V_t} \Sigma_{t}(T_s, T_e)^\top \d W_t, \label{eq:kv_approx}
\end{align}
where
\begin{equation} \label{eq:futures_contracts_volatility}
    \Sigma_{.}(T_s, T_e) := \frac{1}{T_e-T_s} \int_{T_s}^{T_e} g(T)\sigma (., T) \d T.
\end{equation}
Consequently, the approximated dynamics $\widetilde F_{.}(T_s,T_e)$ of the futures contract is given by
\begin{equation}\label{eq:KV_def}
    \dfrac{\d \widetilde F_t(T_s,T_e)}{\widetilde F_t(T_s,T_e)} = \textcolor{black}{h(t)}\textcolor{black}{\sqrt{V_t}}\Sigma_t(T_s,T_e)^\top \d W_t, \quad \widetilde F_0(T_s,T_e)=F_0(T_s,T_e).
\end{equation}
We stress that $\Sigma_{.}(T_s, T_e) = \Sigma_{.}(T_s, T_e; g, \sigma)$ depends essentially on $g$ and $\sigma$, but for the sake of brevity, we omit these arguments. The equation \eqref{eq:KV_def} admits
\begin{align} \label{eq:kv_approximated_futures}
    \widetilde F_t(T_s,T_e) =  F_0(T_s,T_e) \exp\Bigg( &-\frac{1}{2} \int_0^t h^2(r)V_r \Sigma^\top_r(T_s,T_e)  R \Sigma_r(T_s,T_e) \, \d r \\&\quad\quad\quad\quad\quad \quad  + \int_0^t h(r)\sqrt{V_r}  \Sigma^\top_r(T_s,T_e)\, \d W_r \Bigg), \quad t \in [0,\, T_s],
\end{align}
as the unique strong solution, recall Theorem~\ref{Thm:lifted_heston_existence} for the existence and uniqueness of $V$.

Recall that futures contracts' quotes $F := F_{.}(T_s,T_e)$ defined by \eqref{eq:futures_contract_def} are free from arbitrage opportunities in the case of overlapping delivery periods. In contrast, arbitrage may arise in the Kemna-Vorst approximated futures prices $\widetilde{F} := \widetilde{F}_{.}(T_s, T_e)$ from \eqref{eq:kv_approximated_futures}, as they do not strictly satisfy \eqref{eq:futures_contract_def}. However, such approximated futures' quotes $\widetilde{F}$ offer a tractable approach for calibration including 
\begin{itemize}
    \item the explicit computation of their variance swaps' volatilities as shown in the next Section \ref{ss:model_vs_swap_and_vol},
    \item the explicit computation of their covariances as detailed in Section \ref{ss:model_covariance_vs_volatility},
    \item the fast pricing of call and put options written on such futures contracts using a Fourier inversion technique as shown in Subsection \ref{S:fastpricing}.
\end{itemize}
Once the model \eqref{eq:HJM_def} is calibrated using the Kemna-Vorst approximated futures $\widetilde{F}$, we will illustrate numerically a posteriori that, not only are the trajectories of the arbitrage-free futures' quotes' $F$ similar to the ones of $\widetilde{F}$, but also that their instantaneous futures correlation structure and associated vanilla option prices are nearly identical. Indeed, all these quantities with respect to futures with quotes given by \eqref{eq:futures_contract_def} can be approximated by Monte-Carlo techniques as detailed in Section \ref{ss:validation_by_MC_of_kv_futures}.

Since the numerical analysis of the Kemna–Vorst approximation for our model, presented in Subsection \ref{ss:validation_by_MC_of_kv_futures}, demonstrates that the approximated HJM model provides accurate pathwise and weak approximations and preserves the correlation structure well for contracts with the delivery periods of interest (i.e., one month, three months, and one year), we can use the approximated model interchangeably with the postulated model, thereby leveraging its tractability and analytical properties.

From now on, we fix a delivery period $[T_s,\, T_e]$ and use the approximation $\widetilde{F}$ everywhere instead of the exact dynamics. Accordingly, we also use the notation $\Sigma_{.} := \Sigma_{.}(T_s, T_e)$, leaving the delivery period implicit when there is no ambiguity.

\begin{remark}
    One may notice that there was no need for the approximation of futures contracts if the initial model \eqref{eq:HJM_def} was additive and not multiplicative. However, the additive specification has several dynamics' disadvantages, which we comment briefly in Appendix \ref{section:additive_model}.
\end{remark}

\subsection{Variance swap price and volatility} \label{ss:model_vs_swap_and_vol}

By definition, the \textit{variance swap price} with maturity $T>0$, denoted by $\mathrm{VS}_T$, is the expected integrated quadratic variation of the futures contract following the dynamics \eqref{eq:KV_def}
given by
\begin{equation} \label{eq:vs_price_model}
    \mathrm{VS}_T := \E \langle \log {F} \rangle_T \approx \E \langle \log \tilde{F} \rangle_T = \int_0^T{h(t)^2}\Sigma_t^\top R\Sigma_t \d t \geq 0,
\end{equation}
and the \textit{variance swap volatility} $\sigma_{\mathrm{VS}, T}$ satisfies the equation $\sigma_{\mathrm{VS}, T}^2T = \mathrm{VS}_T$, so that 
\begin{equation} \label{eq:vs_vol_model}
    \sigma_{\mathrm{VS}, T} = \sqrt{\dfrac{1}{T}\int_0^T{h(t)^2}\Sigma_t^\top R\Sigma_t\,\d t}.
\end{equation}

For more profound discussion of variance swaps and their properties, we refer reader to \citet{demeterfi1999more}.

\paragraph{Why considering variance swaps for calibration?}

The variance swap volatility is a zero-order approximation of the ATM volatility level as shown by \cite{BergomiGuyon2011}, allowing us to formalize the idea of the \textit{implied volatility smile level}. Moreover, formula \eqref{eq:vs_vol_model} coincides exactly with the implied volatility for maturity $T$ in the HJM model under deterministic volatility, i.e., when $V \equiv 1$.

In fact, in our model \eqref{eq:HJM_def}, variance swap prices \eqref{eq:vs_price_model} depend the functions $h$ and also $\sigma$, $g$ via the deterministic futures' volatilities $\Sigma_{.}(T_s, T_e)$ \eqref{eq:futures_contracts_volatility}, but not on the parameters of $V$ from \eqref{eq:V_def} since we conveniently set $\mathbb{E}V_{t} = 1, \; t \geq 0$. This allows to first calibrate the variance swap prices, i.e.~the smile level with the functions $\sigma, h$ and $g$, and then solve independently the smile shape calibration problem with the parameters of $V$.

Furthermore, the evaluation of the variance swap prices \eqref{eq:vs_price_model} turn out to be in practice less time-consuming than the evaluation of ATM volatilities since they are given by the closed formula \eqref{eq:vs_price_model}, while the ATM volatilities are extracted via the Lewis formula \eqref{eq:lewis_formula} which require the computation of the log-price characteristic function, see Section \ref{S:fastpricing} for more details.

Although Variance Swaps (VS) are not traded in power markets, their market prices $\mathrm{VS}_{T}^{\mathrm{mkt}}$ can be extracted efficiently from the prices of vanilla option quotes as shown in Section \ref{sect:vs_quote_extr}. Thus, given a family of futures contracts whose smiles are quoting, these VS volatilities \eqref{eq:vs_vol_model} define the implied volatility term structure that we will aim to calibrate by fitting a priori $\sigma$, and then perfecting the match with the functions $h$ and $g$.

We emphasize that our interest is not in variance swap contracts themselves, nor in hedging volatility risk; rather, we use the quantity \eqref{eq:vs_vol_model}, which naturally generalizes the implied volatility in the deterministic volatility case, only to calibrate the implied volatility term structure.

Indeed, under deterministic volatility, the quantity \eqref{eq:vs_price_model} reduces to the integrated variance up to maturity $T$, which was used, for instance, in \citet[Section 4]{Kiesel2009} to calibrate the implied volatility term structure. This calibration procedure corresponds to the first step of our joint calibration. However, unlike \citet{Kiesel2009} and other classical HJM calibration approaches, in the second step we also calibrate the functions 
$g$ and $h$ to achieve an exact fit of the term structure, and in the third step we calibrate the implied volatility smile while keeping the variance swap volatility fixed.

For now, let us introduce a flexible parametrization for the functions $\sigma$.

\subsection{A Nelson-Siegel parametrization for the volatility functions $\sigma$} \label{S:paramcorrel}

Back in 1985, Charles B. Nelson and Andrew F. Siegel proposed in \cite{Nelson1987} the following ``parsimonious'' parametrization for the instantaneous forward rate $r$ with $m \geq 0$ days to maturity
\begin{equation} \label{eq:Nelson_Siegel_parametrization}
    r(m) := \beta_{0} + \beta_{1} e^{-\frac{m}{\tau}} + \beta_{2} \frac{m}{\tau} e^{-\frac{m}{\tau}}, \quad \beta_{0}, \beta_{1}, \beta_{2} \in \mathbb{R}, \quad \tau > 0,
\end{equation}
as an alternative to the previously used polynomial fitting techniques to match the yields of US Treasury bills of maturity $m$, obtained in their formulation by integrating from zero to $m$ the forward rate \eqref{eq:Nelson_Siegel_parametrization} and dividing by $m$. They showed that such expression of the forward rate has a desirable non-explosive asymptotic behavior, and is capable of reproducing humps, $S$-shapes, and
monotonic curves. {We refer to \citet{Sepp2023} as an example of a recent application of the HJM approach with Nelson–Siegel parametrization and stochastic volatility to the interest rate market.}

In our case, the volatility functions $\sigma$ in the HJM model \eqref{eq:HJM_def} aim at capturing with parsimony both the realized covariance structure of a family of rolling futures contracts' daily log returns as well as the VS volatility term structure of the futures contracts whose smiles quote in the market.

Inspired by the Nelson--Siegel parametrization \eqref{eq:Nelson_Siegel_parametrization}, we specify three possible forms for the volatility shape functions $\left( \sigma_i \right)_{i \in \{1, \ldots, N \}}$, which yield respectively three distinct types of factors, named \textit{Level} ($L$), \textit{Slope} ($S$) and \textit{Curvature} ($C$), including
\begin{itemize}
    \item one $L$-factor which has a constant volatility function in order to capture \textit{the long-term volatility level of the curve}
    \begin{equation} \label{eq:l_shape}
    \sigma_{1}(t,T) = \sigma_{L}, \quad 0 \leq t \leq T,
    \end{equation}
    \item $S$-factors having an exponentially increasing volatility as time to maturity decreases to capture \textit{the \cite{samuelson2016proof} effect} 
    \begin{equation} \label{eq:s_shape}
    \sigma_{i+1}(t,T) = \sigma_{S,i} e^{-\frac{T-t}{\tau_{S,i}}}, \quad 0 \leq t \leq T, \quad i \in \{1,\ldots, N_{s}\},
    \end{equation}
    \item $C$-factors aiming at capturing \textit{``humps'' in the volatility term structure}, thereby capturing potential \textit{``anti-Samuelson effect''} for long-term deliveries, i.e.~a decreasing volatility with time to maturity, such that
    \begin{equation} \label{eq:c_shape}
    \sigma_{j+N_s+1}(t,T) = \sigma_{C,j} \frac{T-t}{\tau_{C,j}} e^{-\frac{T-t}{\tau_{C,j}}}, \quad 0 \leq t \leq T, \quad j \in \{1,\ldots, N_{c}\},
    \end{equation}
\end{itemize}
with $N_{s}, N_{c} \in \mathbb{N}$ such that $N := 1 + N_{s} + N_{c}$. We display in Figure \ref{F:l_s_c_volatility_shapes} the three distinct volatility shapes in terms of time to maturity.

\begin{figure}[H]
\begin{center}
    \includegraphics[width=0.8\linewidth]{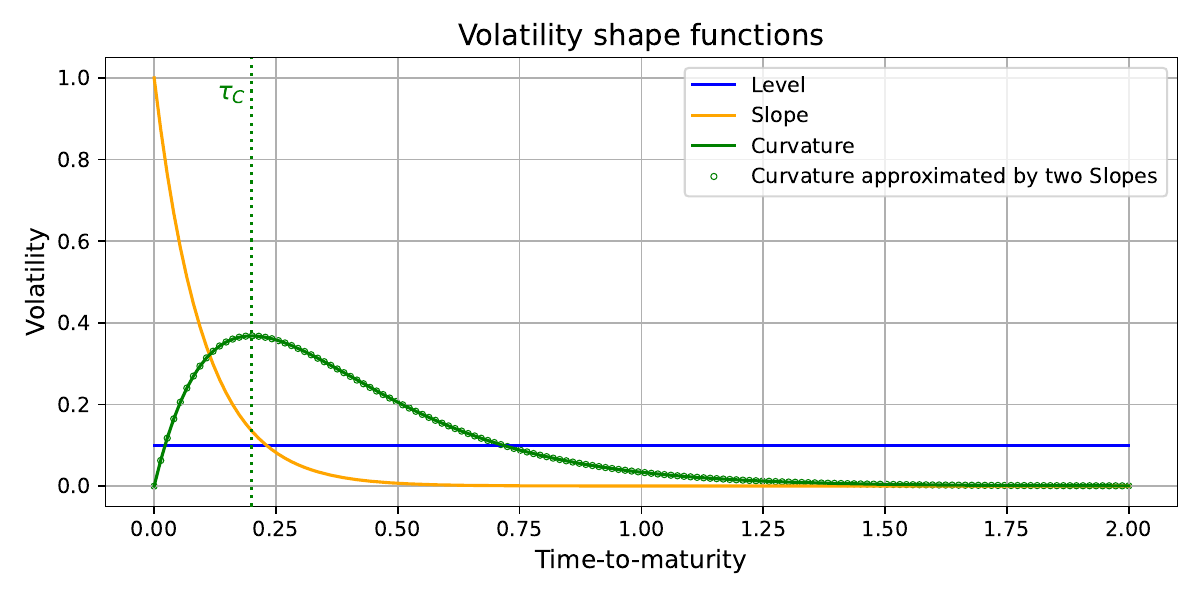}
    \caption{Illustrations of the respective Level (plain blue), Slope (plain orange) and Curvature (plain green) volatility shapes with parameters $\sigma_{L} = 0.1$, $\sigma_{S} = \sigma_{C} = 1$, $\tau_{S} = 0.1$ and $\tau_{C} = 0.2$. We also plot the approximation of curvature by two opposed slope shapes (dotted green) with $\sigma_{S_1} = \sigma_{S_2} = 8.18$, $\tau_{S_1} = 0.212$ and $\tau_{S_2} = 0.188$.}
    \label{F:l_s_c_volatility_shapes}
\end{center}
\end{figure}

The use of both $L$ and $S$ factors is not new, see for example \cite*{Kiesel2009}, \cite{gardini2023heath}. By contrast, we are not aware of previous works trying to calibrate $C$ factors to power markets. Notice in particular that both $S$- and $C$-factors vanish when time to maturity goes to infinity,  i.e.~$T-t \to \infty$, hence only the $L$-factor remains in such regime and therefore captures the long-term volatility level of the energy curve. 

Consequently, such $L$-$S$-$C$ parametrization writes explicitly as
\begin{equation} \label{eq:lsc_deterministic_vol_factors}
    \sigma(t, T)^\top \d W_t = \sigma_{L} \d W_t^{1} + \sum_{i=1}^{N_{s}} \sigma_{S,i} e^{-\frac{T-t}{\tau_{S,i}}} \d W_t^{i+1} + \sum_{j=1}^{N_{c}} \sigma_{C,j} \frac{T-t}{\tau_{C,j}} e^{-\frac{T-t}{\tau_{C,j}}} \d W_t^{j+N_{s}+1}.
\end{equation}
In a similar spirit as \citet{Nelson1987} who noted that their parametrization \eqref{eq:Nelson_Siegel_parametrization} can be easily fitted to market data by least-squares, given a provisional $\tau > 0$, fixing the parameters $(\tau_{S, i})_{i=1, \ldots, N_{s}}$, $(\tau_{C, i})_{j=1, \ldots, N_{c}}$ in \eqref{eq:lsc_deterministic_vol_factors} leads to an efficient calibration problem for the remaining parameters $\sigma_L, (\sigma_{S, i})_{i=1, \ldots, N_{s}}, (\sigma_{C, i})_{j=1, \ldots, N_{c}}$ of the $L$-$S$-$C$ factors, and their correlation matrix $R$ formulated as a linear cone program ensuring that $R$ remains non-negative definite as detailed in Section \ref{s:joint_cov_vs_calib}.

\begin{remark}[Are $L$-$S$-$C$ volatility shapes redundant?]
    On the one hand, notice that taking $\left. \left( \sigma_{S}, \tau_{S} \right)\right|_{\tau_{S} \to \infty}$ in a $S$-factor parametrization \eqref{eq:s_shape} yields a $L$-factor parametrization \eqref{eq:l_shape}. Moreover, by considering two perfectly anti-correlated $S$-factors with identical parameters $\sigma_{S,1}, \sigma_{S,2}$ and distinct mean reversion rates, it is possible to approximate the behavior of a $C$-factor parametrization \eqref{eq:c_shape} as illustrated in green dots in Figure \ref{F:l_s_c_volatility_shapes}. Thus, it is indeed reasonable to restrict oneself to $S$-factors only to identify systematically in the market those distinct level, slope and curvature volatility behaviors but to the price of losing parsimony, as was done for example in \cite{feron2024estimation}. On the other hand, a single curvature shape cannot approximate properly either a level or a slope volatility shape, nor a level shape can approximate neither a slope or a curvature shape.
\end{remark}

\subsection{Joint calibration: overview and snapshots}

The main practical advantage of our model is the possibility to decouple the joint calibration problem into three independent optimization problems to be solved consecutively, without the need to modify the parameters calibrated in previous steps. 
\begin{enumerate}
    \item[1)] A combined historical and implied calibration of the parameters ($\sigma, R$) to capture with parsimony i.e.~with a minimum number of deterministic risk factors, the historical correlation of rolling futures contracts' daily log returns as well as the overall term structure of the implied VS volatility deduced from the market quotes of vanilla options.
    \item[2)] An exact calibration correction of the term structure of the implied VS volatility not captured in step 1), thanks to the functions $g$ and $h$.
    \item[3)] A calibration of the entire shape of the smile using the parameters of the stochastic variance process $V$ in \eqref{eq:V_def}, that is $(c,x)=(c_i,x_i)_{i \in \{1,\ldots, M\}}$ and $\tilde \rho = (\tilde \rho_i)_{i \in \{1,\ldots, N\}}$.
\end{enumerate}

The flow chart in Figure \ref{F:calibration_methodology} illustrates the successive calibration procedure with the data used in each step as well as the resulting calibrated model parameters.

\begin{figure}[H]
\centering
\begin{tikzpicture}[
    font=\footnotesize, 
    box/.style = {rectangle, draw, rounded corners, minimum height=2em, minimum width=9em, align=center, fill=blue!10}, 
    arrow/.style = {thick,->,>=Stealth},
    label/.style = {align=right},
    node distance=1cm 
]

\node (box1) [box] {Step 1: Joint historical and implied VS volatilities term structure calibration};

\node (data1) [left=1cm of box1, label] {Data: Historical covariances \\
and variance swap volatilities};
\draw [arrow] (data1.east) -- (box1.west);

\node (box2) [box, below=0.8cm of box1] {Step 2: VS volatilities term structure fit correction};

\draw [arrow] (box1.south) -- node[right, align=center, xshift=0.2em] {Calibrated $(\sigma, R)$} (box2.north);

\node (data2) [left=1cm of box2, label] {Data: Variance swap volatilities};
\draw [arrow] (data2.east) -- (box2.west);

\node (box3) [box, below=0.8cm of box2] {Step 3: Implied smile calibration};

\draw [arrow] (box2.south) -- node[right, xshift=0.2em] {Calibrated $(g, h)$} (box3.north);

\node (data3) [left=1cm of box3, label] {Data: Call and put options};
\draw [arrow] (data3.east) -- (box3.west);

\node (params3) [below=0.8cm of box3, align=center] {Output: Calibrated $(\sigma, R, g, h, c, x, \tilde{\rho})$};
\draw [arrow] (box3.south) -- node[right, align=center, xshift=0.2em] {Calibrated $(c, x, \tilde{\rho})$} (params3.north);

\end{tikzpicture}
\caption{Overview of the calibration methodology in three iterative steps, including input data and calibrated parameters obtained at each step.}
\label{F:calibration_methodology}
\end{figure}
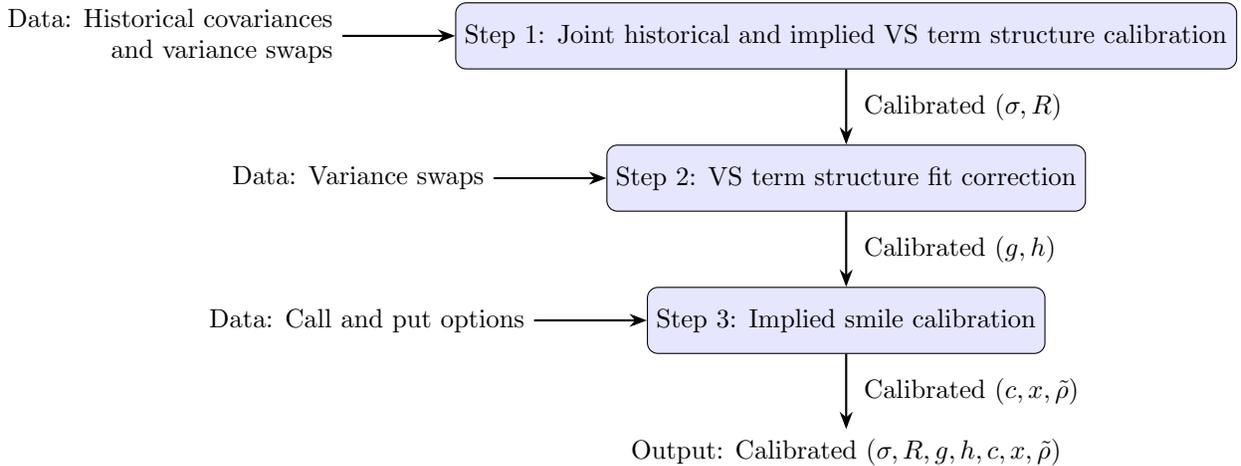

In Figure~\ref{fig:snapshot}, we provide a calibration snapshot that summarizes the results of all three calibration steps   for the case of the German power market as of the $1^{\text{st}}$ of July 2024. For this date, $N = 5$ $L$-$S$-$C$ risk factors with $N_s = 3$, $N_c = 1$, and $M = 3$ stochastic volatility factors were used, which yields $19$\footnote{Our choice of 19 parameters (5 factors) remains conservative compared to  the literature on historical calibration in power markets: \cite{feron2024estimation} consider 5 $S$-factors (20 parameters), \cite{gardini2023heath} use 10 factors with $100$ parameters to fit the realized volatility of $144$ futures from six different markets, and a PCA study in \cite{koekebakker2005forward} supports a similar order of factors.
} (resp.~$11$) calibrated parameters for step 1 (resp.~step 3).  Our model achieves an excellent fit of the historical volatilities and correlations in step 1, a perfect fit of the VS term structures in step 2 and a remarkable fit of the whole implied volatility surface in step 3, see also Figure~\ref{fig:iv_smiles_results}. More detailed specification of model factors and calibrated parameters will be described further in Section \ref{S:calib_res}.

\begin{figure}[H]
    \centering

    \adjustbox{trim={0cm 0cm 0cm 0cm},clip}{%
        \includegraphics[width=0.33\textwidth]{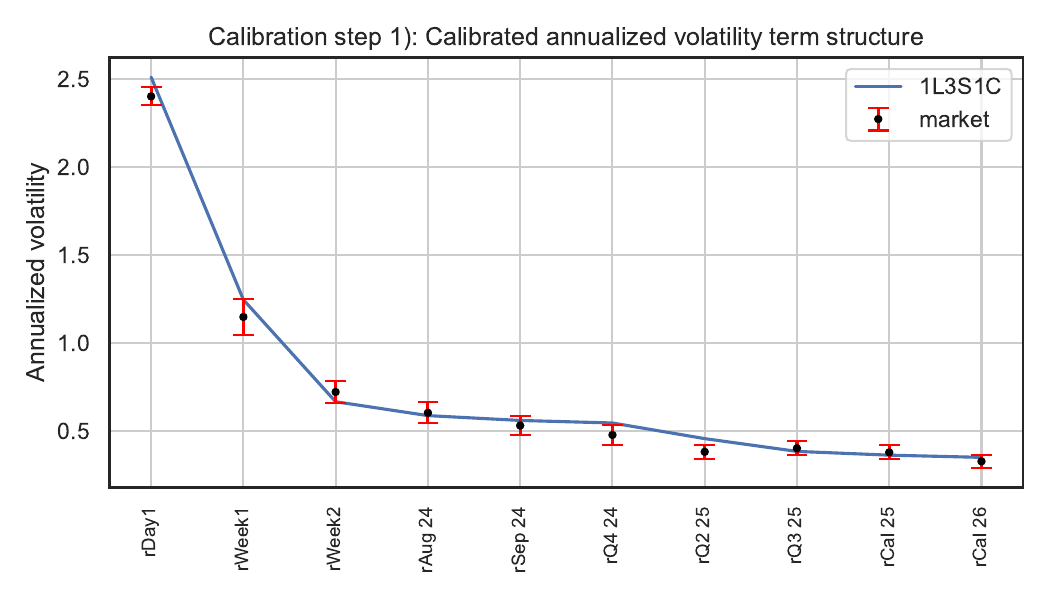}
    }
    \hspace{-0.3cm} 
    \adjustbox{trim={0cm 0cm 0cm 0cm},clip}{%
        \includegraphics[width=0.33\textwidth]{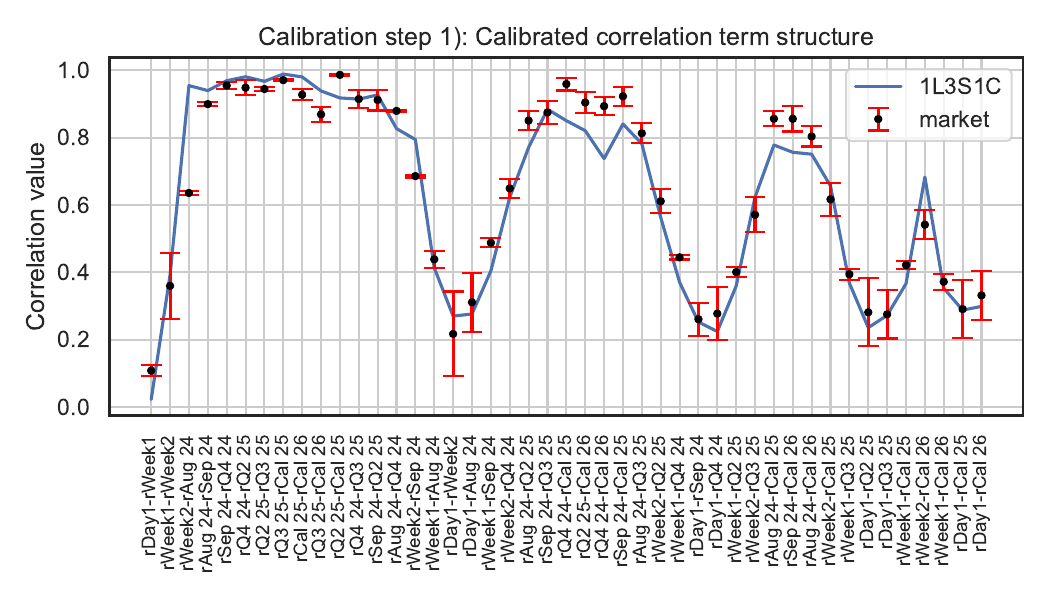}
    }
    \hspace{-0.3cm} 
    \adjustbox{trim={0cm 0cm 0cm 0cm},clip}{%
        \includegraphics[width=0.33\textwidth]{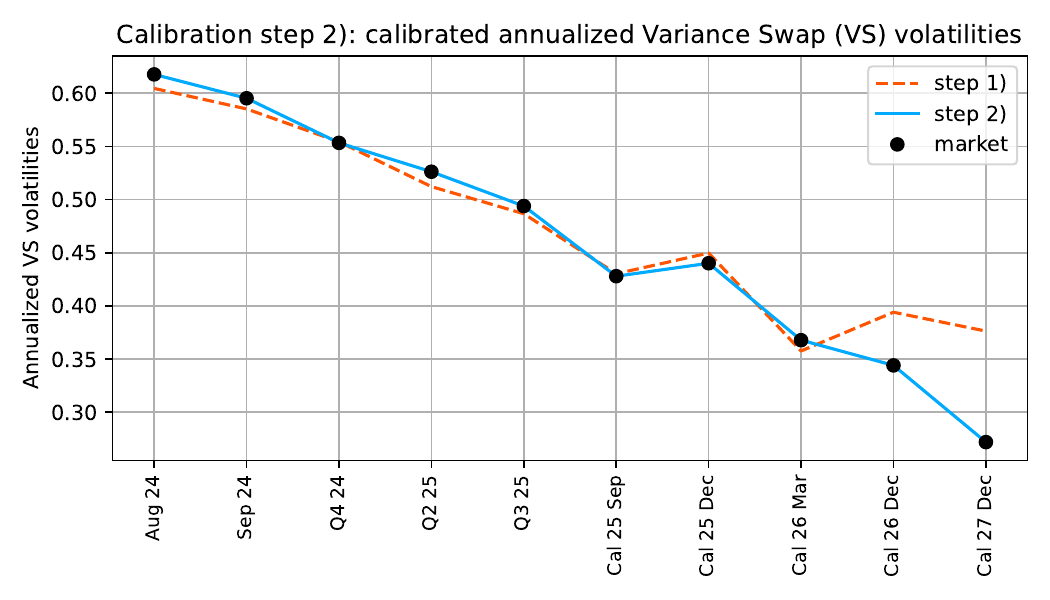}
    }

    \vspace{-0.1cm} 

    \adjustbox{trim={0cm 0cm 0cm 0cm},clip}{%
        \includegraphics[width=0.33\textwidth]{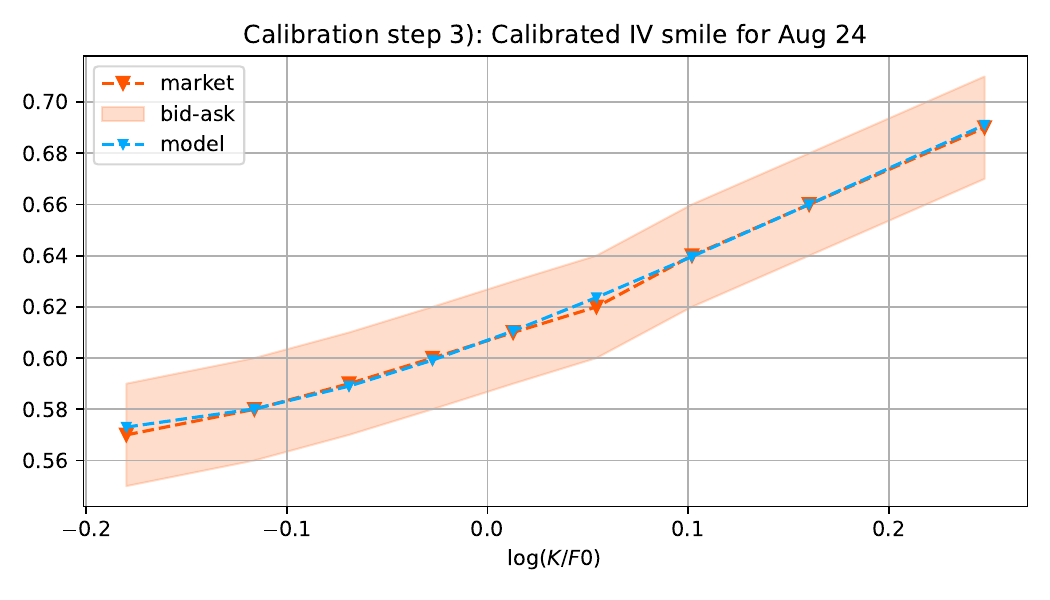}
    }
    \hspace{-0.3cm} 
    \adjustbox{trim={0cm 0cm 0cm 0cm},clip}{%
        \includegraphics[width=0.33\textwidth]{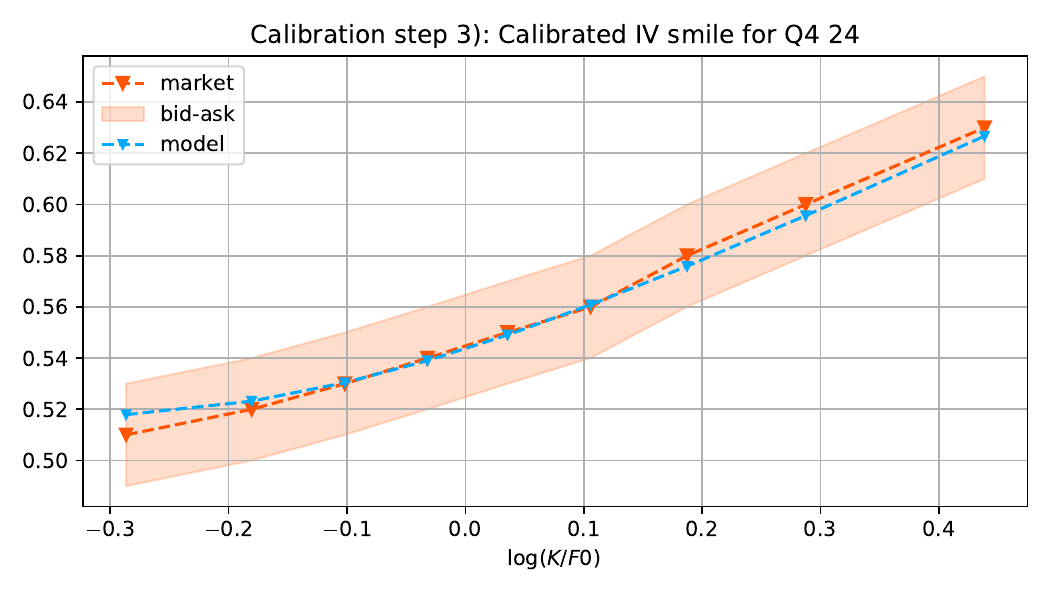}
    }
    \hspace{-0.3cm} 
    \adjustbox{trim={0cm 0cm 0cm 0cm},clip}{%
        \includegraphics[width=0.33\textwidth]{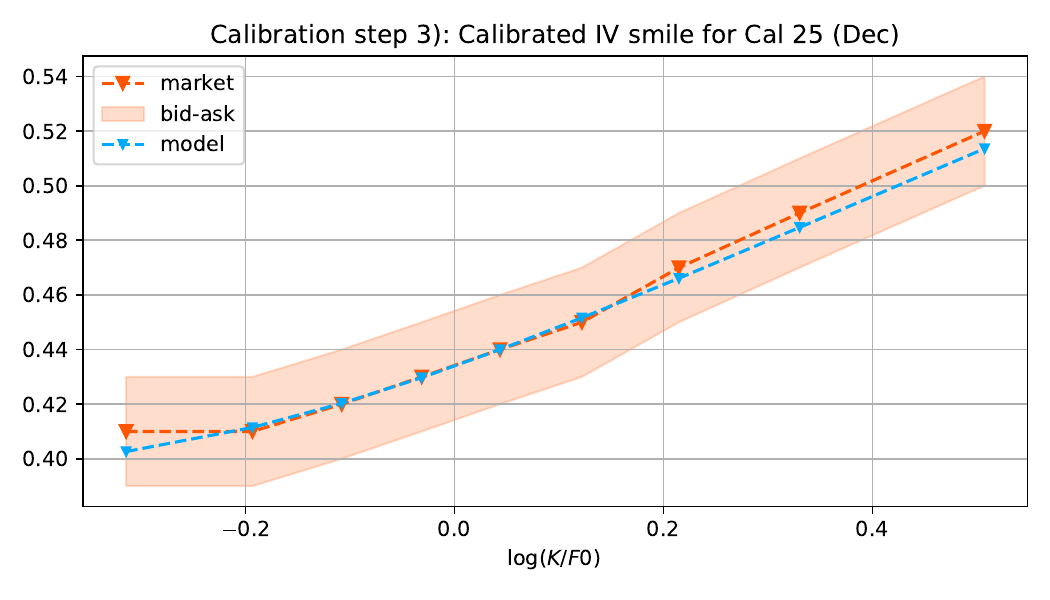}
    }

    \caption{Snapshot of the calibration results on the German power market on the $1^{\text{st}}$ July 2024: fit of the realized volatility term structure (upper left) and correlations (upper middle) of rolling futures in step 1 over one year and a half of historical data; fit of the VS volatility (upper right) term structure in step 1 (orange) and the corrected fit obtained in step 2 (blue); fit of the smile slices in step 3 of the respective futures contracts delivering electricity during August 24 (lower left), the forth Quarter of 2024 (lower middle) and the whole calendar year 2025 (with maturity December 2024; lower right).}
    \label{fig:snapshot}
\end{figure}

\section{Market data} \label{s:market_data}

\subsection{Futures contracts' quotes} \label{ss:typical_futures}

The market data used to calibrate the futures contracts' instantaneous correlation structure are the realized closing market quotes of unitary base-load\footnote{The other main traded profile, although less liquid, is the peak-load delivering electricity during working hours of the week. We chose to focus on the base-load data for illustrating our approach.} futures contracts i.e.~typically the price per MWh for the delivery of electricity every hour of a given delivery period\footnote{A precise specification of the contract can be found at \href{https://www.eex.com/en/markets/power/power-futures}{https://www.eex.com/en/markets/power/power-futures} for the German power market and at \href{https://www.theice.com/products/27996665/Dutch-TTF-Gas-Futures}{https://www.theice.com/products/27996665/Dutch-TTF-Gas-Futures} for futures traded on the TTF gas market.}. We present here several types of liquid futures contracts on power markets, these notations that will be used further in Section \ref{s:implied_calib}.
\begin{itemize}
    \item Day-ahead contracts, denoted by DayX with $X \in \mathbb{N}^{*}$, typically the day-ahead contract (DA1) whose price is set by auction for the delivery of electricity for the following day, with $T_{s}$ set to the first hour of delivery the next day and $T_{e}$ equals $T_{s}$ plus one day.
    \item Week-ends, denoted by WeekX with $X \in \mathbb{N}^{*}$, working days or week-ahead contracts, and also balance of month contracts which quotes the price of delivery of electricity such that, in theses cases, $T_{s}$ is set accordingly depending on the observation date, and $T_e$ equals $T_{s}$ plus, respectively, two days, five days and the number of days until the end of the running month.
    \item Monthly contracts denoted by Mon YY: delivering during the fixed month ``Mon'' of the year ``20YY''. For these contracts, $T_s$ an $T_e$ are equal to the first and the last day of the delivery month.
    \item Quarterly contracts denoted by QX YY: delivery period is the ``X''-th quarter of the year ``20YY'' with X $\in \{1, 2, 3, 4\}$. Here, the first quarter corresponds to the first three months of the year, the second one corresponds to the forth-sixth months, and so on. In this case, $T_s$ is equal to the first date of the first month of the quarter, and $T_e$ is the last day of the last quarter month.
    \item Calendar contracts denoted by Cal YY: contracts delivering during the whole year ``20YY''. $T_s$ and $T_e$ correspond to the first and the last days of the year.
\end{itemize}

Similarly, we use the notation rContract to refer to the rolling future contract with same time to delivery and delivery duration as Contract.

\subsubsection{Stripping optimization to construct rolling futures contracts}\label{S:hist_md}

Recall that, except for the first day-ahead contract, a rolling futures contract with time to delivery $T_{s}$ and delivery duration $T_{e} - T_{s}$ is typically not exchanged in standard markets\footnote{Although at some specific observation dates $t_{0}$, a rolling future with time to delivery $\delta>0$ and delivery duration $\Delta_{T}>0$ may coincide with the absolute futures contract delivering on $[T_{s}, T_{e}]$ if
\begin{equation}
    \begin{cases}
        \delta = T_{s} - t_{0} \\
        \Delta_{T} = T_{e} - T_{s}
    \end{cases}.
\end{equation}
}, and therefore its quote cannot be observed directly.
Yet, it is possible to construct the quotes of such $P_{\mathrm{hist}} \in \mathbb{N}^{*}$ rolling futures contracts from the ones of the absolute futures contracts observed on the market via a stripped forward curve defined as follows.

\begin{definition}[Stripped forward curve] \label{def:stripped_curve}
    Fix an observation date $t_{0} \geq 0$ and consider a family of $P' \in \mathbb{N}^{*}$ absolute futures contracts' quotes 
    $$
    \left( F_{t_{0}}^{\mathrm{mkt}} (T_{s}^{j}, T_{e}^{j})\right)_{j \in \{ 1, \ldots, P' \}},
    $$
    and denote $\bar T := \max_{j \in \{ 1, \ldots, P' \}} T_{e}^{j}$. Then, a \textit{stripped forward curve}, at observation date $t_{0}$, is a smooth\footnote{$f^{\mathrm{mkt}}$ can be uniquely defined when some additional regularizing criterion is imposed.} price curve $\left( f^{\mathrm{mkt}}(t_{0},T) \right)_{T \in [t_{0}, \bar T]}$ such that the absence of superposition arbitrage opportunities is verified, i.e.~such that the equality constraints
    \begin{equation}\label{eq:no_superposition_arbitrage_constraints}
        F_{t_{0}}^{\mathrm{mkt}} (T_{s}^{j}, T_{e}^{j}) = \frac{1}{T_{e}^{j}-T_{s}^{j}} \int_{T_{s}^{j}}^{T_{e}^{j}} f^{\mathrm{mkt}}(t_{0},T) \d T, \quad j \in \{ 1, \ldots, P' \} 
    \end{equation}
    are all satisfied. In particular, the quote of any rolling contract at date $t_{0}$ can be obtained from the stripped forward curve $f^{\mathrm{mkt}}$ by the following formula
    \begin{equation} \label{eq:historical_rolling_future_contract}
        F_{t_{0}}^{\mathrm{mkt}} (t_{0}+T_{s}, t_{0}+T_{e}) := \frac{1}{T_{e}-T_{s}} \int_{t_{0}+T_{s}}^{t_{0}+T_{e}} f^{\mathrm{mkt}}(t_{0},T) \d T,
    \end{equation}
    and we define its realized, or historical, log returns over a period of $\tau_{d}$ days at date $t_{0}$ by
    \begin{equation} \label{eq:historical_rolling_future_contract_returns}
        r_{t_{0}}^{\mathrm{mkt}}(\tau_{d}) := \log \frac{F_{t_{0}}^{\mathrm{mkt}} (t_{0}+T_{s}, t_{0}+T_{e})}{F_{t_{0}-\tau_{d}}^{\mathrm{mkt}} (t_{0}+T_{s}, t_{0}+T_{e})},
    \end{equation}
    where the delivery period is held fixed, so that their respective sample Fourier spectra (resp. auto-correlations) don't display any significant frequency peak (resp. lag) i.e.~absence of weekly effect from working days to week-ends (resp. independent increments assumption) hold valid with such log returns construction, see for example \cite[Figure 4]{cartea2005pricing} or \cite[Figure 3]{gardini2023heath}.
\end{definition}

In order to formulate the stripping optimization problem, we use the same notations as in Definition \ref{def:stripped_curve}, and we set the daily time grids
\begin{align}
    \mathcal{T}_{n}(t_{0}) & := \left\{ t_{0} = T_{0} < T_{1} < \ldots < T_{n} = \bar{T} \right\}, \\
    \mathcal{T}_{n^{j}}^{j} & := \left\{ T_{s}^{j} = T_{0}^{j} < T_{1}^{j} < \ldots < T_{n^{j}}^{j} = T_{e}^{j} \right\}, \quad j \in \{ 1, \ldots, P' \},
\end{align}
where $n \in \mathbb{N}^{*}$ (resp. $n^{j} \in \mathbb{N}^{*}, \; j \in \{ 1, \ldots, P' \}$) denotes the number of days in the time interval $[t_{0}, \bar{T}]$ (resp. in $[T_{s}^{j}, T_{e}^{j}], \; j \in \{ 1, \ldots, P' \}$).

As detailed in Definition \ref{def:stripped_curve}, the stripping optimization problem aims to build a stripped daily forward curve from a family of absolute forwards' quotes $F^{\mathrm{mkt}}(t_{0}) := \left( F_{t_{0}}^{\mathrm{mkt}} (T_{s}^{j}, T_{e}^{j})\right)_{j \in \{ 1, \ldots, P' \}}$ observed on the markets such that 
\begin{itemize}
    \item[(i)] the absence of superposition arbitrage opportunities \eqref{eq:no_superposition_arbitrage_constraints} is satisfied;

    \item[(ii)] a smoothing criterion is applied ensuring the well-posedness of the algorithm: the sum of daily increments of the curve squared is minimized.
\end{itemize}
Then, the daily stripped forward curve at observation date $t_{0}$ is given by $\left( f^{\mathrm{mkt}}(t_{0},T_{i}) \right)_{i \in \{0, \ldots, n-1 \}} := \left( \hat{s}(t_{0}, T_{i}) \right)_{i \in \{0, \ldots, n-1 \}}$, where $\hat{s}$ is defined as the unique minimizer of
\begin{equation} \label{eq:optimization_problem_stripping}
    \min_{s \in A^{n}(F^{\mathrm{mkt}}(t_{0}))} \sum_{i = 1}^{n-1} \left( s_{i} - s_{i-1} \right)^2,
\end{equation}
with $A^{n}(F^{\mathrm{mkt}}(t_{0}))$ denoting the set of admissible curves defined by
\begin{equation}
    A^{n}(F^{\mathrm{mkt}}(t_{0})) := \left\{ \left( s_{l} \right)_{l \in \{0, \ldots, n-1 \}} \in \mathbb{R}^{n} \; \Bigg| \; F_{t_{0}}^{\mathrm{mkt}} (T_{s}^{j}, T_{e}^{j}) = \sum_{l=0}^{n^{j}-1} w_{j,T_{l}} s_{l}, \; j \in \{ 1, \ldots, P' \} \right\},
\end{equation}
and where, for $j \in \{ 1, \ldots, P' \}$, $\left( w_{j,T_{l}} \right)_{l \in \{ 0, \ldots, n^{j}-1 \}} \in [0,1]^{n^{j}}$ denotes the family of daily weights associated to the forward $F_{t_{0}}^{\mathrm{mkt}} (T_{s}^{j}, T_{e}^{j})$ depending on \textit{the profiling of the underlying delivery}, e.g.~$\frac{1}{365.25}$ for a calendar contract in base profile.

The optimization problem \eqref{eq:optimization_problem_stripping} can be reformulated as a Quadratic Program with linear equality constraints in the sense of \cite[Chapter~4, 4.4]{boyd2004convex} and admits a unique solution by strict convexity. {Market data should be without absence of arbitrage to ensure the existence of such solution. Some day-ahead futures contracts may display negative quotes, hence $s_{l} \in \mathbb{R}, \; l \in \{0, \ldots, n-1 \}$ in general, yet it is possible to look for a non-negative stripped curve by adding the constraints $s_{l} \in \mathbb{R}^{+}, \; l \in \{0, \ldots, n-1 \}$ if all the quoting market futures contracts are non-negative.} The interested reader may look at \cite[Chapter 7]{benth2008modeling} for additional insights on stripping including seasonal effects.


\subsubsection{Estimated covariance of rolling futures' log returns from past historical data} 

Given a family of stripped forward curves $\left(\left( f^{\mathrm{mkt}}(t_{h},T_{i}) \right)_{t_{h} \leq T_{i} \leq \bar{T}} \right)_{h \in \left\{ 0, \ldots, H \right\}}$ constructed at various observation dates $\left( t_{h} \right)_{h \in \left\{ 0, \ldots, H \right\}}$ by solving the stripping optimization problem \eqref{eq:optimization_problem_stripping}, we can reconstitute any family of $P_{\mathrm{hist}} \in \mathbb{N}^{*}$ rolling forward contracts $\left( F_{t_{h}}^{\mathrm{mkt}} (t_{h}+T_{s}^{k}, t_{h}+T_{e}^{k}) \right)_{k \in \{ 1, \ldots, P_{\mathrm{hist}}\}}$ satisfying $t_{h}+T_{e}^{k} \leq \bar{T}, \; k \in \{ 1, \ldots, P_{\mathrm{hist}}\}$ for any $h \in \left\{ 0, \ldots, H \right\}$ using \eqref{eq:historical_rolling_future_contract}, and then estimate a family of time series of their respective realized daily log returns $\left( \left( r_{t_{h}}^{\mathrm{mkt}, k} (\tau_{d}) \right)_{h \in \left\{ 0, \ldots, H \right\}} \right)_{k \in \{ 1, \ldots, P_{\mathrm{hist}}\}}$ by setting $\tau_{d} = 1$ day using \eqref{eq:historical_rolling_future_contract_returns}.\\

Without any surprise, the rolling futures' log returns display heavy tails across all deliveries, and particularly for short-term ones. Since the futures' log returns' historical covariances will be used to calibrate a log normal distribution under the KV approximation \eqref{eq:kv_approx} in the first calibration step, i.e.~calibrate $\left( \sigma, R\right)$, with $\sigma$ specified in \eqref{eq:lsc_deterministic_vol_factors}, we first filter out futures contracts' log returns' large values away from three standard deviations (computed on the whole time-series) and setting them equal to three standard deviations, in the same spirit as in \cite{cartea2005pricing}. 

Then, we construct a biased averaged covariance estimator of such daily log returns in order to assign higher weights to newer data points and diminishing weights to older data, and we extract some confidence bounds when weighting differently the past observations. We postpone to Appendix \ref{ss:covariance_estimation} the detailed estimation procedure. 

Finally, we extract the futures contracts' volatilities by taking the square root of the diagonal of the resulting covariance estimator, and their correlations by normalizing their respective covariances by the corresponding volatilities. Figure \ref{F:estimated_correlation_term_structure} displays the resulting realized correlation and volatility term structures on the German power market from January $1^{\text{st}}$ 2023 to July $1^{\text{st}}$ 2024. 


\begin{figure}[H]
    \centering
    \begin{minipage}{0.45\textwidth}
        \centering
        \includegraphics[width=\textwidth, trim=10 0 100 10, clip]{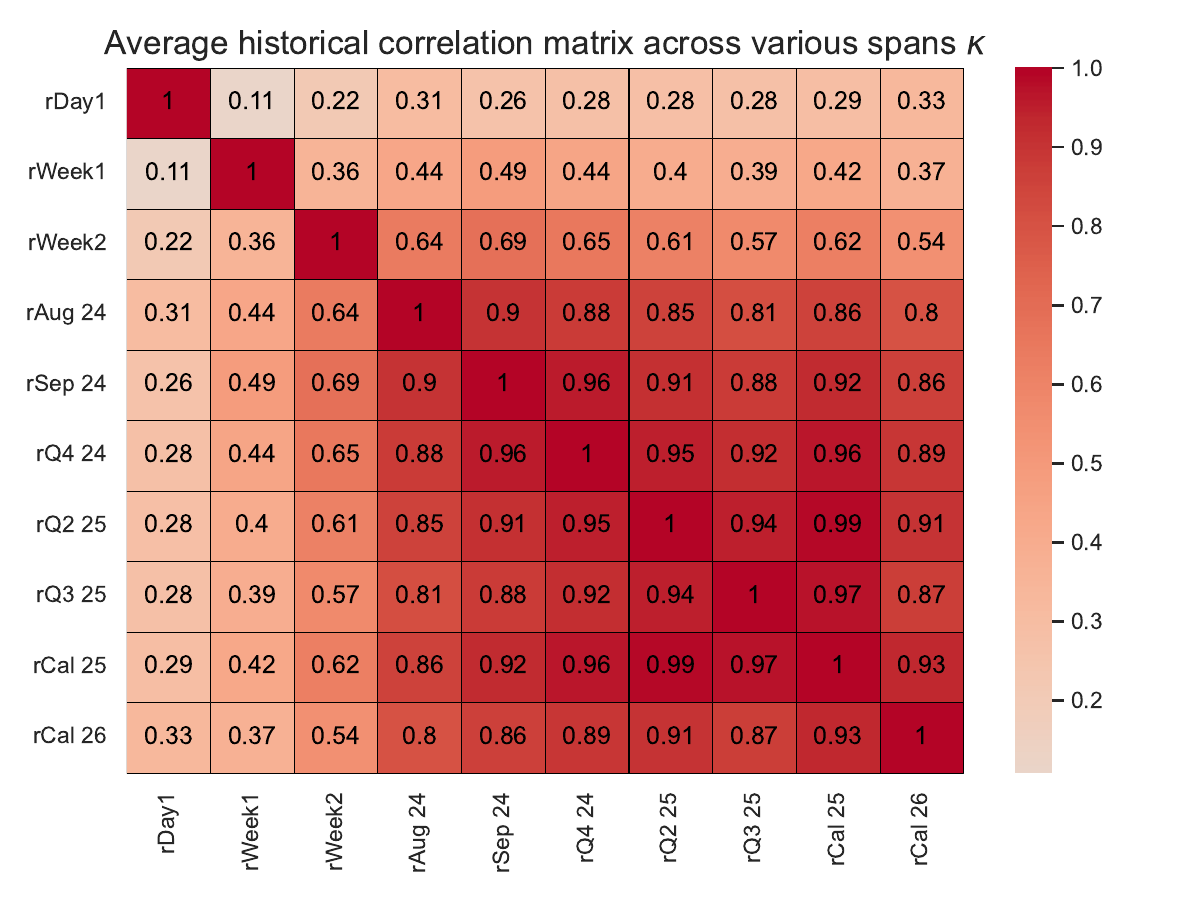} 
    \end{minipage}
    \hspace{0.1cm} 
    \begin{minipage}{0.45\textwidth}
        \centering
        \includegraphics[width=\textwidth, trim=10 0 10 10, clip]{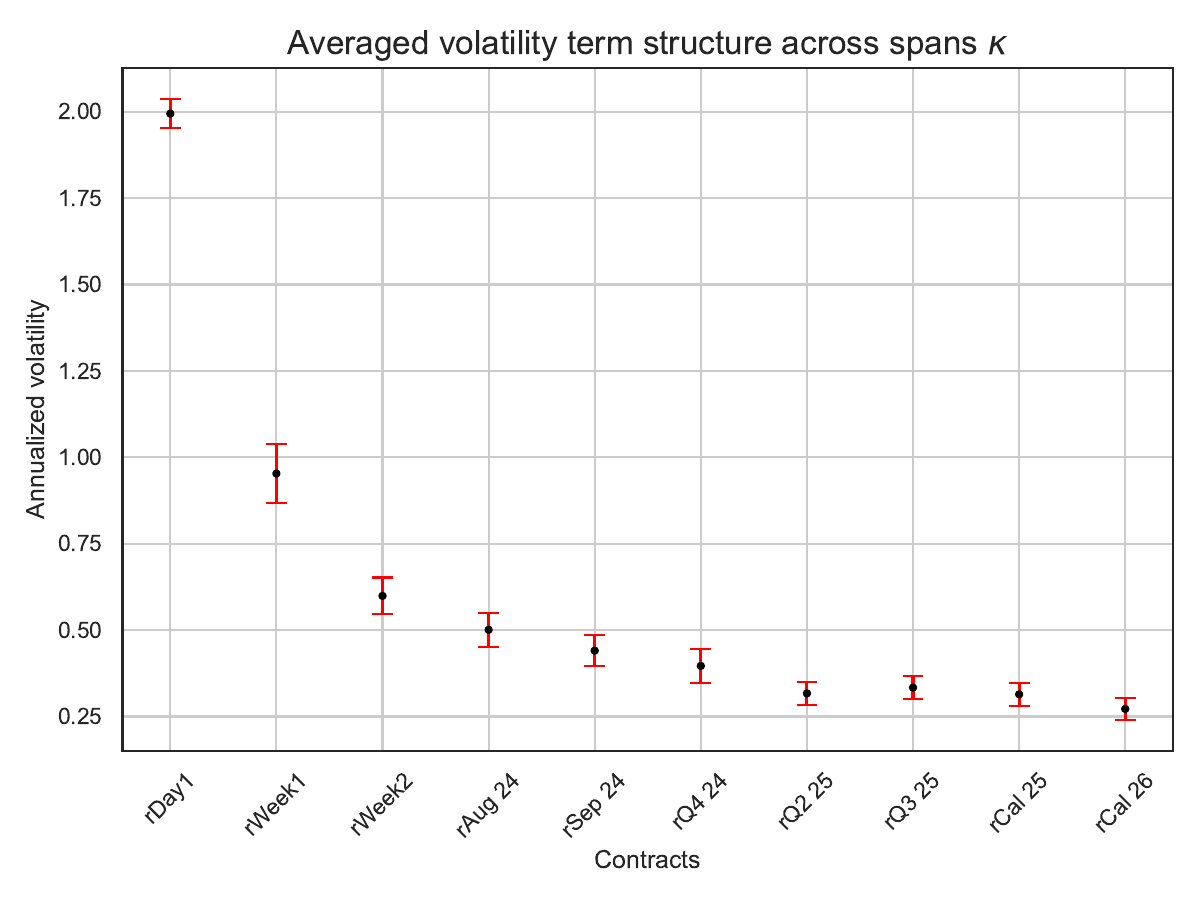} 
    \end{minipage}

    \caption{Left: averaged historical correlation matrix of rolling futures contracts' daily log returns on the German market; right: averaged historical volatility term structure of rolling futures contracts' daily log returns. The covariance estimator and confidence intervals are derived by averaging estimated covariances over a family of span parameters $\kappa$ going from $30$ to $365$ days to put more weights on recent observations, see Appendix \ref{ss:covariance_estimation}.}
    \label{F:estimated_correlation_term_structure}
\end{figure}

\subsection{Options data} \label{ss:option_market_description}

\subsubsection{Typical options quoted in power markets}

We are interested in calibrating a diverse set of volatility smiles corresponding to different underlying futures contracts. The most liquid options are typically written on three types of underlying futures contracts: the first few months, quarters and calendar contracts. These liquid options are linear combinations of vanilla options like call spreads, put spreads, butterflies or fences. The descriptions of vanilla options can be found  on the ice website (ICE).\footnote{https://www.ice.com/products/65898946/German-Power-Financial-Base-Options}

Vanilla options on monthly or quaterly futures have a single maturity: they expire several business day before the first day of the delivery period. However, vanilla options on calendar futures can have different maturities. The December maturity is generally the most liquid. The other maturities such as September, June and March --- referred to as \textit{early expiries} --- are also quoted. To avoid confusion, the maturity month (of the year preceding the delivery year) is specified in the option name. For example, ``Cal 26 Mar'' corresponds to the smile with maturity at the end of March 2025. The precise maturities can be found on the site of the European Energy Exchange (EEX)\footnote{https://www.eex.com/en/market-data/power/equity-styled-options}.

For our numerical experiments, we use the following ten implied volatility smiles observed on the on the $1^{st}$ July 2024 from the German power option market corresponding to eight different underlying futures contracts: Aug 24, Sep 24, Q4 24, Q2 25, Q3 25, Cal 25 Sep, Cal 25 Dec, Cal 26 Mar, Cal 26 Dec, Cal 27 Dec. Note that in energy markets, a large part of the volume is traded over the counter (OTC), unlike in equity markets, where options are typically traded via exchanges in a centralized manner. Consequently, there exist many possible sources of option price data. For our study, we use implied volatility data provided by the broker ICAP\footnote{https://icap.com/energy-commodities}.

\subsubsection{Variance swap volatilities deduced from option prices}\label{sect:vs_quote_extr}

Variance Swap (VS) volatility can be interpreted as an overall smile level and will be used for the first two steps of calibration, recall Figure \ref{F:calibration_methodology}, and can be deduced from VS contracts, recall Section \ref{ss:model_vs_swap_and_vol}. As already mentioned, such contracts are not quoted on power markets. However, we show how variance swap market prices can be extracted from vanilla option prices.

Indeed, these prices can be expressed as prices of European options with logarithmic payoff, further referred to as \textit{log-contracts}, that can be extracted from the implied volatility smile by the \cite{carr1998towards} formula:
\begin{equation}\label{eq:vs_carr_formula}
    \mathrm{VS}_T = -2\E\log \frac{{F}_T}{F_0} = 2\int_0^{F_0}\dfrac{P(T, K)}{K^2}\,\d K + 2\int_{F_0}^\infty\dfrac{C(T, K)}{K^2}\,\d K,
\end{equation}
where $P(T, K)$ and $C(T, K)$ denote respectively the prices of put and call vanilla options.

\paragraph{Notations.} The calibration will be performed on call and put options written on the futures contracts $F^{i} := F_{.}(T_s^i,\, T_e^i)$ for $i \in \{1, \ldots, P_{\mathrm{imp}}\}$, where the $i$-th futures contract is identified with its delivery period $[T_s^i,\, T_e^i]$. For $i \in \{1, \ldots, P_{\mathrm{imp}} \}$, we denote by $N_i$ the number of (sorted) maturities $(T_{j}^i)_{j \in \{1,...,N_i\}}$ i.e.~the number of smiles, associated to the underlying $F^i$.  The implied market data contains a family of European call and put option prices
$$
\left\{\mathrm{Call}^{\mathrm{mkt}, i}(T_{j}^i, K)\right\}_{K \in \mathcal{K}_j^i}, \quad \left\{\mathrm{Put}^{\mathrm{mkt}, i}(T_{j}^i, K)\right\}_{K \in \mathcal{K}_j^i} \quad j \in \{ 1, \ldots, N_i \}, \quad i \in \{1, \ldots, P_{\mathrm{imp}}\},
$$
where {$P_{\mathrm{imp}} \in \mathbb{N}^{*}$ denotes the number of underlyings,} $\mathcal{K}_j^i$ denotes the {set of} strikes corresponding to  maturity $T_j^i$ , and $\mathrm{Call}^{\mathrm{mkt}, i}\bigl(T_{j}^i, K)$ (resp. $\mathrm{Put}^{\mathrm{mkt}, i}\bigl(T_{j}^i, K)$) denotes the price of the European call option with maturity $T_{j}^i$, strike $K$, and underlying $F^i$. We denote by $\sigma_{\mathrm{BS}}^{\mathrm{mkt}, i}(T_{j}^i, K)$ the corresponding Black implied volatility. 

The market price of the variance swap with underlying $F^i$ and maturity $T_j^i$ is deduced from the call and put options by \eqref{eq:vs_carr_formula} as follows 
\begin{equation}\label{eq:vs_market}
    \mathrm{VS}_{T_j^i}^{\mathrm{mkt}, i} := 2\int_0^{F_0^i}\dfrac{\mathrm{Put}^{\mathrm{mkt}, i}(T_j^i, K)}{K^2}\,\d K + 2\int_{F_0^i}^\infty\dfrac{\mathrm{Call}^{\mathrm{mkt}, i}(T_j^i, K)}{K^2}\,\d K,
\end{equation}
where $F_0^i$ stands for the initial price of the $i$-th underlying futures contract. {Then the VS volatility is extracted by
\begin{equation} \label{eq:extract_vs_vol_from_vs_swap}
    \sigma_{\mathrm{VS}, T_j^i}^{\mathrm{mkt}, i} = \sqrt{\dfrac{1}{T} \mathrm{VS}_{T_j^i}^{\mathrm{mkt}, i}}.
\end{equation}}

Typically, the number of strikes per smile (no more than $9$) present in market data is not enough to apply the discretized formula \eqref{eq:vs_market} directly. Thus, a consistent smiles arbitrage-free extrapolation model is needed. An example of such a model for surface parametrization is given by the SSVI parametrization of \cite{GatheralSSVI}. However, in our case, multiple surfaces are present at the same time, and a more general model is needed. Thus, we propose a multi-contract SSVI parametrization of the implied volatility surfaces for all contracts which allows us to interpolate and extrapolate the smiles and compute numerically the integrals in \eqref{eq:vs_market}.

\paragraph{Multi-contract SSVI parametrization.}

Initially introduced by \cite{GatheralSSVI},  the Surface SVI (SSVI) parametrization specifies the total implied variance $w(T, k) = \sigma_{\mathrm{BS}}^2(T, k) T$, with corresponding time to maturity $T$ and log-moneyness $k := \log\frac{K}{F_0}$. Namely, the implied total variance of the $i$-th futures contract is given by
\begin{equation} \label{eq:implied_variance_SSVI_parametrization}
    w_i(T, k) = \dfrac{\theta_{T, i}}{2}\left(1 + \rho_{i}\phi(\theta_{T, i})k + \sqrt{(\phi(\theta_{T, i})k + \rho_{i})^2 + (1 - \rho_{i}^2)} \right),
\end{equation}
where
\begin{itemize}
    \item $\theta_{T, i} := \sigma_{\mathrm{BS}, i}^2(T, 0) T$ is the at-the-money (ATM) total variance of the underlying $F^i$; 
    \item $\rho_{i} \in [-1,\,1]$ is a skew parameter interpreted as a spot-vol correlation for the contract $F^i$;
    \item $\phi$ is a parametric function satisfying a list of conditions to guarantee the absence of static arbitrage.
\end{itemize}

Due to the presence of multiple surfaces to be parametrized together, we specify individual correlation parameters $\rho_i \in [-1, \, 1]$ for each contract $F^i$, and a common function $\phi$ since {we assume the same nature of volatility across all the underlying futures contracts. This choice appears consistent with observed market data and allows us to simultaneously parametrize all the volatility surfaces using a small number of parameters.}
We set
$$
\phi(\theta) := \dfrac{\eta}{\theta^\gamma(1 + \theta)^{1 - \gamma}}, \quad \gamma \in (0,\, 0.5],
$$ 
and 
$$
\eta \left( 1 + \max\limits_{i \in \{1, \ldots, P_{\mathrm{imp}}\}}|\rho_i| \right) \leq 2,
$$
which is consistent with the empirically-observed term structure of the volatility skew and such that \eqref{eq:implied_variance_SSVI_parametrization} generates {static-}arbitrage-free surfaces for the respective futures contracts \cite[Remark 4.4]{GatheralSSVI}.

A calibrated multi-contract SSVI model allows itself for an efficient smile interpolation and extrapolation for a fixed maturity $T$ as well as for a surface extrapolation for the contracts being in the calibration set. However, in contrast to the HJM stochastic volatility model \eqref{eq:HJM_def}, it cannot generate a volatility surface for a futures contract $F^{\tilde i}$ not present in the calibration set unless its ATM volatility $\sigma_{\mathrm{BS}, \tilde i}(T, 0)$ and correlation parameter $\rho_{\tilde i}$ are provided. Thus, it can interpolate individual volatility surfaces, but not the volatility hypercube.
That is why we limit our use of this parametrization to the evaluation of the implied variance swap prices \eqref{eq:vs_carr_formula}. In addition, the calibrated values $(\rho_i)_{i \in \{1, \ldots, P_{\mathrm{imp}}\}}$ can be further used to obtain an initial guess for the spot-vol correlations in the smile calibration procedure.

The results of the multi-contract SSVI calibration to the market data are shown in Figure~\ref{fig:SSVI_results}. We show a $5\%$ bis-ask spread as well to give an idea of the parametrization quality. The value of $5\%$ spread is consistent with actual market liquidity and will be further used as a benchmark for smile calibration.

\begin{figure}[H]
\begin{center}
    \includegraphics[width=1\linewidth]{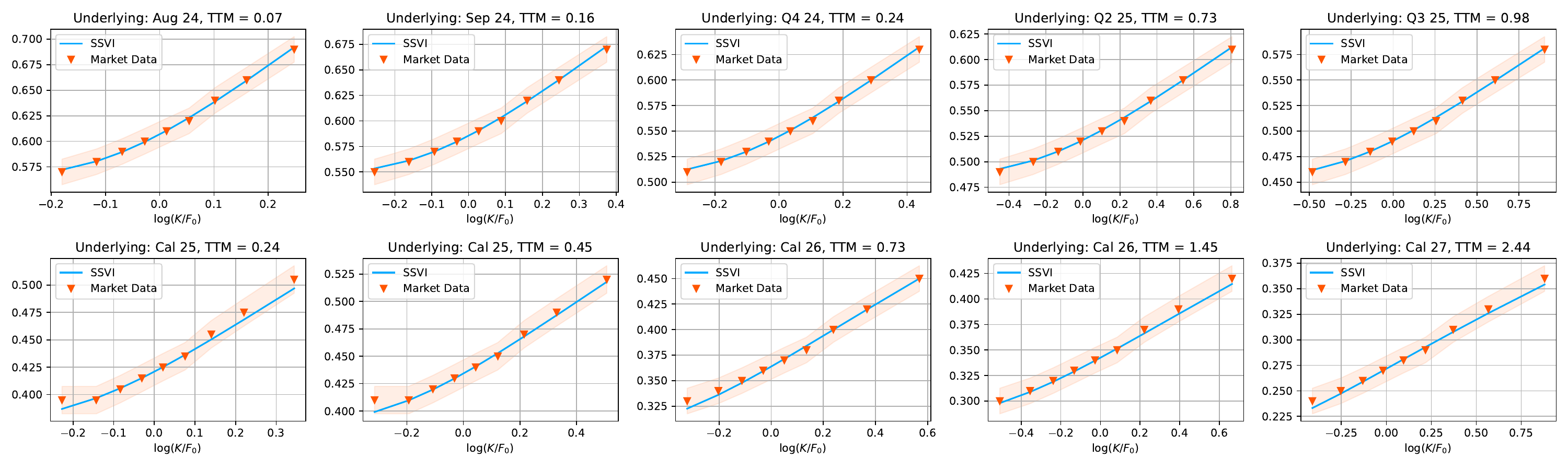}
    \caption{Example of the multi-factor SSVI calibration to the market data as of July $1^{st}$, 2024 on the German power market. We mention in the title of each plot the underlying futures contract and the Time To Maturity (TTM) of the smile in years.}
    \label{fig:SSVI_results}
\end{center}
\end{figure}

\section{Calibration methodology for historical covariances and variance swap volatilities}\label{s:joint_cov_vs_calib}

For such calibration step, we set
\begin{equation}
    V_t \equiv 1, \quad h(t) \equiv 1, \quad g(T) \equiv 1, \quad 0 \leq t \leq T \leq \bar{T},
\end{equation}
in \eqref{eq:HJM_def}.  Our aim is to calibrate the $N$ risk factors' volatility functions $\sigma$ as well as their correlation matrix $R$ in order to match
\begin{itemize}
    \item the realized daily log returns' covariances estimated from \eqref{eq:historical_covariance_matrix} for $P_{\mathrm{hist}}$ different \textit{rolling} futures contracts stripped from market quoting \textit{absolute} futures contracts by non-arbitrage arguments using \eqref{eq:historical_rolling_future_contract},
    \item the VS volatility term structure estimated from market data using the Carr-Madan formula \eqref{eq:vs_carr_formula} for the respective deliveries and maturities associated to $P_{\mathrm{smiles}}$ implied volatility smiles corresponding to $P_{\mathrm{imp}}$ different underlying futures contracts.
\end{itemize}
 First, we derive explicit formulas of the normalized integrated covariance of rolling futures contracts' log returns and of the VS volatility when the infinitesimal forward rate's dynamics is given by the $L$-$S$-$C$ specification of the form \eqref{eq:lsc_deterministic_vol_factors}.

\subsection{Explicit model covariance and VS volatility term structure} \label{ss:model_covariance_vs_volatility}

Let $H \in \mathbb{N}^{*}$ denote the number of days considered for the historical past horizon and $\tau_{d}>0$ a period in days. In this section, we derive explicit formulas for the covariances of the rolling futures contracts' log-returns as well as their VS volatilities under the KV approximation \eqref{eq:kv_approx}.

\paragraph{Stationarity of rolling futures' volatility.} To start with, by leveraging the stationary property of the $L$-$S$-$C$ volatility functions \eqref{eq:l_shape}--\eqref{eq:s_shape}--\eqref{eq:c_shape}, that is
\begin{equation}
    \sigma(t,T) = \sigma(T-t), \quad 0 \leq t \leq T,
\end{equation}
see for example \cite{Andersen2010} for additional insights on this property, we readily obtain, by a change of variable, the following stationary property of the volatility \eqref{eq:futures_contracts_volatility} of the rolling forward contract such that
\begin{equation} \label{eq:stationarity_vol_rolling}
    \Sigma_{t+u}(u+T_{s}, u+T_{e}) = \Sigma_{t}(T_{s}, T_{e}), \quad t \geq 0, \quad u \in \mathbb{R}.
\end{equation}

Following the definition of the rolling forward contract 
\eqref{eq:rolling_forward_model_definition} delivering on $[T_{s}, T_{e}]$, we define its {log returns over a period of $\tau_d$-days} under the KV approximation by
\begin{align}
    r_{t_{h}}(\tau_{d}) := \log \frac{\widetilde{F}_{t_{h}} (t_{h}+T_{s}, t_{h}+T_{e})}{\widetilde{F}_{t_{h}-\tau_{d}} (t_{h}+T_{s}, t_{h}+T_{e})} = & - \frac{1}{2} \int_{t_{h}-\tau_{d}}^{t_{h}} \Sigma_{t}(t_{h}+T_{s}, t_{h}+T_{e})^\top R \Sigma_{t}(t_{h}+T_{s}, t_{h}+T_{e}) \d t \\
    & + \int_{t_{h}-\tau_{d}}^{t_{h}} \Sigma_{t}(t_{h}+T_{s}, t_{h}+T_{e})^\top \d W_{t}, \quad h \in \left\{ 0, \ldots, H-\tau_{d} \right\}.
\end{align}

\begin{lemma} \label{L:stationary_covariance}
   Fix two rolling futures contracts with respective time to deliveries $T_{s}^{i}, \; T_{s}^{j}$ and delivery durations $T_{e}^{i}-T_{s}^{i}, \; T_{e}^{j}-T_{s}^{j}$. Define their \textit{normalized} integrated covariance of their respective {log returns over a period of $\tau_d$-days} $r^{i}(\tau_{d}), \; r^{j}(\tau_{d})$ at observation date $t_{h}, \; h \in \left\{ 0, \ldots, N \right\}$ by
    \begin{equation}
        \Cov{r_{t_{h}}^{i}(\tau_{d})}{r_{t_{h}}^{j}(\tau_{d})} := \frac{1}{\tau_{d}} \int_{t_{h}-\tau_{d}}^{t_{h}} d \langle r_{t_{h}}^{i}(\tau_{d}), r_{t_{h}}^{j}(\tau_{d}) \rangle.
    \end{equation}
    Then,   the covariance is explicitly given by 
    \begin{equation} \label{eq:model_stationary_covariance}
        \Cov{r^{i}(\tau_{d})}{r^{j}(\tau_{d})} = \frac{1}{\tau_{d}} \int_{0}^{\tau_{d}} \Sigma_{t}(\tau_{d}+T_{s}^{i}, \tau_{d}+T_{e}^{i})^\top R \Sigma_{t}(\tau_{d}+T_{s}^{j}, \tau_{d}+T_{e}^{j}) \d t,
    \end{equation}
     and is independent of the observation date $t_{h}$.
\end{lemma}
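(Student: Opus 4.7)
The plan is to compute the quadratic covariation of the two log returns directly from the Kemna--Vorst dynamics \eqref{eq:KV_def} specialised to the present setting where $V_t \equiv h(t) \equiv g(T) \equiv 1$, and then collapse the $t_h$-dependence via a change of variable that exploits the stationarity of the $L$-$S$-$C$ volatility shapes.

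First I would write $r^{i}_{t_h}(\tau_d)$ as the sum of a bounded-variation drift and a stochastic integral against $W$, as already displayed just above the lemma. Since the drift has finite variation it contributes nothing to $\langle r^{i}, r^{j}\rangle$, and since the components of $W$ are correlated via $R$, the covariation differential on $[t_h-\tau_d, t_h]$ reads $\Sigma_t(t_h+T_s^{i}, t_h+T_e^{i})^\top R\, \Sigma_t(t_h+T_s^{j}, t_h+T_e^{j})\, \d t$. Integrating and dividing by $\tau_d$ yields an explicit expression for $\Cov{r^{i}_{t_h}(\tau_d)}{r^{j}_{t_h}(\tau_d)}$ which at this stage still depends on $t_h$.

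Next I would invoke the rolling-volatility stationarity \eqref{eq:stationarity_vol_rolling}, which is itself a direct consequence of the time-homogeneity $\sigma(t,T) = \sigma(T-t)$ of the shapes \eqref{eq:l_shape}--\eqref{eq:c_shape} transferred to $\Sigma$ via the averaging definition \eqref{eq:futures_contracts_volatility}. Performing the substitution $s = t - (t_h - \tau_d)$ in the integral and applying \eqref{eq:stationarity_vol_rolling} with shift $u = t_h - \tau_d$ rewrites $\Sigma_{s+t_h-\tau_d}(t_h + T_s, t_h + T_e)$ as $\Sigma_s(\tau_d + T_s, \tau_d + T_e)$ for both legs. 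The integration range becomes $[0, \tau_d]$ and the observation date drops out altogether, delivering \eqref{eq:model_stationary_covariance}.

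The only structural ingredient is the stationarity property \eqref{eq:stationarity_vol_rolling}; there is no analytical obstacle to overcome and I expect no subtle step. The real point of the lemma is conceptual rather than technical: since the resulting covariance depends only on the two delivery windows and on the lag $\tau_d$, and not on the observation date $t_h$, it legitimises averaging the daily sample covariance estimators of rolling returns across observation dates when constructing the historical calibration target used in Section~\ref{s:joint_cov_vs_calib}.
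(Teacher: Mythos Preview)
Your proposal is correct and follows essentially the same route as the paper's proof: compute the quadratic covariation from the KV dynamics, perform the change of variable $t\mapsto t+t_h-\tau_d$, and apply the rolling-volatility stationarity \eqref{eq:stationarity_vol_rolling} with shift $u=t_h-\tau_d$ to eliminate the dependence on $t_h$. The paper's write-up is terser but the logical content is identical.
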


\begin{proof}
    Straightforward calculus yields
    \begin{align}
        \Cov{r_{t_{h}}^{i}(\tau_{d})}{r_{t_{h}}^{j}(\tau_{d})} & = \frac{1}{\tau_{d}} \int_{t_{h}-\tau_{d}}^{t_{h}} \Sigma_{t}(t_{h}+T_{s}^{i}, t_{h}+T_{e}^{i})^\top R \Sigma_{t}(t_{h}+T_{s}^{j}, t_{h}+T_{e}^{j}) \d t \\
        & = \frac{1}{\tau_{d}} \int_{0}^{\tau_{d}} \Sigma_{t+t_{h}-\tau_{d}}(t_{h}+T_{s}^{i}, t_{h}+T_{e}^{i})^\top R \Sigma_{t+t_{h}-\tau_{d}}(t_{h}+T_{s}^{j}, t_{h}+T_{e}^{j}) \d t \\
        & = \frac{1}{\tau_{d}} \int_{0}^{\tau_{d}} \Sigma_{t}(\tau_{d}+T_{s}^{i}, \tau_{d}+T_{e}^{i})^\top R \Sigma_{t}(\tau_{d}+T_{s}^{j}, \tau_{d}+T_{e}^{j}) \d t,
    \end{align}
    where we used respectively the change of variable $t \mapsto t+t_{h}-\tau_{d}$ and the \textit{stationary property of the volatility function $\Sigma$} \eqref{eq:stationarity_vol_rolling} to get the second and third equalities.
\end{proof}

\paragraph{Separability of rolling futures' volatility.} Furthermore, notice that the respective $L$, $S$ and $C$ volatility functions \eqref{eq:l_shape}, \eqref{eq:s_shape} and \eqref{eq:c_shape} enjoy the following separable property such that for any $n \in \left\{ 1, \ldots, N \right\}$
\begin{equation} \label{eq:def_factors_separability}
    \sigma_{n}^{Y}(t,T) = \sigma_{Y,n} \sum_{k=1}^{K} \sigma_{n}^{X_{k}}(t,T) = \sigma_{Y,n} \sum_{k=1}^{K} a_{X_{k}, n}^{\tau}(t) b_{X_{k}, n}^{\tau}(T), \quad 0 \leq t \leq T, \quad Y \in \left\{ L, S, C \right\},
\end{equation}
with $\sigma_{Y, n}>0$, $\left( a_{X_{k}, n}^{\tau} \right)_{k \in \{1, \ldots, K\}}$ and $\left( b_{X_{k}, n}^{\tau} \right)_{k \in \{1, \ldots, K\}}$ some explicit functions depending only on parameters $\tau$, and where $K \in \mathbb{N}^{*}$ denotes the total number of \textit{separable state variables} $\left( X_{k} \right)_{k \in \{ 1, \ldots, K\}}$. Notice that both $L$ and $S$-factors are readily identified to their unique state variable respectively, while, for $j \in \{ 1, \ldots, N_{c} \}$ we can decompose the $j^{th}$ $C$-factor as the sum of two separable state variables $C_{1}$ and $C_{2}$ such that
\begin{equation}
    \sigma_{j}^{C}(t,T) := \sigma_{C,j} \frac{T-t}{\tau_{C,j}} e^{-\frac{T-t}{\tau_{C,j}}} = \sigma_{C,j} \left( \sigma_{j}^{C_{1}}(t,T) + \sigma_{j}^{C_{2}}(t,T) \right), \quad 0 \leq t \leq T,
\end{equation}
where we define the volatility functions associated respectively to $C_{1}$ and $C_{2}$ by
\begin{align*}
    \sigma_{j}^{C_{1}}(t,T) & := e^{\frac{t}{\tau_{C,j}}} \times \frac{T}{\tau_{C,j}}e^{-\frac{T}{\tau_{C,j}}}, \quad 0 \leq t \leq T,\\
    \sigma_{j}^{C_{2}}(t,T) & := - \frac{t}{\tau_{C,j}}e^{\frac{t}{\tau_{C,j}}} \times e^{-\frac{T}{\tau_{C,j}}}, \quad 0 \leq t \leq T.
\end{align*}
We detail in Table \ref{tab:explicit_a_b_beta_functions} the explicit functions $a_{X}^{\tau}$ and $b_{X}^{\tau}$ for the four distinct state variables $X \in \left\{ L, S, C_{1}, C_{2} \right\}$.

\begin{table}[H]
\centering
\renewcommand{\arraystretch}{1.5}  
\setlength{\tabcolsep}{12pt}       
\caption{Explicit functions $a_{X}^{\tau}$, $b_{X}^{\tau}$ from \eqref{eq:def_factors_separability} and $\beta_{X}^{\tau}$ from \eqref{eq:model_futures_covariance} defined in\eqref{eq:def_beta} for the four state variables $X \in \left\{ L, S, C_{1}, C_{2}\right\}$ from the $L$-$S$-$C$ specification.}
\label{tab:explicit_a_b_beta_functions}
\begin{tabular}{c|c|c|c|c|}
\cline{2-5}  
& $L$ & $S$ & $C_{1}$ & $C_{2}$ \\ \hline
\multicolumn{1}{|c|}{$t \mapsto a_{X}^{\tau}(t)$} & $1$ & $e^{\frac{t}{\tau_{S}}}$ & $e^{\frac{t}{\tau_{C}}}$ & $-\frac{t}{\tau_{C}}e^{\frac{t}{\tau_{C}}}$ \\ \hline
\multicolumn{1}{|c|}{$T \mapsto b_{X}^{\tau}(T)$} & $1$ & $e^{-\frac{T}{\tau_{S}}}$ & $\frac{T}{\tau_{C}}e^{-\frac{T}{\tau_{C}}}$ & $e^{-\frac{T}{\tau_{C}}}$ \\ \hline
\multicolumn{1}{|c|}{$\beta_{X}^{\tau}(T_{e}, T_{s})$} & $1$ & $\frac{\tau_{S}}{T_e - T_s} \left( e^{\frac{-T_s}{\tau_{S}}} - e^{\frac{-T_e}{\tau_{S}}} \right)$ & $\frac{T_s + \tau_{C}}{T_e - T_s}e^{\frac{-T_s}{\tau_{C}}} - \frac{T_e + \tau_{C}}{T_e - T_s}e^{\frac{-T_e}{\tau_{C}}}$ & $\frac{\tau_{C}}{T_e - T_s} \left( e^{\frac{-T_s}{\tau_{C}}} - e^{\frac{-T_e}{\tau_{C}}} \right)$ \\ \hline
\end{tabular}
\end{table}

We are now ready to derive the explicit formulas of the rolling futures contracts log returns and VS volatilities which will play a key role in the formulation of the first step calibration problem.
\begin{proposition}
    Using the same notations as in Lemma \ref{L:stationary_covariance}, the normalized stationary covariance \eqref{eq:model_stationary_covariance} is explicitly given by
    \begin{align} \label{eq:model_futures_covariance}
        \Cov{r^{i}(\tau_{d})}{r^{j}(\tau_{d})} = \sum_{p,k=1}^{N+N_{c}} x_{p,k}^{\sigma, R} \beta_{X_{p}}^{\tau}(T_{e}^{i}, T_{s}^{i}) \beta_{X_{k}}^{\tau}(T_{e}^{j}, T_{s}^{j}) \frac{1}{\tau_{d}} \int_{0}^{\tau_{d}} a_{X_{p}}^{\tau}(u) a_{X_{k}}^{\tau}(u) \d u,
    \end{align}
    where, for $X_{p}, X_{k} \in \left\{ L, S, C_{1}, C_{2} \right\}$, we set the variables
    \begin{equation} \label{eq:def_quadratic_variables}
        x_{p,k}(\sigma, R) := \sigma_{X_{p}}\sigma_{X_{k}} R_{p,k}, \quad p, k \in \left\{ 1, \ldots, N+N_{c} \right\},
    \end{equation}
    and
    \begin{equation} \label{eq:def_beta}
        \beta_{X_{p}}^{\tau}(T_{e}^{i}, T_{s}^{i}) := \frac{1}{T_{e}^{i} - T_{s}^{i}} \int_{T_{s}^{i}}^{T_{e}^{i}} b_{X_{p}}^{\tau}(T) \d T, \quad p \in \left\{ 1, \ldots, N+N_{c} \right\},
    \end{equation}
    with $(\tau, \sigma, R) \in \left( \mathbb{R}_{+}^{*} \right)^{N} \times \left( \mathbb{R}_{+}^{*} \right)^{N} \times \mathbb{S}_{++}^{N}$.

    Similarly, the variance swap variance level \eqref{eq:vs_vol_model} is explicitly given by 
    \begin{align} \label{eq:model_futures_variance_swap}
        \left( \sigma_{\mathrm{VS}, T}^{\mathrm{model}, i} \right)^2 = \sum_{p,k=1}^{N+N_{c}} x_{p,k}^{\sigma, R} \beta_{X_{p}}^{\tau}(T_{e}^{i}, T_{s}^{i}) \beta_{X_{k}}^{\tau}(T_{e}^{i}, T_{s}^{i}) \frac{1}{T} \int_{0}^{T} a_{X_{p}}^{\tau}(u) a_{X_{k}}^{\tau}(u) \d u.
    \end{align}
    
    In particular, all the terms $\left( \beta_{X_{p}}^{\tau}(T_{e}^{i}, T_{s}^{i}) \right)_{p \in \left\{ 1, \ldots, N \right\}}$ and $\left( \int_{0}^{\tau_{d}} a_{X_{p}}^{\tau}(u) a_{X_{k}}^{\tau}(u) \d u \right)_{p,k \in \left\{ 1, \ldots, N \right\}}$ are explicit in the $L$-$S$-$C$ parametrization, as specified respectively in Tables \ref{tab:explicit_a_b_beta_functions} and \ref{tab:cross_term_small_t}.
\end{proposition}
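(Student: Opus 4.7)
The plan is to start from the explicit integral expression for the normalized stationary covariance given in Lemma~\ref{L:stationary_covariance}, and to exploit the separable structure of the $L$-$S$-$C$ volatility functions to factorize everything inside the integral into a product of (i) a time-$t$ part depending only on $\tau_d$ (respectively $T$ for the VS), (ii) a maturity part depending only on $(T_s,T_e)$, and (iii) the correlation/vol scalars that get packaged into $x_{p,k}^{\sigma,R}$.

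More concretely, I would first substitute the $L$-$S$-$C$ parametrization \eqref{eq:lsc_deterministic_vol_factors} into the definition \eqref{eq:futures_contracts_volatility} of $\Sigma_t(T_s,T_e)$ (with $g\equiv1$ in this calibration step). The $L$ and $S$ factors each contribute a single separable state variable, while each $C$ factor decomposes as the sum $\sigma_j^{C_1}+\sigma_j^{C_2}$ introduced just before Table~\ref{tab:explicit_a_b_beta_functions}; this is the reason the summation index runs up to $N+N_c$ rather than $N$. Using \eqref{eq:def_factors_separability}, each component of $\Sigma_t$ factors as $\sigma_X \, a_X^\tau(t)\, \beta_X^\tau(T_e,T_s)$ after pulling the $a_X^\tau(t)$ out of the $T$-integral and defining $\beta_X^\tau(T_e,T_s)$ as in \eqref{eq:def_beta}.

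Next I would plug this factorized form into the stationary covariance formula \eqref{eq:model_stationary_covariance}. The inner product $\Sigma_t^{i,\top} R\, \Sigma_t^{j}$ then becomes a double sum over state variables indexed by $p,k\in\{1,\ldots,N+N_c\}$ of the form
\begin{equation}
    \sigma_{X_p}\sigma_{X_k}\, R_{p,k}\, a_{X_p}^\tau(t)\, a_{X_k}^\tau(t)\, \beta_{X_p}^\tau(T_e^i,T_s^i)\, \beta_{X_k}^\tau(T_e^j,T_s^j),
\end{equation}
where the identification of $R_{p,k}$ relies on interpreting the two components of a $C$-factor as driven by the same Brownian motion (so their mutual block in $R$ is the identity in the $(C_1,C_2)$ slot). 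Grouping $\sigma_{X_p}\sigma_{X_k}R_{p,k}=x_{p,k}^{\sigma,R}$ per \eqref{eq:def_quadratic_variables} and averaging $\frac{1}{\tau_d}\int_0^{\tau_d} a_{X_p}^\tau(u)a_{X_k}^\tau(u)\,du$ yields exactly \eqref{eq:model_futures_covariance}. The VS formula \eqref{eq:model_futures_variance_swap} follows by the same computation applied to \eqref{eq:vs_vol_model} with $i=j$ and averaging over $[0,T]$ instead of $[0,\tau_d]$.

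The main obstacle is bookkeeping rather than analysis: one must be careful with the indexing after the $C$-factor is split into its two state variables $C_1,C_2$, ensuring that the scalar $\sigma_{C,j}$ is not double-counted and that the effective $(N+N_c)\times(N+N_c)$ correlation block matches the original $N\times N$ matrix $R$ once the $(C_1,C_2)$ pair is collapsed back. The remaining work, namely computing $a_X^\tau$, $b_X^\tau$ and the closed-form $T$-integrals producing $\beta_X^\tau(T_e,T_s)$, is routine and exactly what is tabulated in Table~\ref{tab:explicit_a_b_beta_functions}; likewise, the time integrals $\int_0^{\tau_d} a_{X_p}^\tau(u)a_{X_k}^\tau(u)\,du$ are elementary and appear in Table~\ref{tab:cross_term_small_t}.
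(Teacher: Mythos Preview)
Your proposal is correct and follows exactly the same approach as the paper: the paper's proof is a one-sentence pointer to deriving \eqref{eq:model_futures_covariance} and \eqref{eq:model_futures_variance_swap} from \eqref{eq:model_stationary_covariance} and \eqref{eq:vs_vol_model} by plugging in the definition \eqref{eq:futures_contracts_volatility} of $\Sigma_t$ and using the separability property \eqref{eq:def_factors_separability}. Your write-up simply makes this explicit, and your remark on the $(N+N_c)$-indexing bookkeeping for the $C_1,C_2$ split is precisely what the paper handles later via the equality constraints \eqref{eq:equality_constraints_c_factors}.
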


\begin{proof}
    The explicit formulas \eqref{eq:model_futures_covariance} and  \eqref{eq:model_futures_variance_swap} are readily derived respectively from \eqref{eq:model_stationary_covariance} and \eqref{eq:vs_vol_model}, while using the definition \eqref{eq:futures_contracts_volatility} and the separability property of the $L$-$S$-$C$ factors \eqref{eq:def_factors_separability}.
\end{proof}

\begin{table}[H]
\centering
\renewcommand{\arraystretch}{1.5} 
\setlength{\tabcolsep}{8pt}       

\begin{tabular}{c|c|cc}
\cline{1-2}
\multicolumn{1}{|c|}{$L^{p}$} & $t_{2}-t_{1}$ & & \\
\cline{1-3}
\multicolumn{1}{|c|}{$S^{p}$ or $C_{1}^{p}$} &
$\tau_{S/C}^{p} \left( e^{\frac{t_2}{\tau_{S/C}^{p}}} - e^{\frac{t_1}{\tau_{S/C}^{p}}} \right)$ &
\multicolumn{1}{c|}{
\makecell{$\tilde{\tau}^{p,k} \left( e^{\frac{t_2}{\tilde{\tau}^{p,k}}} - e^{\frac{t_1}{\tilde{\tau}^{p,k}}} \right),$\\ $\tilde{\tau}^{p,k} := \frac{\tau_{S/C}^{p}\tau_{S/C}^{k}}{\tau_{S/C}^{p}+\tau_{S/C}^{k}}$}} & \\ \hline
\multicolumn{1}{|c|}{$C_{2}^{p}$} &
\makecell{$- \Big( \left( t_2 - \tau_{C}^{p} \right) e^{\frac{t_2}{\tau_{C}^{p}}} -$ \\ $\left( t_1 - \tau_{C}^{p} \right)e^{\frac{t_1}{\tau_{C}^{p}}} \Big)$} &
\multicolumn{1}{c|}{\makecell{$- \frac{\tilde{\tau}^{p,k}}{\tau_{C}^{p}} \Big( \left( t_2 - \tilde{\tau}^{p,k} \right) e^{\frac{t_2}{\tilde{\tau}^{p,k}}} -$ \\ 
$\left( t_1 - \tilde{\tau}^{p,k} \right)e^{\frac{t_1}{\tilde{\tau}^{p,k}}} \Big),$ \\ $\tilde{\tau}^{p,k} := \frac{\tau_{S/C}^{k}\tau_{C}^{p}}{\tau_{S/C}^{k}+\tau_{C}^{p}}$}} &
\multicolumn{1}{c|}{\makecell{$\frac{\tilde{\tau}^{p,k}}{\tau_{C^a} \tau_{C^b}} \Big( \left( 2\left(\tilde{\tau}^{p,k}\right)^2 - 2 \tilde{\tau}^{p,k} t_2 + t_2^2 \right) e^{\frac{t_2}{\tilde{\tau}^{p,k}}} -$ \\ 
$\left( 2\left(\tilde{\tau}^{p,k}\right)^2 - 2 \tilde{\tau}^{p,k} t_1 + t_1^2 \right) e^{\frac{t_1}{\tilde{\tau}^{p,k}}} \Big)$, \\ $\tilde{\tau}^{p,k} := \frac{\tau_{C}^{k}\tau_{C}^{p}}{\tau_{C}^{k}+\tau_{C}^{p}}$}} \\ \hline
& $L^{k}$ & \multicolumn{1}{c|}{$S^{k}$ or $C_{1}^{k}$} & \multicolumn{1}{c|}{$C_{2}^{k}$} \\
\cline{2-4}
\end{tabular}
\caption{Explicit computations for the cross-terms $\int_{t_{1}}^{t_{2}} a_{X_{p}}^{\tau}(u) a_{X_{k}}^{\tau}(u) \d u$ used in \eqref{eq:model_futures_covariance} for the types of state variables $X_{p}, X_{k} \in \left\{ L, S, C_{1}, C_{2} \right\}, \; p,k \in \left\{ 1, \ldots, N \right\}$. Notice in particular that $S$ and $C_{1}$ state variables share the same function $a_{X}^{\tau}$ as detailed in Table \ref{tab:explicit_a_b_beta_functions}.} 
\label{tab:cross_term_small_t} 
\end{table}

\subsection{Loss function for the historical covariances - VS volatility term structure calibration problem}

{We consider $P_{\mathrm{smiles}} := \sum_{i=1}^{P_{\mathrm{imp}}} N_{i} \in \mathbb{N}^{*}$ Variance Swap (VS) contracts with respective maturities $\left(\left( T_{j}^{i} \right)_{j \in \{1, \ldots, N_i\}} \right)_{i \in \{ 1, \ldots, P_{\mathrm{imp}} \}}$ and volatilities $\left(\left( \sigma_{\mathrm{VS}, T_j^i}^{\mathrm{mkt}, i} \right)_{j \in \{ 1, \ldots, N_i\}}\right)_{i \in \{ 1, \ldots, P_{\mathrm{imp}} \}}$ computed using \eqref{eq:extract_vs_vol_from_vs_swap}. Such VS contracts are indeed associated to the implied volatility smiles we aim to calibrate the initial model \eqref{eq:HJM_def} on, whose underlying are respectively the futures contracts with delivery periods $\left( \left[ T_{s}^{i}, T_{e}^{i} \right] \right)_{i \in \{ 1, \ldots, P_{\mathrm{imp}} \}}$, recall the notations from Section \ref{sect:vs_quote_extr}.} 

Denote by $H$ the number of past observation days until the date of calibration considered for the calibration. Then we introduce a family of $P_{\mathrm{hist}} \in \mathbb{N}^{*}$ rolling contracts $\left(\left( F \left( t_h, t_h + T_{s}^{k}, t_h + T_{e}^{k} \right) \right)_{h \in \{ 1, \ldots, H \}} \right)_{k \in \{ 1, \ldots, P_{\mathrm{hist}} \}}$ constructed by solving \eqref{eq:optimization_problem_stripping} and we use their respective daily log returns time-series
$$
\left( \left( r_{t_{h}}^{\mathrm{mkt}, k} (\tau_{d}) \right)_{h \in \left\{ 0, \ldots, H \right\}} \right)_{k \in \{ 1, \ldots, P_{\mathrm{hist}} \}}
$$ 
computed from \eqref{eq:historical_rolling_future_contract_returns}, for the estimation of the historical daily log returns' covariances via the estimator \eqref{eq:historical_covariance_matrix}. In our case, we typically chose $P_{\mathrm{hist}} \geq P_{\mathrm{imp}}$, with the set of the rolling forwards' delivery periods including those of the absolute underlying futures contracts, i.e.~
$$
\left( \left[ T_{s}^{i}, T_{e}^{i} \right] \right)_{i \in \{ 1, \ldots, P_{\mathrm{imp}} \}} \subset \left( \left[ T_{s}^{k}, T_{e}^{k} \right] \right)_{k \in \{ 1, \ldots, P_{\mathrm{hist}} \}}
$$
to ensure that the historical covariance term structure captures a priori, \textit{somehow}, the \textit{futures correlation structure} of those absolute futures.

We introduce a convex combination of losses between the fit of historical log returns' variance-covariance term structure of rolling forward contracts and the fit of the implied variance swap volatility levels such that
\begin{equation} \label{eq:def_loss}
    J^{\lambda}(\tau, \sigma, R) := \lambda J_{1}(\tau, \sigma, R) + \left( 1 - \lambda \right) J_{2}(\tau, \sigma, R), \quad \lambda \in [0,1],
\end{equation}
with
\begin{align} \label{eq:def_loss_covariance_fit}
    J_{1}(\tau, \sigma, R) & := \| C^{\mathrm{mkt}} - C^{\mathrm{model}}(\tau, \sigma, R) \|_{\Gamma}^{2}, \\ \label{eq:def_loss_vs_variances_fit}
    J_{2}(\tau, \sigma, R) & := \| \left( \sigma_{\mathrm{VS}}^{\mathrm{mkt}} \right)^2 - \left( \sigma_{\mathrm{VS}}^{\mathrm{model}} \right)^2(\tau, \sigma, R) \|_{w}^{2},
\end{align}
where the $(i,j)^{th}$ entry of $C^{\mathrm{model}}$ and $l^{th}$ entry of $\left( \sigma_{\mathrm{VS}}^{\mathrm{model}} \right)^2$ are respectively given by \eqref{eq:model_futures_covariance} and \eqref{eq:model_futures_variance_swap}, while $C^{\mathrm{mkt}}$ and $\sigma_{\mathrm{VS}}^{\mathrm{mkt}}$ are respectively estimated by \eqref{eq:historical_covariance_matrix} and \eqref{eq:vs_carr_formula}. Furthermore, $\Gamma := \left( \Gamma_{i,j} \right)_{i,j \in \{1, \ldots, P_{\mathrm{hist}} \}^{2}} \in \mathbb{R}_{+}^{P_{\mathrm{hist}}^{2}}$ denotes a family of weights associated to the matrix Frobenius norm, while $w := \left( w_{i} \right)_{i \in \{1, \ldots, P_{\mathrm{smiles}} \}} \in \mathbb{R}_{+}^{P_{\mathrm{smiles}}}$ are weights for the vector Frobenius norm, specified  respectively as in \eqref{eq:weight_specification} and uniformly in our case. Finally, notice that the hyper-parameter $\lambda$ conveniently controls the trade-off between the fit of the historical covariances and that of the VS volatility term structure.

Then, for a fixed $\lambda \in [0,1]$, the joint historical - VS variance term structure calibration is formulated as the following minimization problem under constraints
\begin{equation} \label{eq:minimisation_problem}
    \min_{(\tau, \sigma, R) \in U} J^{\lambda}(\tau, \sigma, R),
\end{equation}
where the admissible set of parameters is given by
\begin{equation} \label{eq:admissible_set_minimisation_problem}
    U := \left( \mathbb{R}_{+}^{*} \right)^{N} \times \left( \mathbb{R}_{+}^{*} \right)^{N} \times \mathbb{S}_{++}^{N}.
\end{equation}
The functional $J^{\lambda}$ \eqref{eq:def_loss} is clearly non-convex in $(\tau, \sigma, R)$, depending in particular on exponential terms in the parameters $\tau$, so we cannot guarantee a priori the existence of a global minimizer of such optimization problem \eqref{eq:minimisation_problem}.

\subsection{Solver specification}

The admissible set of parameters \eqref{eq:admissible_set_minimisation_problem} can be very large, with dimension in $\mathcal{O} \left( N^{4} \right)$, where recall $N$ is the total number of factors in the Nelson-Siegel parametrization \eqref{eq:lsc_deterministic_vol_factors}, and include positive constraints as well as a non-trivial positive definite cone constraint for the correlation matrix $R$. In order to simplify and fasten the numerical implementation of the optimization problem \eqref{eq:minimisation_problem}, we start by noticing the three following facts.
\begin{itemize}
    \item[(i)] As soon as the parameters $\tau$ are fixed, the functional $J^{\lambda}$ \eqref{eq:def_loss} becomes quadratic in the variables $\left( x_{p,k}(\sigma, R) \right)_{p, k \in \left\{ 1, \ldots, N+N_{c} \right\}}$ from \eqref{eq:def_quadratic_variables} such that
    \begin{align} \label{eq:loss_functional_quadratic_in_x}
        J^{\lambda}(\tau, \sigma, R) = & \lambda \sum_{i,j=1}^{P_{\mathrm{hist}}} W_{i,j} \left( C_{i,j}^{\mathrm{mkt}} - \sum_{p,k=1}^{N+N_{c}} w_{p,k}^{\tau}(i,j) x_{p,k}^{\sigma, R} \right)^{2}\\ \label{eq:functional_quadratic_in_x}
        & + \left( 1 - \lambda \right) \sum_{l=1}^{P_{\mathrm{smiles}}} w_{l} \left( \left( \sigma_{VS}^{\mathrm{mkt}} \right)^2 - \sum_{p,k=1}^{N+N_{c}}w_{p,k}^{\tau}(l) x_{p,k}^{\sigma, R} \right)^{2},
    \end{align}
    where, for $p, k \in \left\{ 1, \ldots, N+N_{c} \right\}$, the weights in the squares are respectively identified from \eqref{eq:model_futures_covariance} and \eqref{eq:model_futures_variance_swap} such that
    \begin{align} \label{eq:weights_tau_cov}
        w_{p,k}^{\tau}(i,j) & := \beta_{X_{p}}^{\tau}(T_{e}^{i}, T_{s}^{i}) \beta_{X_{k}}^{\tau}(T_{e}^{j}, T_{s}^{j}) \frac{1}{\tau_{d}} \int_{0}^{\tau_{d}} a_{X_{p}}^{\tau}(u) a_{X_{k}}^{\tau}(u) \d u, \quad i,j \in \{ 1, \ldots, P_{\mathrm{hist}}\}, \\ \label{eq:weights_tau_vs_var}
        w_{p,k}^{\tau}(l) & := \beta_{X_{p}}^{\tau}(T_{e}^{l}, T_{s}^{l}) \beta_{X_{k}}^{\tau}(T_{e}^{l}, T_{s}^{l}) \frac{1}{T} \int_{0}^{T} a_{X_{p}}^{\tau}(u) a_{X_{k}}^{\tau}(u) \d u, \quad l \in \{ 1, \ldots, P_{\mathrm{smiles}}\}.
    \end{align}

    \item[(ii)] Note that the loss functional \eqref{eq:functional_quadratic_in_x} is expressed in terms of the state variables, and for each $j \in \{ 0, \ldots, N_{c}\}$, the $j^{th}$ $C$-factor has two state variables sharing the same parameter $\sigma_{C,j}$ and the same Brownian motion so that we need to impose the following $2N_{c}$ equality constraints on the $x$ variables \eqref{eq:def_quadratic_variables} ordered as $\left( \left\{ L, \left(S_{i}\right)_{i \in \left\{ 0, \ldots, N_{s} \right\}}, \left((C_{1,j}, C_{2,j})\right)_{i \in \left\{ 0, \ldots, N_{c} \right\}} \right\} \right)$ such that
    \begin{equation} \label{eq:equality_constraints_c_factors}
    \begin{cases}
        x_{l,l}(\sigma, R) = x_{l+1,l+1}(\sigma, R) \\
        x_{l,l}(\sigma, R) = x_{l,l+1}(\sigma, R)
    \end{cases}, \quad l \in \{ N_{s}+1, \ldots, N_{s}+1+N_{c}\}.
    \end{equation}
    
    \item[(iii)] $S$ (resp. $C$-factors) are inter-changeable, which may cause numerical instability.
\end{itemize}

\paragraph{Successive non-linear -- linear cone program formulation.} Consequently, instead of solving \eqref{eq:minimisation_problem} globally, we will solve the following iterative minimization problems
\begin{equation} \label{eq:iterative_minimisation_problem}
    \min_{a \in \mathbb{R}^{N_{s}+N_{c}}} \min_{x(\sigma, R): \; (\sigma, R) \in \left( \mathbb{R}_{+}^{*} \right)^{N} \times \mathbb{S}_{++}^{N}} J^{\lambda}(\tau(a), x(\sigma, R)).
\end{equation}

On the one hand, the outward minimization is performed by an unconstrained non-linear solver (e.g.~ \href{https://docs.scipy.org/doc/scipy/reference/generated/scipy.optimize.minimize.html#scipy.optimize.minimize}{Powell minimizer from SciPy}), where we optimize on the parameters
\begin{equation}
    a := \left( a_{s}^{1}, \ldots, a_{s}^{N_{s}}, a_{c}^{1}, \ldots, a_{c}^{N_{c}} \right) \in \mathbb{R}^{N_{s}+N_{c}},
\end{equation}
such that the following change of variables
\begin{equation} \label{eq:change_of_variable_tau_params}
    \tau(a) := \left( e^{a_{s}^{1}}, e^{a_{s}^{1}} + e^{a_{s}^{2}}, \ldots, \sum_{i=1}^{N_{s}} e^{a_{s}^{i}}, e^{a_{c}^{1}}, e^{a_{c}^{1}} + e^{a_{c}^{2}}, \ldots, \sum_{i=1}^{N_{c}} e^{a_{c}^{i}} \right)
\end{equation}
ensures that the $\tau(a)$ parameters are indeed positive and strictly increasing for the $N_s$ $S$-factors and the $N_c$ $C$-factors respectively.

On the other hand, for a fixed $a \in \mathbb{R}^{N_{s}+N_{c}}$, $\tau(a)$ is fixed and the inward minimization problem in \eqref{eq:iterative_minimisation_problem} can be formulated and solved in terms of the variables $\left( x_{p,k}(\sigma, R) \right)_{p, k \in \left\{ 1, \ldots, N+N_{c} \right\}}$ from \eqref{eq:def_quadratic_variables} as a linear cone program ensuring $R$ is indeed semi-definite positive, as detailed in Appendix \ref{s:linear_cone_programming}.

\paragraph{Solver initialization and optimal parameters extraction.} All that remains to do is to initialize properly the outward non-linear solver in the iterative formulation \eqref{eq:iterative_minimisation_problem} and then extract the optimal parameters $\left( \sigma, R\right)$ upon convergence. Indeed, a good initialization $a_0 \in \mathbb{R}^{N_{s}+N_{c}}$ is of paramount importance to be able to reach a good local minimum in practice. 

{Fortunately the solver algorithm runs relatively fast, from a few seconds to a few minutes depending on the number of factors, so we can afford to iterate over various randomized initial guesses. As a rule of thumb, we construct the initial values $a_0 \in \mathbb{R}^{N_{s}+N_{c}}$ starting from the mid-points of delivery periods $\left( \left[ T_{s}^{k}, T_{e}^{k} \right] \right)_{k \in \{ 1, \ldots, P_{\mathrm{hist}} \}}$ such that the resulting factors with time-scales $\tau(a_{0})$ cover reasonably well the futures curve across all deliveries. In our case, since we have $N_s \leq P_{\mathrm{hist}}$ and $N_c \leq P_{\mathrm{hist}}$, we first reduce the number of such mid-points by taking convex combinations of them to the number of $S$ and $C$ factors, and then add up some randomness to obtain the initial positive $\tau(a_{0})$.}

Once an optimizer $\left( \hat{a}, \hat{x}(\hat{a}) \right)$ of the iterative optimization problem \eqref{eq:iterative_minimisation_problem} has been reached, then we extract the optimal parameters $\hat{\tau}$ injecting $\hat{a}$ into \eqref{eq:change_of_variable_tau_params}, and by positive-definiteness of the calibrated $L$-$S$-$C$ factors' covariance $\left(\hat x_{i,j} \right)_{i,j \in I}$ (extracted from an optimal solution $\hat{x}(\hat{a})$ of \eqref{eq:linear_cone_program}, with $I$ denoting the set of indices relative to $\left\{ L, S, C_{1} \right\}$ state variables, of cardinal $N$), we get
\begin{equation} \label{eq:extract_sigma_parameters}
    \hat{\sigma_{i}} := \sqrt{\hat{x}_{i,i}} > 0, \quad i \in I,
\end{equation}
and
\begin{equation} \label{eq:extract_correl_parameters}
    \hat{R}_{i,j} := \frac{\hat{x}_{i,j}}{\hat{\sigma}_{i} \hat{\sigma}_{j}}, \quad i,j \in \left\{ 1, \ldots, N \right\}.
\end{equation}

\begin{remark}
    Note that the linear cone program \eqref{eq:linear_cone_program} only guarantees the matrix $\left(\hat x_{i,j}\right)_{i,j \in I}$ to be semi-definite positive while the inversion formulas \eqref{eq:extract_sigma_parameters}--\eqref{eq:extract_correl_parameters} are well-defined if $\left(\hat x_{i,j}\right)_{i,j \in I}$ is positive definite. If zero happens to belong to the spectrum of $\left(\hat x_{i,j}\right)_{i,j \in I}$, then it means the number of factors could be reduced, either by withdrawing any $i^{th}$ factor associated to $\hat x_{i,i} = 0$, or, if $\hat R$ is indeed well-defined by \eqref{eq:extract_correl_parameters}, by summing (resp. subtracting) the volatility shape functions of the perfectly correlated (resp. anti-correlated) factors with respect to all the other factors. See for example \cite[Table 13-16-19]{feron2024estimation} where such latter phenomenon is repeatedly observed when calibrating historical futures' returns on the Italian, Swiss and UK markets respectively. In practice however, it is always possible to regularize or to tune the hyper-parameters of the non-linear solver such that $\left(\hat x_{i,j}\right)_{i,j \in I} \in \mathbb{S}_{++}^{N}$ (e.g.~the tolerance threshold, the number of iterations, etc).
\end{remark}

\section{Calibration methodology for implied smiles} \label{s:implied_calib}

Implied calibration focuses on calibrating the model to observed market option prices. We show how the prices of vanilla options can be computed in our model and propose a way to decouple the implied calibration into the {correction} of the implied volatility term structure given by the VS volatilities {to achieve a perfect fit,} and the calibration of smile shapes. 

\subsection{Fourier option pricing for implied calibration}\label{S:fastpricing}

In this section, we describe the vanilla options pricing techniques used in the implied calibration for a futures contract $\widetilde F_{.} = \widetilde F_{.}(T_s, T_e)$ with arbitrary delivery period $[T_s,\, T_e]$  in our model, where $\widetilde F$ is given by  \eqref{eq:KV_def} following the KV approximation.

\paragraph{The characteristic function of $\log \widetilde F$.}
 The European call option prices $C(T, K)$ can be calculated using the \cite{Lewis2001} formula
    \begin{equation} \label{eq:lewis_formula}
        C(T, K) = \E(\widetilde F_T - K)^+ = F_0 - \dfrac{K}{\pi}\int_0^\infty\Re\left(e^{i\left(u - \frac{i}{2}\right)\log\frac{F_0}{K}}\phi\left(u - \frac{i}{2}\right)\right)\dfrac{\d u}{u^2 + \frac{1}{4}},
    \end{equation}
    where $\phi(u) := \E\left[ e^{ iu\log\frac{\widetilde F_T}{F_0} }\right], \; u \in \mathbb{R}$ denotes the characteristic function of the normalized log-price. Due to the affine structure of the variance diffusion of the lifted Heston model  of \cite{lifted2019}, we show in the following theorem that the characteristic function $\phi$ of $\log  \widetilde F$ in the model  \eqref{eq:KV_def} is known up to the solution of a Riccati equation.
    \begin{theorem}\label{Thm:lifted_heston_riccati} Fix $v  \in\mathbb{C}$ such that   $\Re(v)\in [0,1]$.
        {For $T \leq T_s$}, the characteristic function of $\log \frac{\widetilde F_T}{F_0}$ is given by
        \begin{equation}\label{eq:characteristic_function}
            \E\left[\exp\left(v \log \frac{\tilde{F}_T}{F_0}\right)\right] = \exp\left(
            \int_0^T G\left(T-s, v, \psi(s)\right)\,\d s\right),
        \end{equation}
        where $\psi := \sum\limits_{j=1}^M c_j\psi^{j}$, and $\psi^j$ satisfies the Riccati equation
        \begin{equation} \label{eq:riccati_equation}
            \begin{cases}
                \dot\psi^j(t) &= -x_j\psi^j(t) + G\left(T - t, v, \psi(t)\right) \quad {a.e.} \\
            \psi^j(0) &= 0
            \end{cases},
        \end{equation}
        for $j \in \{1, \ldots, M\},$ and the function $G$ is defined by
\begin{equation}\label{eq:function_F}
            G(t, v, \psi) := \dfrac{\textcolor{black}{h(t)^2}}{2}\Sigma_{t}^\top R\Sigma_{t}(v^2 - v) + {h(t)}(\Sigma_{t}^\top\tilde\rho) v\psi + \dfrac{\psi^2}{2}.
        \end{equation}
        The Riccati equation \eqref{eq:riccati_equation} admits a unique global solution $\psi \in L^2_{\mathrm{loc}}(\R_+, \mathbb{C})$,  which is differentiable a.e.~and satisfies $\Re(\psi) \leq 0$.
    \end{theorem}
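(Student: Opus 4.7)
The approach is a classical affine-exponential martingale calculation adapted to the lifted Heston framework of \cite{lifted2019}, with the twist that the coefficients of $X_t := \log(\widetilde F_t/F_0)$ are time-dependent through $h(t)$ and $\Sigma_t$. The key observation is that $(X_t, U_t^1, \ldots, U_t^M)$ is a (time-inhomogeneous) Markov diffusion whose characteristics are affine in $V_t = 1 + \sum_j c_j U_t^j$, which naturally suggests an exponential-affine ansatz for its moment generating function.

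First I would posit
$$M_t := \exp\Bigl(vX_t + A(t) + \sum_{j=1}^{M} c_j B^j(t)\,U_t^j\Bigr), \quad t \in [0,T],$$
with terminal conditions $A(T) = 0$ and $B^j(T) = 0$, so that $M_T = \exp(vX_T)$. Applying Itô's formula together with $d\langle X\rangle_t = h^2(t)V_t\,\Sigma_t^\top R\Sigma_t\,dt$, $d\langle X, U^j\rangle_t = h(t)V_t\,\Sigma_t^\top\tilde\rho\,dt$ (which follows from $d\langle B, W\rangle_t = \tilde\rho\,dt$), and $d\langle U^j, U^k\rangle_t = V_t\,dt$, the finite-variation part of $dM_t/M_t$ separates, after substituting $V_t = 1 + \sum_j c_j U_t^j$, into a piece independent of $U$ and pieces linear in each $U_t^j$. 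Setting each coefficient to zero yields the backward system
$$A'(t) = -G\bigl(t, v, B(t)\bigr), \qquad \dot B^j(t) = x_j B^j(t) - G\bigl(t, v, B(t)\bigr),$$
with $B(t) := \sum_j c_j B^j(t)$. The time reversal $\psi^j(s) := B^j(T-s)$ then gives exactly the forward Riccati \eqref{eq:riccati_equation} with zero initial datum, and evaluating $M_0 = \exp(A(0))$ together with the change of variable $s = T - t$ in $A(0) = \int_0^T G(t, v, B(t))\,dt$ produces \eqref{eq:characteristic_function}.

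For the well-posedness part, local existence and uniqueness of $\psi^j$ in $L^2_{\mathrm{loc}}$ follow from the locally Lipschitz quadratic structure of $G$ and the boundedness of $h$ and $\Sigma$, via a standard Picard iteration. The hypothesis $\Re(v) \in [0,1]$ gives $\Re(v^2 - v) = \Re(v)(\Re(v) - 1) - \Im(v)^2 \leq 0$, so the ``source'' term in $G$ has non-positive real part; a comparison/invariance argument for $\Re(\psi^j)$ in the spirit of the proofs of existence of moment generating functions for (lifted) affine Volterra processes in \cite{jaber2019affinevolterraprocesses} and \cite{lifted2019} then propagates $\Re(\psi^j)(t) \leq 0$ from the initial value, which precludes blow-up in finite time and upgrades local to global existence.

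The main obstacle is to justify that the identified local martingale $M$ is in fact a true martingale, which is precisely what is needed to conclude $M_0 = \E[M_T]$. The standard route is to exploit $\Re(\psi) \leq 0$ to obtain the pointwise domination $|M_t| \leq \exp\bigl(\Re(v)X_t + \Re(A(t)) + \sum_j c_j \Re(B^j(t))U_t^j\bigr)$ and then control the dominating process in $L^1$ by the same Volterra/lifted arguments used in \cite[Lemma 7.3]{jaber2019affinevolterraprocesses}, already invoked above for the martingality of $f(\cdot, T)$. Since $h$ and $\Sigma$ are assumed bounded, their time-dependence only affects bookkeeping and does not alter the structure of the argument.
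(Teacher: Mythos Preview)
Your argument is correct and essentially complete; it differs from the paper's in that you work directly with the finite-dimensional Markov lift $(X,U^1,\ldots,U^M)$ via the classical exponential-affine ansatz, whereas the paper embeds the model as a special case of the Volterra Heston model (kernel $K(t)=\sum_j c_j e^{-x_jt}$, $V_0=1$, $\theta=1$, $\lambda=0$, $\nu=1$) and invokes \cite[Theorem~7.1(ii)]{jaber2019affinevolterraprocesses} for the Volterra--Riccati equation
\[
\psi(t)=\sum_{j=1}^M c_j\int_0^t e^{-x_j(t-s)}G(T-s,v,\psi(s))\,\d s,
\]
only afterwards splitting $\psi=\sum_j c_j\psi^j$ and recognising \eqref{eq:riccati_equation} via variation of constants. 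Your route is more self-contained and arguably more transparent (the It\^o bookkeeping you outline checks out line by line, and the time reversal $\psi^j(s)=B^j(T-s)$ is exactly right), at the cost of having to argue the true-martingale property of $M$ by hand; the paper's route gets existence, $\Re(\psi)\leq 0$, and the martingale property essentially for free from the cited Volterra machinery, but is less explicit about what actually happens. Both ultimately rely on the same ingredients from \cite{jaber2019affinevolterraprocesses,lifted2019} for the a~priori bound $\Re(\psi)\leq 0$ and the $L^1$ control needed for martingality.
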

    \begin{proof} 
    
    The futures price dynamics \eqref{eq:KV_def} can be rewritten as a one-dimensional diffusion, i.e.~there exists a Brownian motion $\tilde W$ such that
        \begin{equation}
            \dfrac{\d \tilde{F}_t}{\tilde{F}_t} = \textcolor{black}{h(t)}\textcolor{black}{\sqrt{V_t}}\sqrt{\Sigma_t^\top R \Sigma_t}\d \tilde W_t, 
            \quad \d \langle B, \tilde W\rangle_t = \Sigma_t^\top \tilde\rho\,\d t,
            \quad t \in [0, T_s].
        \end{equation}
        Note that our model is a particular specification of the Volterra Heston model \cite[Section 7]{jaber2019affinevolterraprocesses} setting the kernel to
        \begin{equation}
             K(t) := \sum_{j=1}^M c_j e^{-x_j t}, \quad t \geq 0,
        \end{equation}
        and $V_0 = 1, \ \theta = 1, \ \lambda = 0, \ \nu = 1.$
        Repeating the proof of \cite[Theorem 7.1 (ii)]{jaber2019affinevolterraprocesses} adapted to take into account the bounded deterministic volatility component $t \mapsto \theta(t) := h(t)\sqrt{\Sigma_t^\top R \Sigma_t}$, we can prove the existence of $\psi \in L^2_{\mathrm{loc}}(\R_+, \mathbb{C})$ satisfying $\Re(\psi) \leq 0$, which is the solution to the Volterra--Riccati equation
        \begin{equation}
            \psi(t) = \sum_{j=1}^M c_j\int_0^t e^{-x_j(t-s)} G(T - s, v, \psi(s)) \, \d s, \quad t \ge 0.
        \end{equation}
        The solution can be written in the form $\psi = \sum\limits_{j=1}^M c_j\psi^{j}$, where, for $ j \in \{1, \ldots, M\}$, $\psi^j$ satisfies 
\begin{equation}\label{ew:psi_j_integral}
            \psi^j(t) = \int_0^t e^{-x_j(t-s)} G(T - s, v, \psi(s)) \, \d s,
        \end{equation}
        which is equivalent to the equation \eqref{eq:riccati_equation} by the variation of constants formula. Moreover, \eqref{ew:psi_j_integral} implies that $\psi^j$ is absolutely continuous, hence $\psi$ is differentiable a.e., which concludes the proof.
    \end{proof}
    
\paragraph{Numerical scheme for the Riccati equation.}
For $j \in \{1, \ldots, M\}$, the Riccati equation \eqref{eq:riccati_equation} can be rewritten in the integral form 
\begin{equation*}
    \psi^j(t) = \int_0^te^{-x_j(t-s)}G(T - s, iu, \psi(s))\, \d s, \quad t \geq 0,
\end{equation*}
so that $\psi^j(t)$ satisfies 
\begin{align*}
    \psi^j(t + h) &= e^{-x_jh}\psi^j(t) + \int_t^{t+h}e^{-x_j(t+h-s)}G(T - s, iu, \psi(s))\, \d s \\
    &\approx e^{-x_jh}\psi^j(t) + G(T - t, iu, \psi(t))\int_t^{t+h}e^{-x_j(t+h-s)}\, \d s \\
    &= e^{-x_jh}\psi^j(t) + \dfrac{1 - e^{-x_jh}}{x_j}G(T - t, iu, \psi(t)),
\end{align*}
which is used as a numerical scheme to solve the Riccati equation \eqref{eq:riccati_equation}. Such numerical scheme is more stable and precise than the explicit Euler discretization scheme, especially if one of the mean-reversion speeds $x_j, \; j \in \{1, \ldots, M\}$ reaches extreme values.

\subsection{Exact calibration correction of the VS volatility term structure} \label{section:gh_calib}

{The second step of the joint calibration consists of the correction and perfect match of the VS volatility term structure using the functions $g$ and $h$. Precise calibration of the VS volatilities is indeed crucial for achieving a good overall implied fit in step 3: since $\mathbb{E}[V_t] \equiv 1, \; t \geq 0 $, the stochastic volatility parameters responsible for the smile shape do not impact the VS volatilities, and hence cannot correct the VS volatility term structure.

Although the VS volatilities are calibrated by the $L$-$S$-$C$ risk factors in step 1, as described in Section~\ref{s:joint_cov_vs_calib}, their fit is often not exact since the primary goal of this step is to capture the historical futures' correlation structure with as few factors as possible but also match a priori the VS volatilities, where fitting the latter can be considered an \textit{implied regularization} in the loss $J^{\lambda}$ \eqref{eq:def_loss}. Yet, at the same time, this regularization is necessary to capture the VS volatility term structure at least approximately in order to avoid over-fitting the functions \( g \) and \( h \). 
Indeed, if the VS term structure were calibrated poorly at step 1, then the functions $g$ and $h$ would vary significantly, which would lead to a strong non-stationarity of the volatility (see Lemma \ref{L:stationary_covariance}) and eventually degrade the model's extrapolation ability, mostly induced by step 1. Hence, it is important to keep these functions $g$ and $h$ close to one in practice.}

In what follows, we assume that the functions $\color{black} T \mapsto g(T)$ and  $\color{black} t \mapsto h(t)$ are piece-wise constant. The discontinuity points of the function $g$ are the delivery start and end dates of the underlying futures contracts, while the discontinuities of $h$ are the maturities of the associated vanilla options.

\paragraph{Motivating example.}
Here, we provide some intuition on the calibration of the functions $h$ and $g$ and clarify why only one of these functions is not enough {to fit the VS volatility term structure} by considering a simple calibration set containing only five futures contracts: Oct 24, Nov 24, Dec 24, Q4 24, and Cal 25, with observation date the $1^{st}$ September 2024, see Figure \ref{fig:gh_illustr}. The maturities of options on monthly contracts are five days before the start of delivery (at $T_1, T_2$, and $T_3$ for Oct 24, Nov 24, and Dec24 respectively). The quarterly contract Q4 24 expires in September at $T_1$, and the calendar futures Cal 25 has two maturities $T_1$ and $T_4$ corresponding to September and December expiries.

The VS term structure calibration consists in matching model and market log-contract prices {respectively given by} \eqref{eq:vs_price_model} and \eqref{eq:vs_market}, i.e.~for any IV smile on a futures contract with delivery period $[T_s,\, T_e]$ and maturity $T$, one should ensure that 
\begin{equation}\label{eq:TS_calibration_condition}
    \int_0^T{h(t)^2}\Sigma_t^\top(T_s, T_e) R\Sigma_t(T_s, T_e)\,\d t = \mathrm{VS}_{T}^{\mathrm{mkt}},
\end{equation}
where $\Sigma_{t}(T_s, T_e) = \dfrac{1}{T_e - T_s}\int_{T_s}^{T_e}\sigma(., \tau)g(\tau)\,\d \tau$.
This can be achieved in two ways:
\begin{itemize}
    \item either adjusting the function $T \mapsto g(T)$ on the interval $[T_s,\, T_e]$;
    \item or modifying the function $t \mapsto h(t)$ on $[0, T]$.
\end{itemize}
In our example, each monthly futures can be calibrated with the function $T \mapsto g(T)$ which is constant on each of three months with the values chosen to verify \eqref{eq:TS_calibration_condition}. However, if all the three monthly futures are calibrated with the function $g$, there is no more degree of freedom to adjust the variance swap price for the quarterly contract Q4 24. { Alternatively, one could calibrate three smiles on Oct 24, Nov 24 and Q4 24 using the value of $g$ on Dec 24 to fit the calibration condition \eqref{eq:TS_calibration_condition} for the smile on Q4 24.} 

Moreover, the function $g$ cannot calibrate the early expiries of the calendar contract Cal 25 (for instance, calibrate simultaneously the log-contract prices for the maturities $T_1$ and $T_4$) since $g$ impacts on the overall volatility level of the futures contract: {therefore $g$ can fit the smile level for only one maturity per delivery.}

{For these reasons, the function $t \mapsto h(t)$ needs to be introduced:} we could fix the value of $h$ on $[0,\, T_1]$ to satisfy \eqref{eq:TS_calibration_condition} for the smile on Cal 25 with maturity $T_1$, then adjust $h$ on $[T_1, T_4]$ to ensure \eqref{eq:TS_calibration_condition} holds its smile with maturity $T_4$. {Alternatively}, one could calibrate the value of $h$ on $[0, T_1]$ to fit \eqref{eq:TS_calibration_condition} for the smile on Q4 24 while the smiles on monthly futures are calibrated with the function $g$. However, two problems arise. 

First, is is impossible to calibrate simultaneously both smiles written on Q4 24 and on Cal 25 since they share the same maturity $T_1$. Hence, one of them should be necessarily calibrated with the function $g$. In this example, the following solution is possible:
\begin{enumerate}
    \item the function $g$ is calibrated to the smiles on Oct 24, Nov 24, Dec 24, and Cal 25 with maturity $T_1$,
    \item the function $h$ is calibrated to the smiles on Q4 24 and on Cal 25 with maturity $T_4$.
\end{enumerate}

Second, once the monthly contracts are calibrated with the function $g$, any modification of the function $h$ (up to the smile maturity) made during the calibration of smile on Q4 24, will impact the log-contract price corresponding to the monthly futures, and vice-versa. This is simply due to the fact that the log-contract price in \eqref{eq:TS_calibration_condition} depends both on $g$ on $[T_s,\, T_e]$ and on $h$ on $[0, T]$. 

That is why an iterative calibration algorithm is needed. One step of such algorithm should include the calibration of the function $g$ to fit \eqref{eq:TS_calibration_condition} for one part of smiles and then the calibration of $h$ taking into account the modifications of $g$ to verify \eqref{eq:TS_calibration_condition} for the remaining smiles (possibly breaking this condition for the smiles from the first group). If such iterative algorithm converges to a fixed-point, at the limit, the smiles from both groups will be calibrated. This fixed-point algorithm is the key idea of the proposed calibration method and will be discussed in more detail below.

\begin{figure}[H]
\begin{center}
    \includegraphics[width=0.8\linewidth]{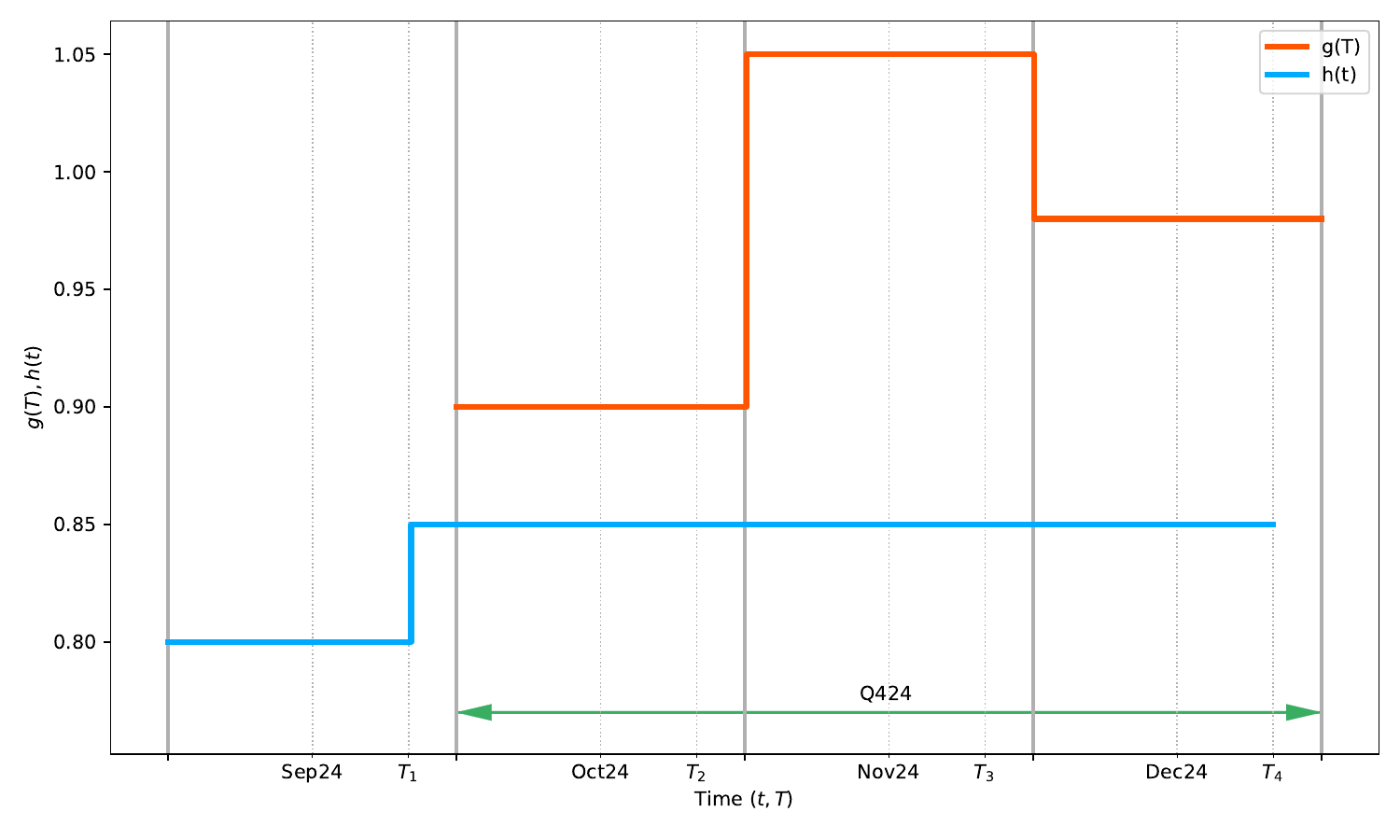}
    \caption{An example of the functions $g$ (in dark orange) and $h$ (in blue) covering the period from September 1, 2024 to January 1, 2025. The function $g$ is constant in October, in November and in December (months are separated by solid gray lines). The function $h$ is constant between the option maturities $T_1, T_2, T_3, T_4$ corresponding to five days before the start of the next months.}
    \label{fig:gh_illustr}
\end{center}
\end{figure}

\paragraph{Two groups of smiles.} From the discussion above it is clear that the implied volatility smiles should be divided into two groups: for the smiles in the first group the calibration condition is guaranteed by the function $g$, while the smiles from the second one are calibrated with the function $h$. It also follows from the example, that these groups should satisfy the following conditions.

Conditions on the smiles calibrated with the function $g$:
\begin{enumerate}
    \item \textbf{one maturity per futures}: Only one smile for a given futures contract can be calibrated with $g$,
    \item \textbf{linear independence}: The underlying futures corresponding to these smiles should be linearly independent, i.e.~no delivery period should be representable as a union of delivery periods of other futures contracts.
\end{enumerate}

Condition on the smiles calibrated with the function $h$:
\begin{enumerate}
    \item[3] \textbf{no coinciding maturities}: for a given maturity, only one smile may be chosen.
\end{enumerate}

Note that such division may be not unique. To fix the division, we decide to calibrate the smiles on monthly and quarterly contracts with the function $g$ and all the smiles on calendar contracts with the function $h$. Although the latter arbitrary division is not always available, it is almost always possible to find another division satisfying the conditions 1--3.

\paragraph{Calibration of the function $t \mapsto h(t)$.}
    We strip the function $h$ to match the variance swap prices implied from the volatility smiles in \eqref{eq:vs_carr_formula} and the variance swap prices given by the model \eqref{eq:vs_price_model} for futures contracts $F^i$, for each $i \in \{ 1, \ldots, P_{\mathrm{imp}}\}$. Fixing $j \in \{1, \ldots, N_i\}$ referring to the $j$-th smile of the $i$-th contract, we set $h$ on $[T_{j, \mathrm{prev}}^i,\, T_{j}^i)$ equal to
    \begin{equation}\label{eq:h_calibration}
         {h(t)} :=  \sqrt{\frac{\mathrm{VS}_{T_{j}^i}^{i} - \mathrm{VS}_{T_{j, \mathrm{prev}}^i}^{i}}{\displaystyle \int_{T_{j, \mathrm{prev}}^i}^{T_{j}^i} \Sigma^\top_{s}(T_s^i, T_e^i)  R \Sigma_{s}(T_s^i, T_e^i) \d s}}, \quad t \in [T_{j, \mathrm{prev}}^i,\, T_{j}^i),
    \end{equation}
    where $T_{j, \mathrm{prev}}^i$ is the previous smile maturity in the sorted list of all maturities and
    \begin{equation*}
        \mathrm{VS}_{T_{j, \mathrm{prev}}^i}^{i} :=  \int_0^{T_{j, \mathrm{prev}}^i}h^2(s)\Sigma^\top_{s}(T_s^i, T_e^i)  R \Sigma_{s}(T_s^i, T_e^i)\, \d s
    \end{equation*}
    Note that $t \mapsto h(t)$ in \eqref{eq:h_calibration} depends on the function $T \mapsto g(T)$ through the deterministic variance $\Sigma_t(T_s^i, T_e^i)$.
    
\paragraph{Calibration of the function $T \mapsto g(T)$.}
The discontinuity points of $g$ coincide with contracts' delivery start and end dates $(T_s^i)_{i \in 1, ..., P_{\mathrm{imp}}}, \; (T_e^i)_{i \in 1, ..., P_{\mathrm{imp}}}$. We adopt the following notation for the values of $g$: if the function $g$ is constant on a delivery period $[T_s^i, T_e^i]$, we will denote its value by $\bar g_{i}$. Otherwise, if other delivery periods of futures contracts are included in $[T_s^i, T_e^i]$, we will denote by $\bar g_{i-}$ the value of $g$ on $[T_s^i, T_e^i]\setminus \cup_{k \in \mathcal{J}_i} [T_s^k, T_e^k]$, where $\mathcal{J}_i$ stands for the indices of futures contracts which smiles are calibrated with the function $g$ and which delivery periods are included in $[T_s^i, T_e^i]$. We also denote by $\hat\Sigma_.$ the volatility of a futures contract delivering over a set $I$ with $g \equiv 1$ such that
\begin{equation}\label{eq:Sigma_hat}
    \hat\Sigma_{.}(I) := \frac{1}{\mathrm{Leb}(I)} \int_I\sigma (\cdot, T) \d T,
\end{equation}
where $\mathrm{Leb}(I)$ stands for the Lebesgue measure of $I$. With a slight abuse of notation, we use interchangeably $\Sigma_{.}([T_s^i, T_e^i])$ and $\Sigma_{.}(T_s^i, T_e^i)$, {recall \eqref{eq:futures_contracts_volatility}}.

Hence, the deterministic volatility component of the contract $F^i$ is given by
\begin{equation}\label{eq:Sigma_decomposition}
    \Sigma_{.}(T_s^i,\, T_e^i) = \sum_{k \in \mathcal{J}_i} \omega_{i}^k \textcolor{black}{\bar g_k} \hat\Sigma_{.}(T_s^k,\, T_e^k) + \Bigl(1-\sum_{k \in \mathcal{J}_i} \omega_{i}^k\Bigl)\textcolor{black}{\bar g_{i-}}\hat\Sigma_{.}([T_s^i, T_e^i]\setminus \cup_{k \in \mathcal{J}_i} [T_s^k, T_e^k]),
\end{equation}
where 
\begin{equation}\label{eq:omega_def}
    \omega_{i}^k = \dfrac{T_e^k - T_s^k}{T_e^i - T_s^i}
\end{equation}
denotes the relative weight of the contract $F^k$ volatility in the volatility of $F^i$ which is proportional to the length of its delivery period.

For the volatility smiles with underlying futures contract $F^i$ and maturity $T^i_j$ (for some $j \in \{1, \ldots, N_i\}$, uniquely determined thanks to condition 1 of smile division), such that $\mathcal{J}_i = \varnothing$, matching the log-contract prices \eqref{eq:TS_calibration_condition} leads to

\begin{equation}\label{eq:g_calibration_1}
    \textcolor{black}{g(T)} = \bar{g}_i = \sqrt{\frac{\mathrm{VS}_{T_j^i}^{i}}{ \displaystyle \int_0^{T_j^i} \textcolor{black}{h(s)^2} \hat\Sigma_{s}^\top(T_s^i, T_e^i)  R \hat\Sigma_{s}(T_s^i, T_e^i) \d s}}, \quad T \in [T_s^i, T_e^i],
\end{equation}
since $\Sigma_{.}(T_s^i, T_e^i) = \bar{g}_i\hat\Sigma_{.}(T_s^i, T_e^i)$ by \eqref{eq:Sigma_decomposition}. Note that, since $j$ is uniquely determined for each contract, it can be omitted in the notation $\bar{g}_i$.

For the contracts $F^i$ not calibrated yet, i.e.~such that $\mathcal{J}_i \not= \varnothing$, the function $g$ should be chosen
\begin{equation}\label{eq:g_calibration_2}
    \textcolor{black}{g(T)} = \textcolor{black}{\bar g_{i-}}, \quad T \in [T_s^i, T_e^i] \setminus \bigcup_{k \in \mathcal{J}_i} [T_s^k, T_e^k],
\end{equation}
where $\bar g_{i-}$ is a solution of the quadratic equation
\begin{equation}
 \mathrm{VS}_{T_{j}^i}^{i} = \int_0^{T_j^i} \textcolor{black}{h(s)^2} \Sigma_{s}^\top(T_s^i, T_e^i) R \Sigma_{s}(T_s^i, T_e^i) \d s,
\end{equation} 
with the deterministic volatility component given by \eqref{eq:Sigma_decomposition} with $\{\bar g_{k}\}_{k \in \mathcal{J}_i}$ already calibrated by \eqref{eq:g_calibration_1}. Note that the set $[T_s^i, T_e^i] \setminus \bigcup_{k \in \mathcal{J}_i} [T_s^k, T_e^k]$ is non-empty thanks to the linear independence condition imposed on the smiles division.

\paragraph{Fixed-point calibration algorithm.}
Note that the model is calibrated when the equation \eqref{eq:TS_calibration_condition} is verified for all smiles simultaneously.
Since the functions $g$ and $h$ are interdependent, we propose the following algorithm which is supposed to converge to the desired solution.

\begin{algorithm}[H]
\caption{Fixed-point calibration algorithm}\label{alg:fixed-point}
\begin{algorithmic}
\State \textbf{set} $\textcolor{black}{h^0(t)} \equiv 1$ and $\textcolor{black}{g^0(T)} \equiv 1 $
\State $n \gets 0$
\While{$||\textcolor{black}{g^{n}} - \textcolor{black}{g^{n-1}}||_{\infty} > \epsilon$}
\State \textbf{recalculate} the deterministic volatilities $(\Sigma_.(T_s^i, T_e^i))_{i \in \{1, \ldots, P_{\mathrm{imp}}\}}$
\State \textbf{calibrate} the function $\textcolor{black}{h^{n+1}}$ using \eqref{eq:h_calibration} with $g = g^n$.
\State \textbf{calibrate} the function $\textcolor{black}{g^{n+1}}$ using \eqref{eq:g_calibration_1} and \eqref{eq:g_calibration_2} with $h = h^{n + 1}$.
\State $n \gets n + 1$
\EndWhile
\end{algorithmic}
\end{algorithm}

Denoting one iteration of this algorithm by $\psi: \textcolor{black}{g^{n}} \mapsto \textcolor{black}{g^{n+1}}$, we note that model is calibrated if and only if $\textcolor{black}{g}$ is a fixed-point of this mapping, i.e.~$\psi(\textcolor{black}{g}) = \textcolor{black}{g}$.

Under additional conditions on the implied volatility data consistency, the following result holds:

\begin{theorem}\label{Thm:fixed_point}
    Suppose that all the smiles calibrated with $h$ have the same underlying $F^1$ and that $\mathcal{J}_i = \varnothing$ for the remaining contracts $F^i, \; i \in \{2, \ldots, P_{\mathrm{imp}}\}$. Suppose also that all the instantaneous correlations between the futures contracts are positive. Then, 
    \begin{enumerate}
        \item[(i)] $\psi$ admits a fixed-point if
    \begin{equation}\label{eq:fixed-point_condition} 
        \sum_{i=2}^{P_{\mathrm{imp}}}\dfrac{\mathrm{VS}_{T^i}^i}{\sum\limits_{k=1}^{N_1}\Big(\mathrm{VS}^1_{T^1_k} - \mathrm{VS}^1_{T^1_{k-1}}\Big)\dfrac{1}{\|Q_k\|}\int\limits_{[0,\,T^i]\cap[T_{k-1},\, T_k]}\hat\Sigma_{s}^\top(T_s^i, T_e^i)R\hat\Sigma_{s}(T_s^i, T_e^i)\, \d s} < 1,
    \end{equation} 
    where $(Q_k)_{ij} := \omega_{1}^i\omega_{1}^j \int_{T^1_{k-1}}^{T^1_k} \hat\Sigma_{s}^\top(T_s^i, T_e^i)R\hat\Sigma_{s}(T_s^j, T_e^j)\,\d s$ and $\omega_1^i$ are defined by \eqref{eq:omega_def}, $i, j = \{2, \ldots, {P_{\mathrm{imp}}}\}$.
        \item[(ii)] All the fixed points are stable. 
        \item[(iii)] If the contract $F^1$ has only one maturity, the fixed point is unique. 
    \end{enumerate}
\end{theorem}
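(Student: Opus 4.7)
The plan is to regard one pass of the algorithm as a concrete map $\psi$ on the vector $\vec g:=(\bar g_2,\ldots,\bar g_{P_{\mathrm{imp}}})\in\mathbb{R}_+^{P_{\mathrm{imp}}-1}$ and to exploit its positivity, monotonicity, and sub-homogeneity. Plugging \eqref{eq:h_calibration} into \eqref{eq:g_calibration_1} through the decomposition \eqref{eq:Sigma_decomposition} gives
\begin{equation*}
    \psi(\vec g)_i^{2}=\frac{\mathrm{VS}_{T^i}^{i}}{\sum_{k=1}^{N_1}\bigl(\mathrm{VS}^1_{T^1_k}-\mathrm{VS}^1_{T^1_{k-1}}\bigr)A_i(k)/N_k(\vec g)},
\end{equation*}
with $A_i(k):=\int_{[0,T^i]\cap[T^1_{k-1},T^1_k]}\hat\Sigma_s^\top(T_s^i,T_e^i)R\hat\Sigma_s(T_s^i,T_e^i)\,\d s$ and $N_k(\vec g)$ a quadratic polynomial of the form $\vec g^\top Q_k\vec g+2 b_k^\top\vec g+c_k$, the terms $b_k\geq 0,\,c_k\geq 0$ coming from the $\bar g_{1-}=1$ part of \eqref{eq:Sigma_decomposition}. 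Positivity of the instantaneous correlations makes every entry of $Q_k$ non-negative, so $\vec g\mapsto N_k(\vec g)$ is componentwise non-decreasing; consequently $\psi$ is continuous on the non-negative orthant, maps it into itself, and is monotone increasing.

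For part (i), I would apply Brouwer's fixed-point theorem on a closed ball of the non-negative orthant. The straightforward bound $N_k(\vec g)\leq\|Q_k\|\,\|\vec g\|^2+2\|b_k\|\,\|\vec g\|+c_k$ together with summation over $i$ yields
\begin{equation*}
    \|\psi(\vec g)\|^2\leq\|\vec g\|^2\sum_{i=2}^{P_{\mathrm{imp}}}\frac{\mathrm{VS}_{T^i}^{i}}{\sum_{k=1}^{N_1}(\mathrm{VS}^1_{T^1_k}-\mathrm{VS}^1_{T^1_{k-1}})A_i(k)/\|Q_k\|}+o(\|\vec g\|^2)
\end{equation*}
as $\|\vec g\|\to\infty$. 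Hypothesis \eqref{eq:fixed-point_condition} asserts exactly that the leading constant $\rho$ is strictly less than $1$, so for $M$ large enough the compact convex set $\{\vec g\geq 0:\|\vec g\|\leq M\}$ is invariant under $\psi$, and Brouwer's theorem delivers a fixed point.

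For part (ii), my plan is to linearise $\psi$ at a fixed point $\vec g^{*}$ and bound the spectral radius of $D\psi(\vec g^{*})$. Since each $\partial_{g_j}N_k\geq 0$, the Jacobian $D\psi(\vec g^{*})$ is entrywise non-negative, so by Perron--Frobenius its spectral radius is attained on a non-negative eigenvector. The sub-homogeneity identity $\psi(c\vec g)\leq c\,\psi(\vec g)$ for $c\geq 1$ (immediate from $N_k(c\vec g)\leq c^2N_k(\vec g)$ when $b_k,c_k\geq 0$), differentiated at $c=1^+$, gives $D\psi(\vec g^{*})\vec g^{*}\leq\vec g^{*}$ componentwise, with strict inequality as soon as the contribution of $\bar g_{1-}$ is non-trivial. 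Irreducibility of $D\psi(\vec g^{*})$, which follows from the strictly positive correlations assumption, then forces the Perron eigenvalue to be strictly below $1$, giving local asymptotic stability of every fixed point.

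For part (iii), with $N_1=1$ the function $h$ is a single constant on $[0,T^1]$ and $\psi$ collapses to $\psi(\vec g)_i=\alpha_i\sqrt{N_1(\vec g)}$ with $\alpha_i:=\sqrt{\mathrm{VS}_{T^i}^{i}/\bigl(\mathrm{VS}_{T^1}^{1}D_i\bigr)}$ and $D_i:=\int_0^{T^i}\hat\Sigma_s^\top(T_s^i,T_e^i)R\hat\Sigma_s(T_s^i,T_e^i)\,\d s$. Any fixed point must then be proportional to $\vec\alpha$, say $\vec g^{*}=t\vec\alpha$ for some $t>0$, and $t=\sqrt{N_1(t\vec\alpha)}$ reduces to the scalar quadratic
\begin{equation*}
    t^2(1-\vec\alpha^\top Q_1\vec\alpha)-2t\,\vec\alpha^\top b_1-c_1=0.
\end{equation*}
The assumed condition \eqref{eq:fixed-point_condition} combined with the elementary bound $\vec\alpha^\top Q_1\vec\alpha\leq\|Q_1\|\,\|\vec\alpha\|^2$ forces $\vec\alpha^\top Q_1\vec\alpha<1$; since $c_1>0$ and $b_1\geq 0$, the quadratic has exactly one positive root, yielding the unique fixed point. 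The main obstacle I anticipate is part (ii): translating sub-homogeneity into a strict spectral bound requires verifying both irreducibility of $D\psi(\vec g^{*})$ and that the inequality $D\psi(\vec g^{*})\vec g^{*}\leq\vec g^{*}$ is strict in at least one coordinate, for which the cone and Hilbert-metric techniques for positively monotone maps (in the style of Krause and Nussbaum) provide the cleanest framework.
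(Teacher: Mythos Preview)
Your argument for parts (i) and (iii) is correct and coincides with the paper's proof: Brouwer on a large ball $\{\vec g\geq 0:\|\vec g\|\leq R\}$ using the asymptotic growth bound furnished by \eqref{eq:fixed-point_condition}, and, when $N_1=1$, reduction to a scalar quadratic along the ray $\mathbb{R}_+\vec\alpha$ with unique positive root because $\vec\alpha^\top Q_1\vec\alpha<1$.

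For part (ii) you arrive at the same key inequality $D\psi(\vec g^{*})\vec g^{*}\leq\vec g^{*}$ that the paper establishes, but via sub-homogeneity rather than by direct differentiation of the explicit formula for $\psi_i$.  The paper's route is shorter here and dissolves the ``main obstacle'' you anticipate: computing $\partial_{g_j}\psi_i(\vec g^{*})$ from \eqref{eq:h_formula_2}--\eqref{eq:psi_formula} gives
\[
\sum_{j}\frac{g_j^{*}}{g_i^{*}}\,\partial_{g_j}\psi_i(\vec g^{*})
=\frac{1}{S_i^\top h^2(\vec g^{*})}\sum_{k}S_i^k h_k^2(\vec g^{*})\,
\frac{\vec g^{*\top}Q_k\vec g^{*}+\vec g^{*\top}p_k}{\vec g^{*\top}Q_k\vec g^{*}+2\vec g^{*\top}p_k+r_k}<1,
\]
strictly in \emph{every} coordinate, precisely because the $\bar g_{1-}$ contribution yields $r_k>0$.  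Hence the conjugated Jacobian $A:=\mathrm{diag}(\vec g^{*})^{-1}D\psi(\vec g^{*})\,\mathrm{diag}(\vec g^{*})$ has all row sums strictly below $1$, so $\|A\|_\infty<1$ and local stability is immediate---no Perron--Frobenius, irreducibility, or Hilbert-metric machinery is needed.  Your sub-homogeneity observation is a clean conceptual explanation of \emph{why} the inequality holds, but the explicit computation both delivers the strict inequality for free and avoids the extra hypotheses you were preparing to verify.
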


\begin{proof}
    The proof is given in Appendix \ref{S:thm_fixed-point}.
\end{proof}

This result is clearly partial as it does not cover the case of multiple underlyings corresponding to smiles calibrated with $h$ and does not admit the nested contracts for smiles calibrated with $g$. However, the numerical experiments demonstrated the existence of a unique stable fixed point for all the test cases, even the ones not covered by Theorem \ref{Thm:fixed_point}. Thus, we believe that the provided result can be proved in a much more general case, though the direct approach used in the proof of Theorem \ref{Thm:fixed_point} is not applicable there.

{
\begin{remark}
Note that Theorem \ref{Thm:fixed_point} provides condition \eqref{eq:fixed-point_condition} which seems to be a universal condition ensuring that the model can be calibrated. Indeed, it can be interpreted as a no-arbitrage condition for the variance swaps. For example, consider the smile on Cal 25 with maturity $T_{\mathrm{Dec}}$ calibrated with $h$ and Q2 25 with maturity $T_{\mathrm{Mar}}$ calibrated by $g$. In this case, \eqref{eq:fixed-point_condition} reads
\begin{equation} \label{eq:condition_no_arbitrage_VS_example}
     \dfrac{\mathrm{VS}_{T_{\mathrm{Mar}}}^{\mathrm{Q2  }\ {25}}}{\mathrm{VS}^{\mathrm{Cal\;25}}_{T_{\mathrm{Dec}}}}\dfrac{(\frac14)^2 \int_{0}^{T_{\mathrm{Dec}}} \hat\Sigma_{s}^\top(\mathrm{Q2\;25})R\hat\Sigma_{s}(\mathrm{Q2\;25})\,\d s}{\int\limits_{0}^{T_{\mathrm{Mar}}}\hat\Sigma_{s}^\top(\mathrm{Q2\;25})R\hat\Sigma_{s}(\mathrm{Q2\;25})\,\d s} < 1.
\end{equation}
Since in the calibrated model
\begin{equation}
    \mathrm{VS}_{T_{\mathrm{Mar}}}^{\mathrm{Q2\;25}} = \int\limits_{0}^{T_{\mathrm{Mar}}}h(s)^2\Sigma_{s}^\top(\mathrm{Q2\;25})R\Sigma_{s}(\mathrm{Q2\;25})\,\d s = h_1^2g_{\mathrm{Q2\;25}}^2\int\limits_{0}^{T_{\mathrm{Mar}}}\hat\Sigma_{s}^\top(\mathrm{Q2\;25})R\hat\Sigma_{s}(\mathrm{Q2\;25})\,\d s,
\end{equation}
the condition \eqref{eq:condition_no_arbitrage_VS_example} is equivalent to
\begin{equation}
    \left(\dfrac{1}{4}\right)^2\int_{0}^{T_{\mathrm{Dec}}}h(s)^2\Sigma_{s}^\top(\mathrm{Q2\;25})R\Sigma_{s}(\mathrm{Q2\;25})\,\d s < \mathrm{VS}^{\mathrm{Cal\;25}}_{T_{\mathrm{Dec}}}.
\end{equation}
The integrand in the left-hand side expression is the part of variance of Cal 25 contract corresponding to Q2 25. Since the covariances are supposed to be positive, this part of variance is smaller than the whole variance of Cal 25, and thus, the price of the variance swap on Cal 25 should be greater than the integral on the left. Thereby, \eqref{eq:fixed-point_condition} is a condition describing the consistency of the variance swap data with historically calibrated deterministic volatility functions $\Sigma_{.}(T_s, T_e)$.
\end{remark}
}

\subsection{Smile shape calibration}\label{sect:smile_shape_calib}

\paragraph{Parametrization of the correlations.}\label{section:spot_vol_correls}
Fix $i \in \{ 1, \ldots, P_{\mathrm{imp}}\}$. For the approximated futures contract $\widetilde{F}^i$, the smile skew is determined by the ``spot-vol'' correlation 
\begin{equation}\label{eq:spot_vol_correl}
    \rho_{i}(t) = \dfrac{\left\langle dV,\, d\log \widetilde F^i\right\rangle_t}{\sqrt{\langle dV\rangle_t}\sqrt{\langle d\log \widetilde F^i\rangle_t}} 
    = \dfrac{\Sigma^\top_{t}(T_s^i, T_e^i)\tilde\rho}{\sqrt{\Sigma_{t}^\top(T_s^i, T_e^i)R\Sigma_{t}(T_s^i, T_e^i)}} = \hat\rho . \dfrac{L^\top\Sigma_{t}(T_s^i, T_e^i)}{\|L^\top\Sigma_{t}(T_s^i, T_e^i)\|}, \quad t \geq 0,
\end{equation}
where $\mathbf{1}$ denotes a vector filled with ones, and $L$ is a lower triangular matrix from the Cholesky decomposition \eqref{eq:Cholesky}. As $\hat\rho$ lies in $\{\rho\colon\, \|\rho\| \leq 1\}$, $\tilde\rho$ should be in $\{\rho\colon\, \|L^{-1}\rho\| \leq 1\}$ for the extended covariance matrix of $(W, B) \in \R^{N + 1}$ to be well-defined.

Since the skew of the the implied volatility smile is determined by the correlations \eqref{eq:spot_vol_correl}, it is more natural to use them and not $\tilde\rho$ in the calibration routine for several reasons. First, if the number of contracts is smaller than the number of historical risk factors $N$, this will reduce the number of parameters. Second, one spot-vol correlation impacts only one smile related to the corresponding futures contract, whereas the coefficients $\tilde\rho$ impact all the smiles in a way difficult to interpret. Thus, we expect to see a better solver behavior when the variables being optimized are the ``spot-vol'' correlations. {Finally, even if one decides to calibrate $\tilde\rho$ directly, the ``spot-vol'' correlations may provide a reasonable initial guess given by the calibrated correlations in the multi-contract SSVI parametrization described in Section \ref{sect:vs_quote_extr}.}

In order to reconstruct $\tilde\rho$ from a set of ``spot-vol'' correlations $(\rho_{1}^*, \ldots, \rho_{P}^*)$, we find numerically a solution $\hat\rho^*$ to the following optimization problem
\begin{equation}\label{eq:spot_vol_to_rho_tilde}
    \min_{\hat\rho\colon\ \|\hat\rho\| \leq 1} \sum_{i=1}^P \left\| {\int_0^{T_{C_M}} 
    \hat\rho^\top\,{L^\top\Sigma_{t}(T_s^i, T_e^i)}
    \, \d t} - \rho_{i}^* \int_0^{T_{C_M}}\|L^\top\Sigma_{t}(T_s^i, T_e^i)\| \, \d t\right\|^2
\end{equation}
and set $\tilde\rho = L\hat\rho^*$.

\paragraph{Optimization problem.}
We denote the parameters being calibrated by $\mathcal{P} = (\rho,\, c,\, x)$, where the spot-vol correlations $\rho$ are transformed to $\tilde\rho$ by \eqref{eq:spot_vol_to_rho_tilde} and \eqref{eq:get_rho_tilde_from_rho_hat}. The optimization problem then reads
\begin{equation}\label{eq:optimization_problem}
    \min_{\mathcal{P}} \sum_{i=1}^{P_{\mathrm{imp}}} \sum_{j=1}^{N_i}\sum_{K \in \mathcal{K}_j^i}\displaystyle \left(\dfrac{\mathrm{Call}^{\mathrm{mkt}, i}(T_{j}^i, K) -\mathrm{Call}^{\mathrm{Model}, i}(T_{j}^i, K)}{\mathcal{V}(K,T, \sigma_{\mathrm{IV}}^{\mathrm{mkt}, i}(T_{j}^i, K))}\right)^2, 
\end{equation}

where $\mathcal{V}(K,T, \sigma_{\mathrm{IV}}^{\mathrm{mkt}, i}(T_{j}^i, K))$ denotes the Black-Scholes vega corresponding to the market implied volatility $\sigma_{\mathrm{IV}}^{\mathrm{mkt}, i}(T_{j}^i, K)$.

\section{Numerical results} \label{s:numerical_results}

\subsection{Calibration results} \label{S:calib_res}

In this section, we illustrate and detail all three calibration steps of the HJM model \eqref{eq:HJM_def}, cf.~Figure \ref{F:calibration_methodology}, onto the German power market, and we postpone to Appendix \ref{ss:ttf_calibration_results} the calibration results obtained on the TTF gas market.

To assess the quality of the calibration we consider the differences between model quantities and those used to calibrate the model: historical correlations and volatilities of rolling futures contracts, variance swap variance term structure and the implied volatility smiles. We also validate the quality of the Kemna-Vorst approximation.
The calibrated model can be used to represent the market, and especially to deduce market quantities that are not quoted: the smile for all monthly contracts especially those which are not quoted, the at-the-money volatility of daily, monthly and quarterly contracts and finally the instantaneous correlations. Finally, we check the coherence between calibrated and interpolated quantities.

\subsubsection{Step 1: Joint historical covariances -- implied VS variances calibration results} \label{ss:calibration_cov_VS_results}

The first calibration step aims at calibrating the $N := 1 + N_{s} + N_{c}$ $L$-$S$-$C$ factors' parameters $\sigma_L$, $(\sigma_{S, i}, \tau_{S, i})_{i \in \{1, \ldots, N_s\}}$, $(\sigma_{C, i}, \tau_{C, i})_{i \in \{1, \ldots, N_c\}}$ as well as their correlation matrix $R$, following the methodology described in Section \ref{s:joint_cov_vs_calib}.

We consider the German power i.e.~DE PW futures market with historical covariances estimated from daily log returns' time series running from January $1^{\text{st}}$ 2023 to July $1^{\text{st}}$ 2024, and the estimated realized futures correlation structure is displayed on the bottom left corner of Figure \ref{F:estimated_correlation_term_structure}.

In order to avoid to have more calibratable parameters than market quantities to calibrate on, one needs to ensure that
\begin{equation} \label{eq:underparametrization_inequality}
    \frac{N(N+3)}{2} - 1 \leq P_{\mathrm{smiles}} + \frac{P_{\mathrm{hist}}\left( P_{\mathrm{hist}}+1 \right)}{2}.
\end{equation}
Furthermore, although season-averaging effects tend to distort Principal Components Analysis (PCA) results, the number of Principal Components (PCs) required to reach a given threshold of explained variance gives some insights on the number of factors to model the futures curve, see for example \cite{koekebakker2005forward}, \cite{Andersen2010}, \cite{gardini2023heath}. Figure \ref{F:pca_de_market} displays the PCA's results for the rolling futures contracts' daily log returns. Note that $5$ PCs are sufficient to explain at least $95\%$ of variance, while $7$ PCs are required to reach $99\%$, which is consistent with 
    \begin{enumerate}
        \item our quality of fit results for the historical calibration described in Figures \ref{F:calibration_fits};
        \item the observations made by \cite{feron2024estimation}[Figure 3] when applying BIC and AIC statistical measurements to the German market, where they found that $5$ factors appears to be a good trade-off between quality of fit and model complexity.
    \end{enumerate}

    \begin{figure}[H]
    \begin{center}
        \includegraphics[width=0.6\linewidth]{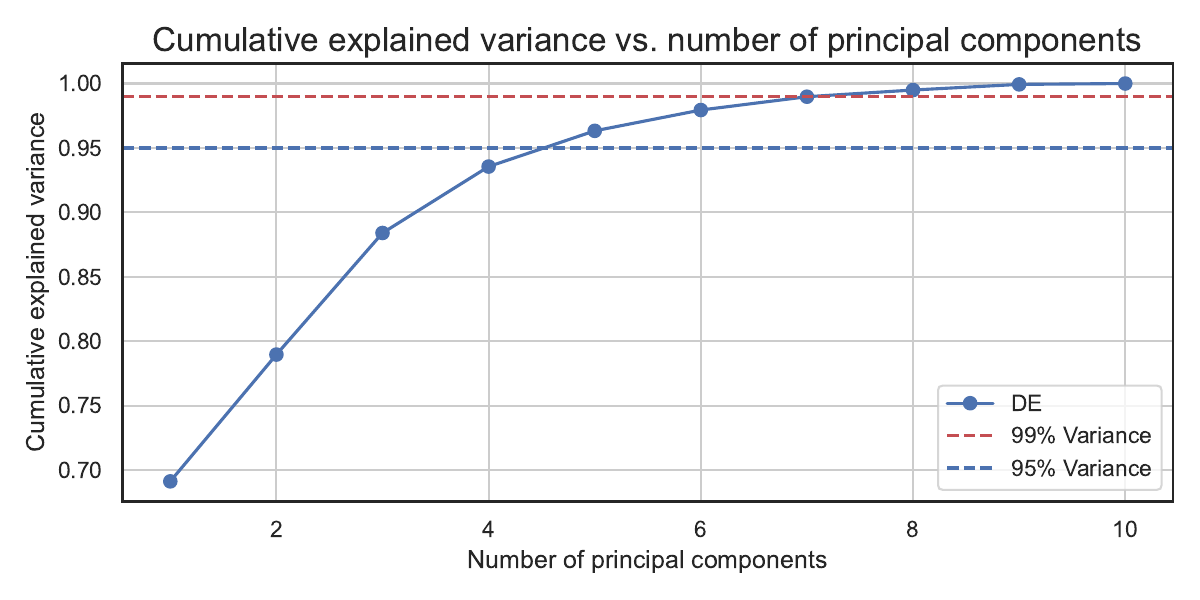}
        \caption{Principal component analysis performed on the raw data of rolling futures contracts reconstituted from the German market from January $1^{\text{st}}$ 2023 to July $1^{\text{st}}$ 2024.}
        \label{F:pca_de_market}
    \end{center}
\end{figure}

The results of the joint historical covariance and VS variances calibration is then summarized in the left matrix of Figure \ref{F:calibration_fits} where, given there is one $L$-factor, we display in a single matrix the losses $J_{1}$ and $J_{2}$ from \eqref{eq:def_loss_covariance_fit}--\eqref{eq:def_loss_vs_variances_fit} respectively for all possible pairs $(N_s,N_c)$ of $S$- and $C$-factors such that the under-parametrization inequality \eqref{eq:underparametrization_inequality} is satisfied for the strict lower part of the matrix, and is strictly violated by one additional factor on the main diagonal, corresponding to over-parametrization. For each cell, we used $100$ random initializations of $a_0$ and kept the best calibration fit. Notice that the overall joint loss $J^{\lambda}$ from \eqref{eq:def_loss}, for $\lambda=0.5$, indeed decreases as the number of factor increases when moving from the lower left angle to the main diagonal of the fit matrix. For a fixed number of factors $N$, we also highlighted by a bold square the model with the best fit along each diagonal.

Furthermore, we show in Figure \ref{F:joint_calibration_fit_plots} the respective quality of fits obtained on the selected rolling contracts' historical volatility and correlation term structures, as well as the fit of the VS volatility term structure. Notice that the fit is not perfect, yet it's possible to additionally fine-tune the parameter $\lambda$ in the joint calibration loss function $J^{\lambda}$ in order to better fit either the historical target covariance values (with $\lambda>0.5$) or the implied VS volatility term structure (with $\lambda<0.5$). In the next step, we calibrate the functions $g$ and $h$ in order to actually fit perfectly the latter in the case the joint calibration has not succeeded in fitting it well enough to fit the smile shapes in the latter calibration step.

We did the same numerical experiments for the historical covariance fit only by taking $\lambda=1$ in the loss $J^{\lambda}$, and display the values of the losses in the right hand-side matrix of Figure \ref{F:calibration_fits}, and the quality of fits in Figure \ref{F:historical_calibration_fit_plots}. Notice that the quality of fit of both the historical volatility and correlation term structures is much better, but the associated VS volatility term structure is systematically under-estimated by such historical calibration.

\begin{figure}[H]
    \centering
    \adjustbox{trim={0.1cm 0.1cm 0.1cm 0.1cm},clip}{%
        \includegraphics[width=0.5\textwidth]{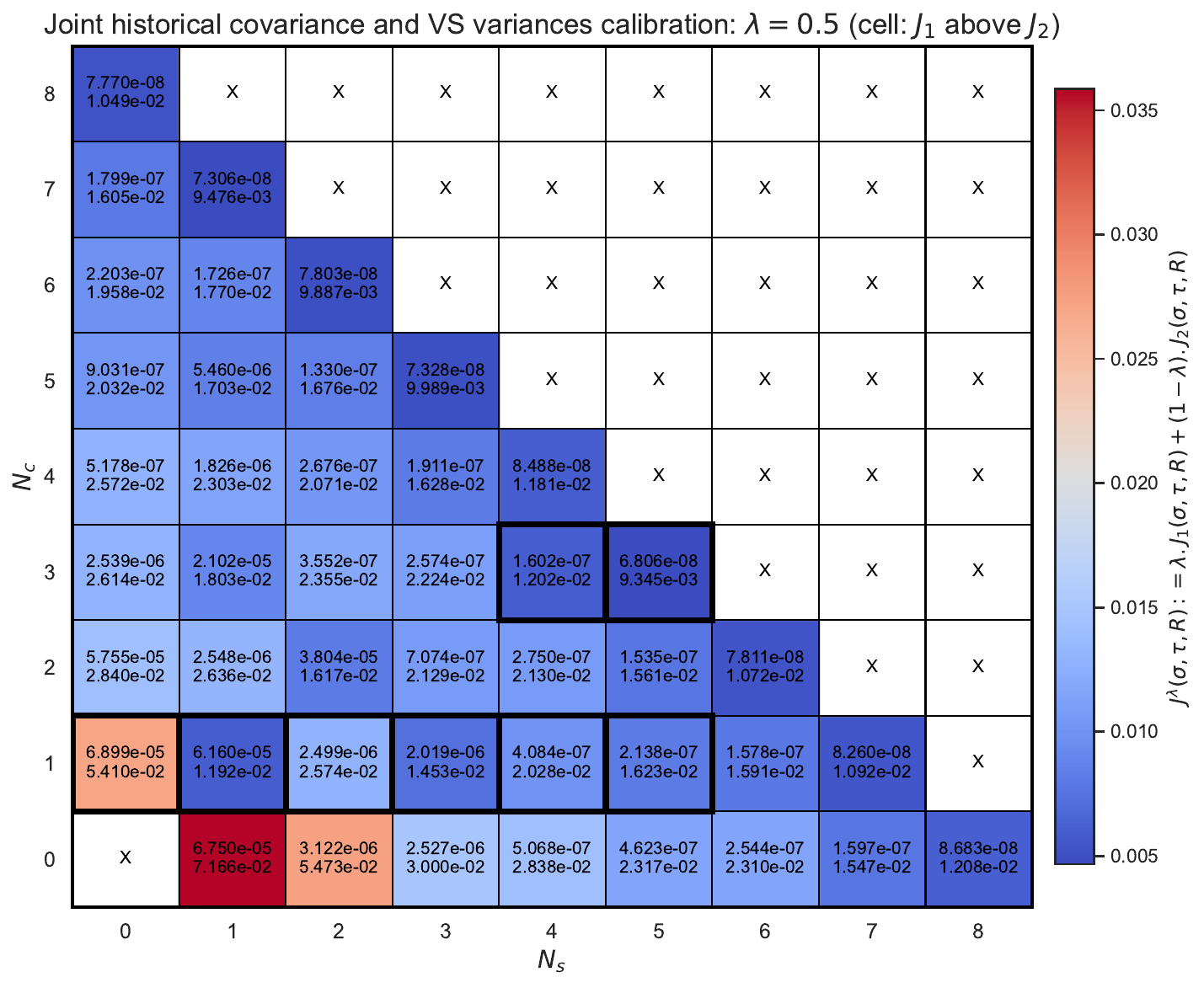}
    }
    \adjustbox{trim={0.1cm 0.1cm 0.1cm 0.1cm},clip}{%
        \includegraphics[width=0.48\textwidth]{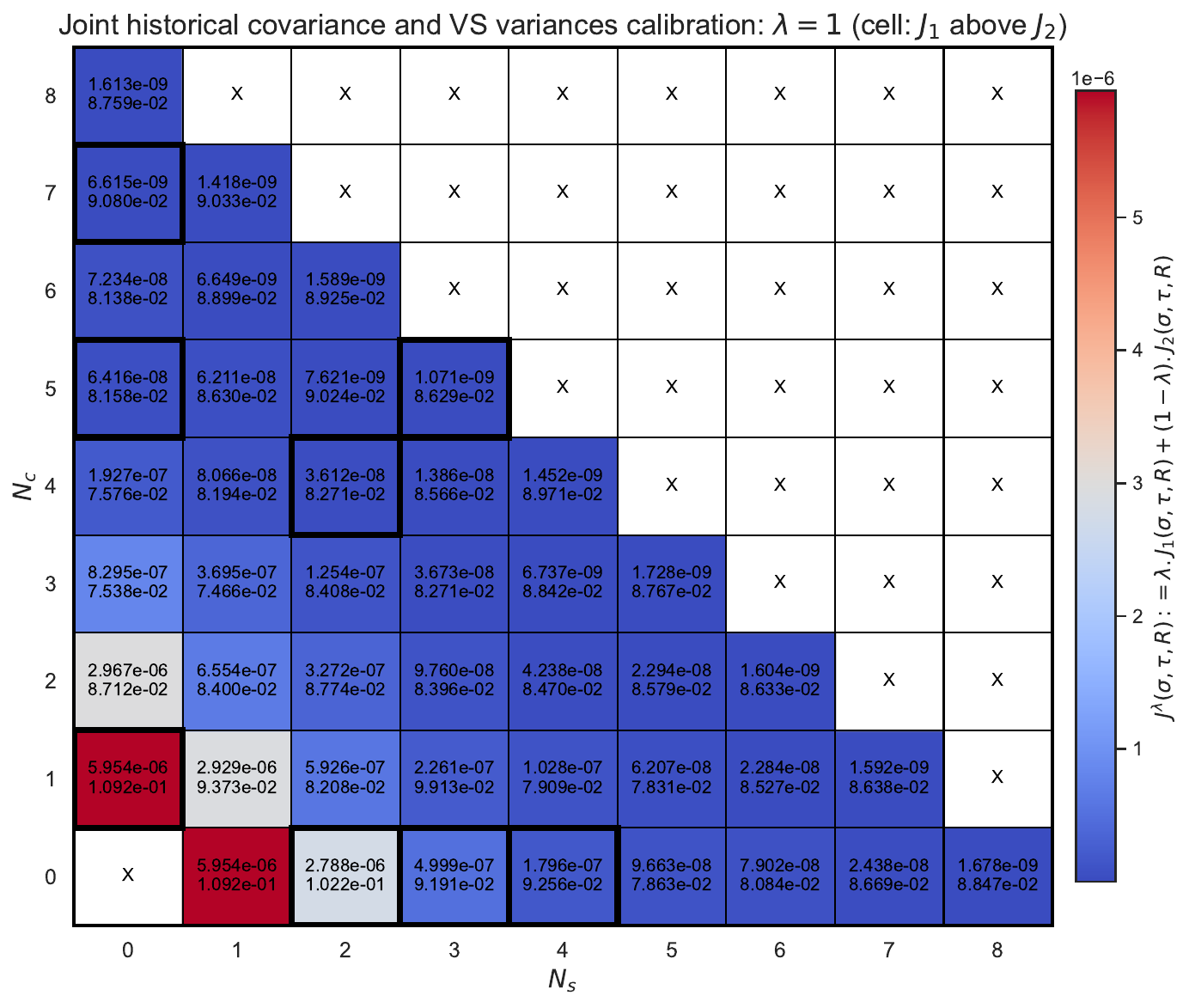}
    }
    \caption{Calibration results of step 1 for the German power market for all possible $L$-$S$-$C$ model specifications up to height factors. Left (resp. Right): Best fits for the joint historical covariance and VS variances calibration (resp. the historical covariance calibration) when taking $\lambda=0.5$ (resp. $\lambda=1$) in the loss function $J^{\lambda}$ \eqref{eq:def_loss}, for $100$ random initializations of $\tau(a_0)$, varying the number of $S$ and $C$-factors until violation of the under-parametrization inequality \eqref{eq:underparametrization_inequality}. In each cell, the above number is the value of $J_{1}$ from \eqref{eq:def_loss_covariance_fit} and below is the value of $J_{2}$ from \eqref{eq:def_loss_vs_variances_fit} (which we do not calibrate on in the right-hand side results). For each matrix, we put bold squares on the best fits along each diagonal where the number of risk factors is constant.}
    \label{F:calibration_fits}
\end{figure}

\begin{figure}[H]
    \centering
    \adjustbox{trim={0cm 0.2cm 0cm 0.2cm},clip}{%
        \includegraphics[width=0.9\textwidth,angle=0]{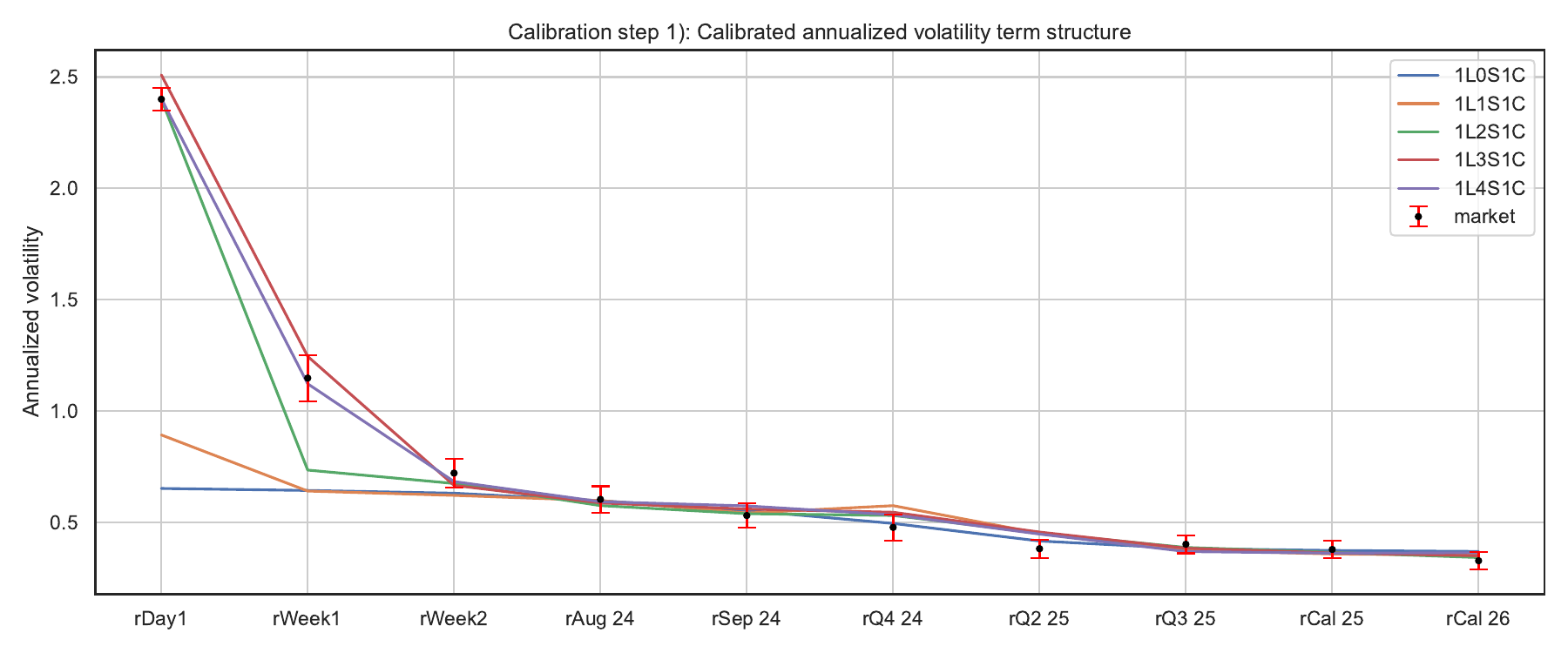}
    }

    \vspace{0.5cm} 

    \adjustbox{trim={0cm 0.2cm 0cm 0.2cm},clip}{%
        \includegraphics[width=0.9\textwidth,angle=0]{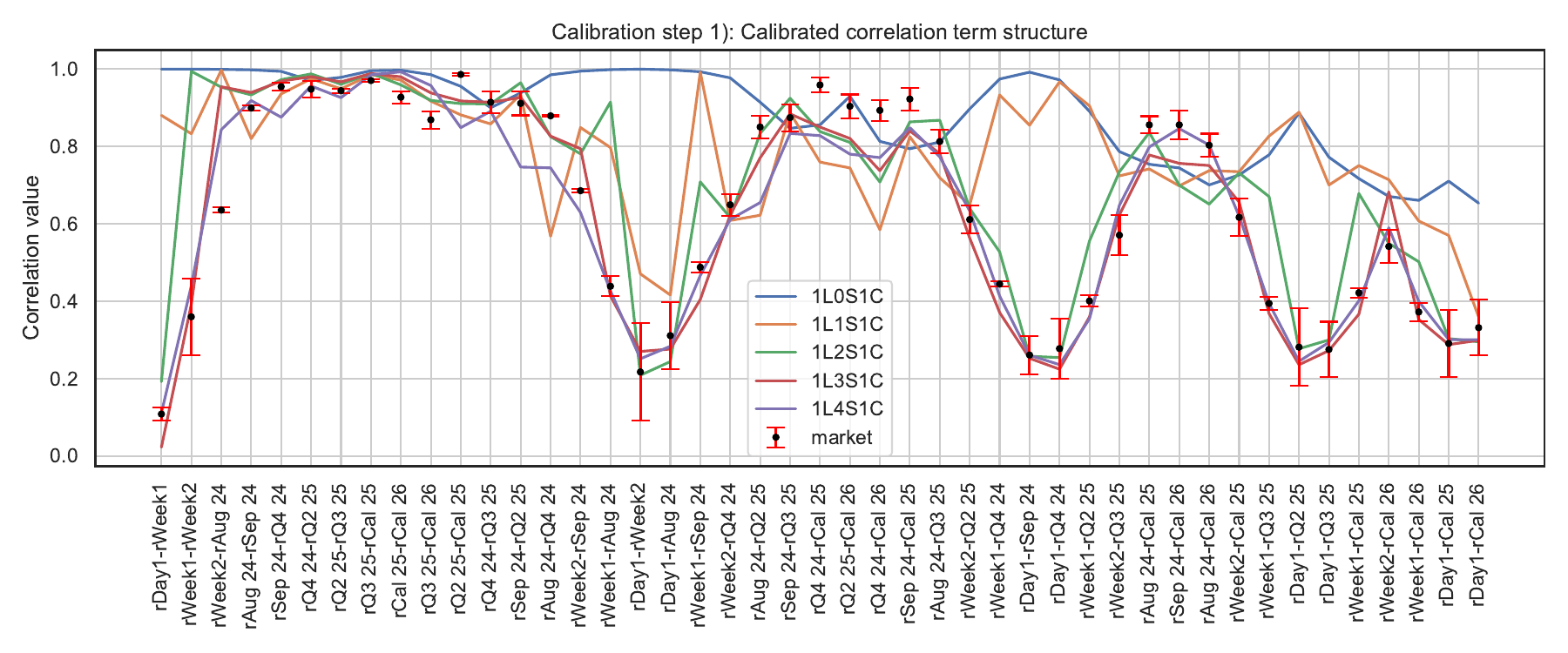}
    }

    \vspace{0.5cm} 

    \adjustbox{trim={0cm 0.2cm 0cm 0.2cm},clip}{%
        \includegraphics[width=0.9\textwidth,angle=0]{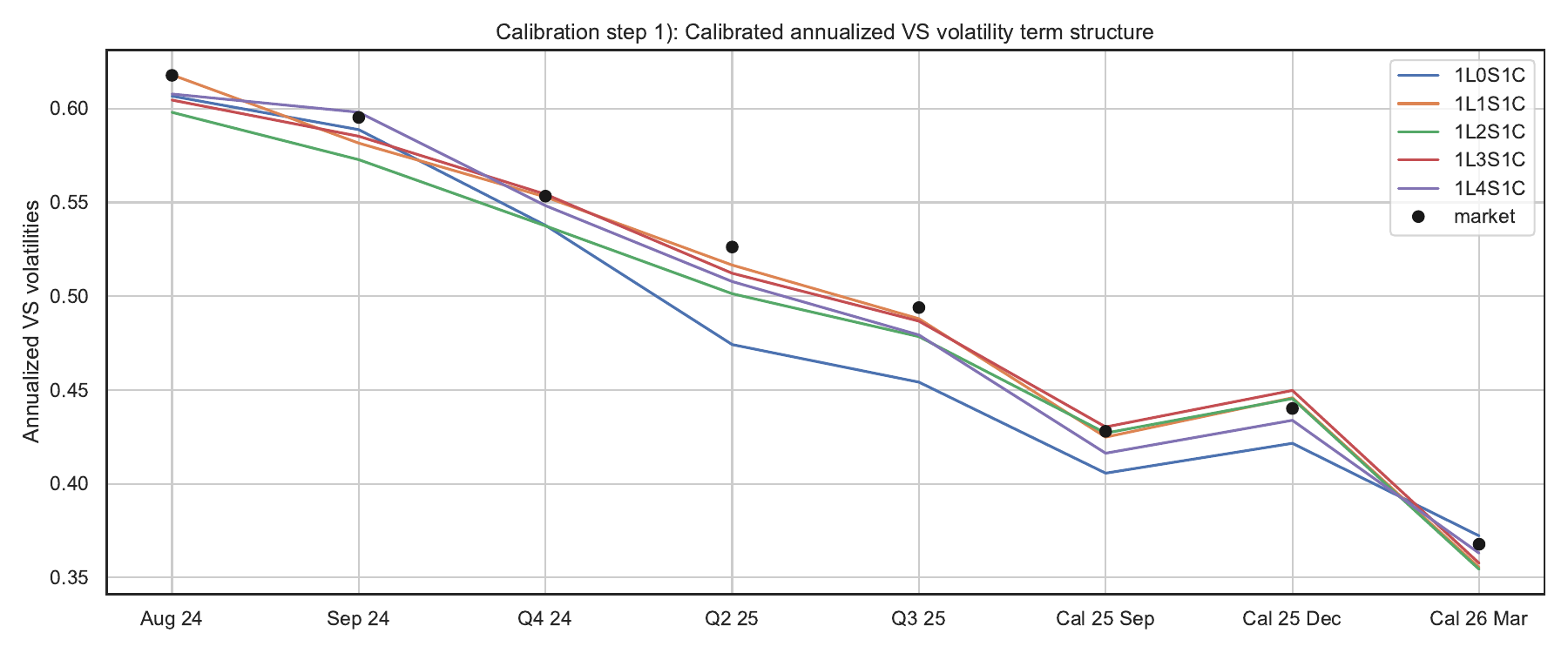}
    }

    \caption{Calibration results of step 1 for the German power market for the best $L$-$S$-$C$ model specifications up to six factors. The joint calibration fits i.e.~$\lambda=0.5$ in the loss $J^{\lambda}$ \eqref{eq:def_loss}, are displayed with respect to (top) the historical realized volatility term structure, (middle) the historical futures correlation structure and (bottom) the VS variance term structure.}
    \label{F:joint_calibration_fit_plots}
\end{figure}

\begin{figure}[H]
    \centering
    \adjustbox{trim={0cm 0.2cm 0cm 0.2cm},clip}{%
        \includegraphics[width=0.9\textwidth,angle=0]{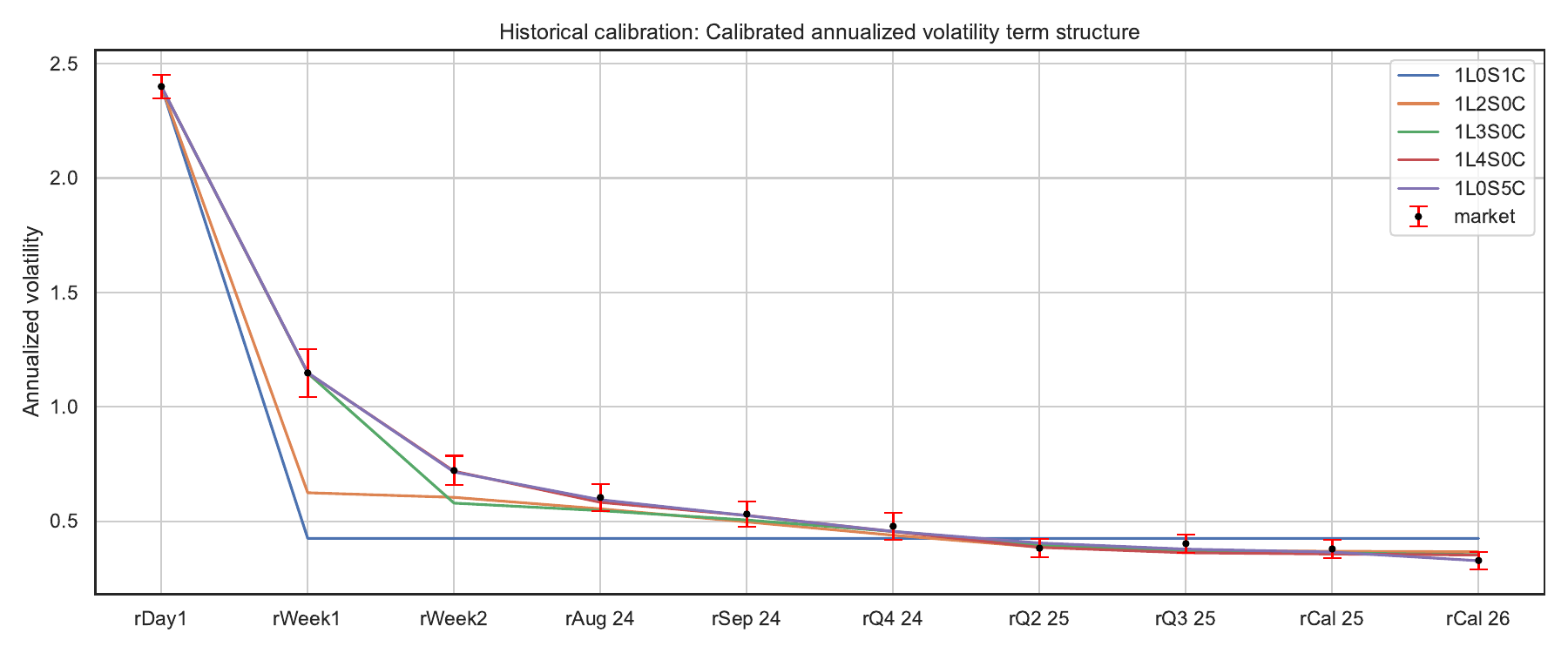}
    }

    \vspace{0.5cm} 

    \adjustbox{trim={0cm 0.2cm 0cm 0.2cm},clip}{%
        \includegraphics[width=0.9\textwidth,angle=0]{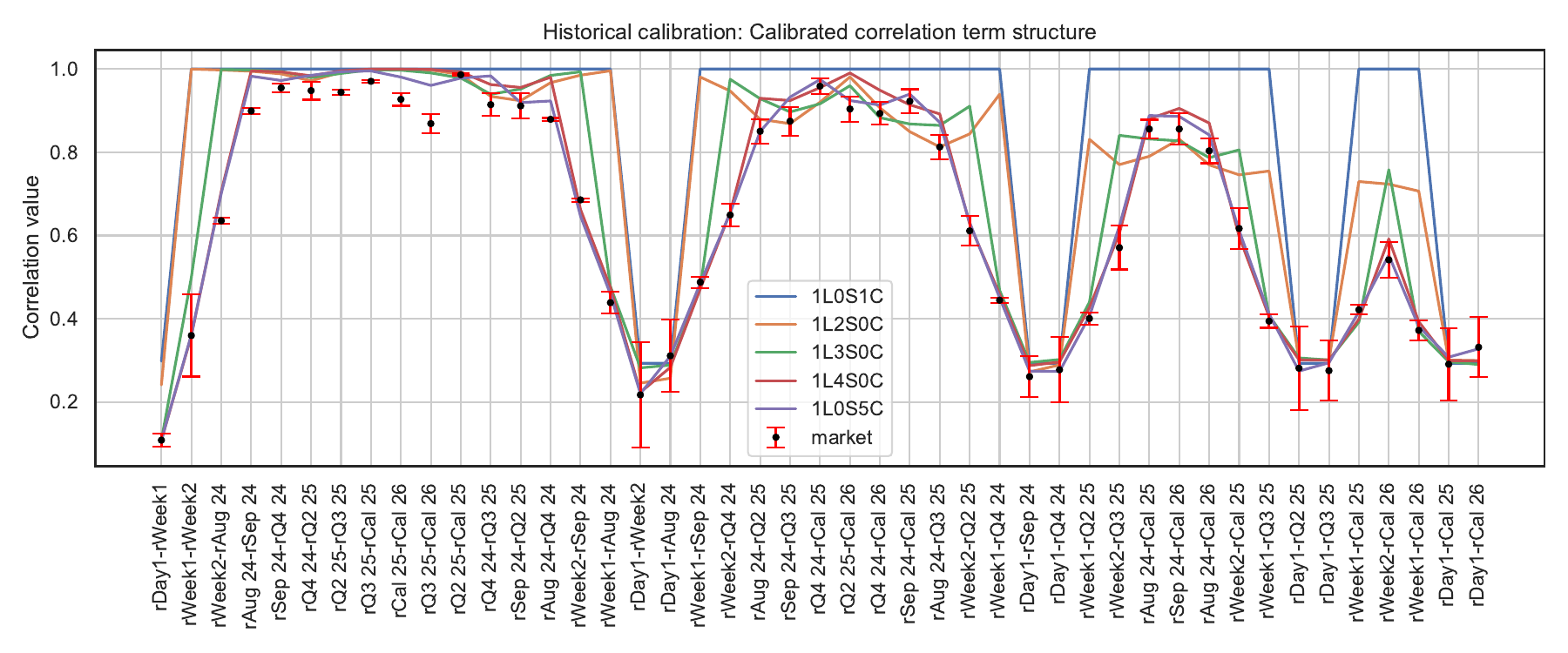}
    }

    \vspace{0.5cm} 

    \adjustbox{trim={0cm 0.2cm 0cm 0.2cm},clip}{%
        \includegraphics[width=0.9\textwidth,angle=0]{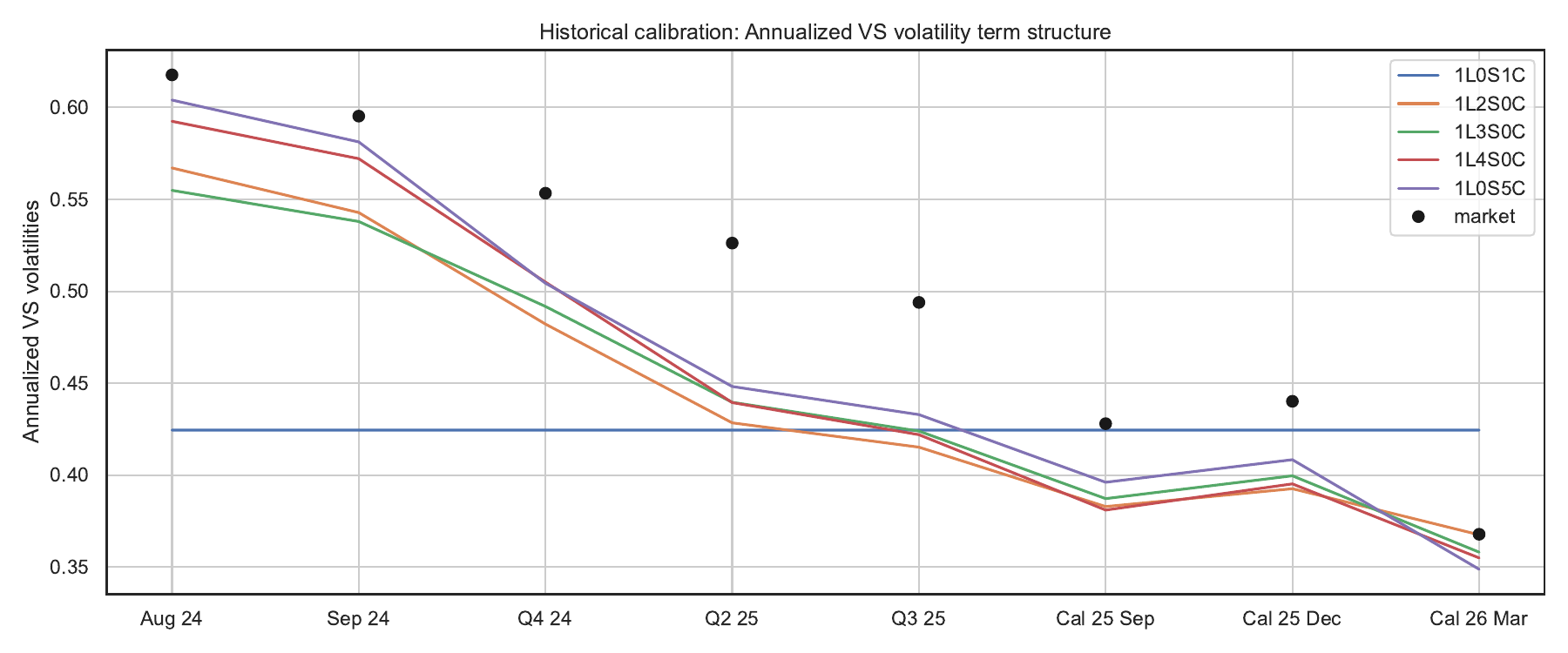}
    }

    \caption{Historical covariance calibration fits i.e.~$J_{2}$ \eqref{eq:def_loss_vs_variances_fit} is ignored by taking $\lambda=1$ in the loss $J^{\lambda}$ \eqref{eq:def_loss}, of the best up-to-six-factors models whose loss functions are highlighted in the right-hand side matrix of Figure \ref{F:calibration_fits} to (top) the historical realized volatility term structure, (middle) the historical futures correlation structure and (bottom) the VS variance term structure.}
    \label{F:historical_calibration_fit_plots}
\end{figure}

For the implied calibration, we choose the model $1L3S1C$ as it gives a good trade-off between parsimony and the joint calibration quality, and display its calibrated parameters in Figure \ref{F:sigma_tau_and_correlation_1L_3S_1C}. Notice the $L$-factor indeed captures the long-term volatility level, with $\sigma_{L}$ lying in-between the annualized volatilities of long-term contracts rCal 25 and rCal 26, respectively equal to $0.3784$ and $0.3277$ while the $C$ factor is placed at the beginning of the curve to capture the short-term volatility behavior in the case of $1L3S1C$. For conciseness, the calibrated parameters of the other models mentioned in Figure \ref{F:calibration_fits} are not displayed.

\begin{figure}[H]
    \centering
    \adjustbox{trim={0.1cm 0.1cm 0.1cm 0.1cm},clip}{%
        \includegraphics[width=0.48\textwidth,angle=0]{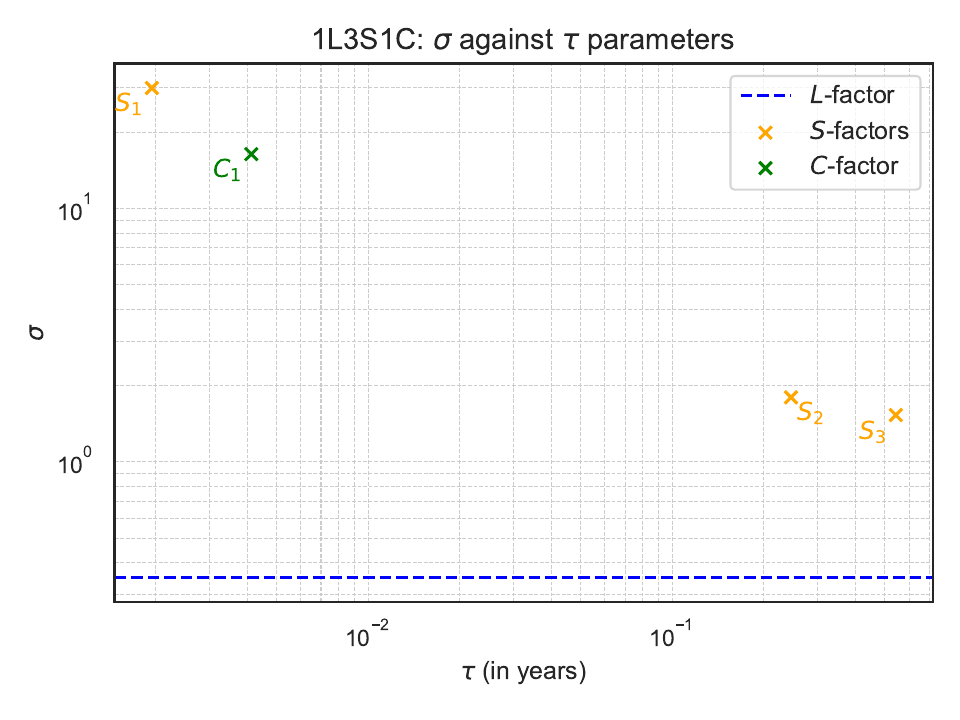}
    }
    \hspace{0.02\textwidth} 
    \adjustbox{trim={0.1cm 0.1cm 0.1cm 0.1cm},clip}{%
        \includegraphics[width=0.45\textwidth,angle=0]{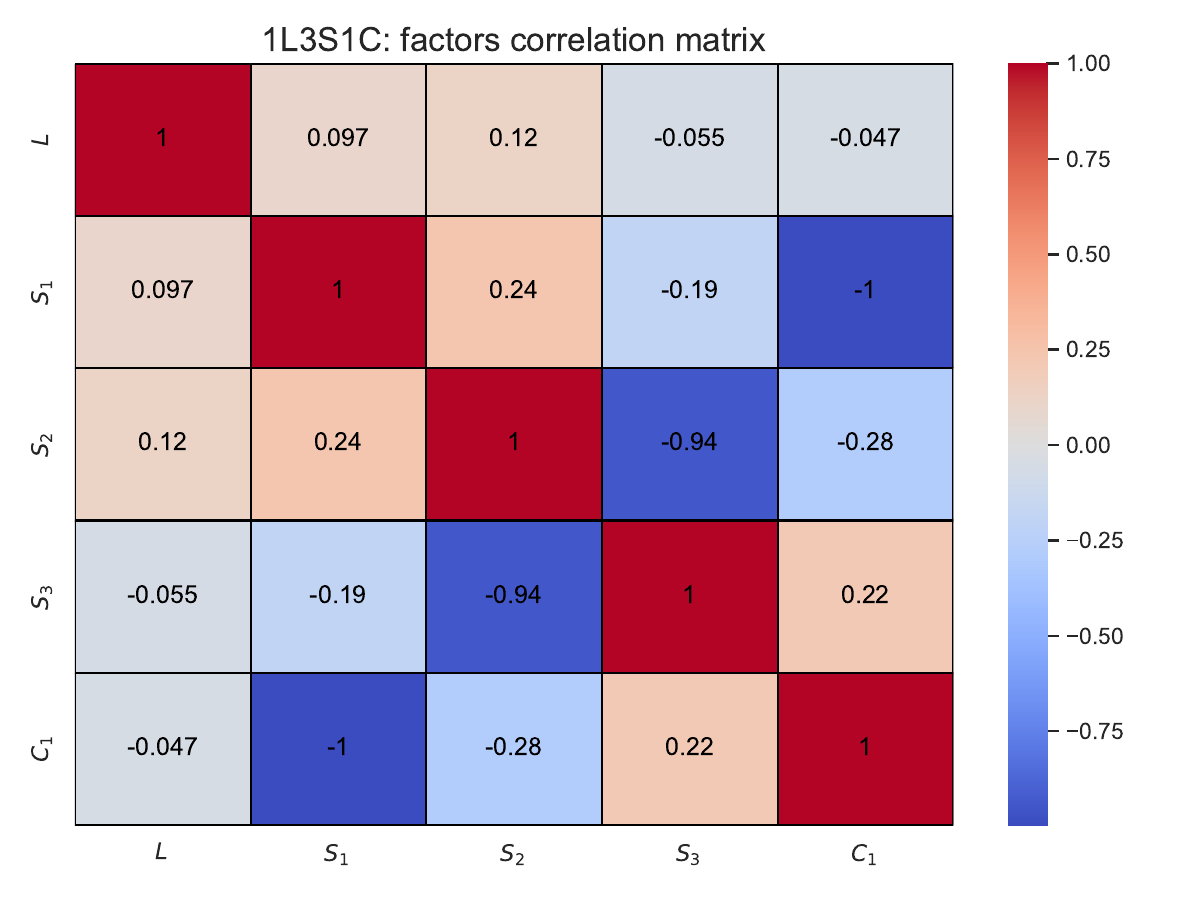}
    }

    \caption{Plots of $\left(\sigma_{i}\right)_{i}$ against $\left(\tau_{i}\right)_{i}$ parameters (left) and of the factors' correlation matrix (right) for the $1L3S1C$ model, when taking $\lambda=0.5$ in the loss function $J^{\lambda}$ \eqref{eq:def_loss}. Model parameters are, in $L$-$S$-$C$ order: $\left(\sigma_{i}\right)_{i} := [0.3499, 29.9849, 1.7992, 1.5325, 16.4317]$, $\left(\tau_{i}\right)_{i} := [0.0019, 0.2454, 0.5428, 0.0041]$.}
    \label{F:sigma_tau_and_correlation_1L_3S_1C}
\end{figure}

It is worth mentioning that the choice of the historical volatility factors does not impact significantly the results of the following calibration steps as long as the fit of the VS term structure is satisfactory. Otherwise, the calibrated functions $g$ and $h$ may deviate significantly from $1$ degrading the model's interpolation capabilities.

\subsubsection{Step 2: Term structure exact calibration correction}

{This part addresses the correction of the VS volatility term structure with the functions $g$ and $h$ following the methodology proposed in Section \ref{section:gh_calib}.}

In order to avoid jumps of the day-ahead contracts volatility, both functions $g$ and $h$ are smoothed during calibration. At each step of the fixed point algorithm, we update $g$ and $h$ as piece-wise constant functions and then interpolate it using the monotonicity preserving algorithm PCHIP of \cite{fritsch1984method}. Despite the smoother results, the algorithm is more time-consuming due to the numerical optimization involved in the calibration. On the left-hand side of Figure \ref{fig:calibrated_g} below, we plot the results of the term structure calibration of piece-wise constant functions $g$ and $h$, and on the right-hand side, same results with smoothing interpolation. In both cases, the fixed-point algorithm converges in approximately $5$–$10$ iterations, providing an almost perfect fit for the VS volatilities {shown in Figure~\ref{fig:vs_ts_fit}}. 

We also note that thanks to the calibration of the VS volatilities at the first step of the calibration, the functions $g$ and $h$ are needed only to slightly correct the volatility level, so that they remain close to $1$ for the months following the observation date. For longer maturities, the function $h$ deviates significantly from $1$ since the VS volatilities of smiles Cal 26 Dec and Cal 27 Dec were not calibrated at the first calibration step due to the poor quality of covariance estimations for contracts with long time to delivery.


\begin{figure}[H]
    \centering
    \subfigure[\centering Piece-wise constant functions $h$ and $g$.]{{\includegraphics[width=0.45\linewidth]{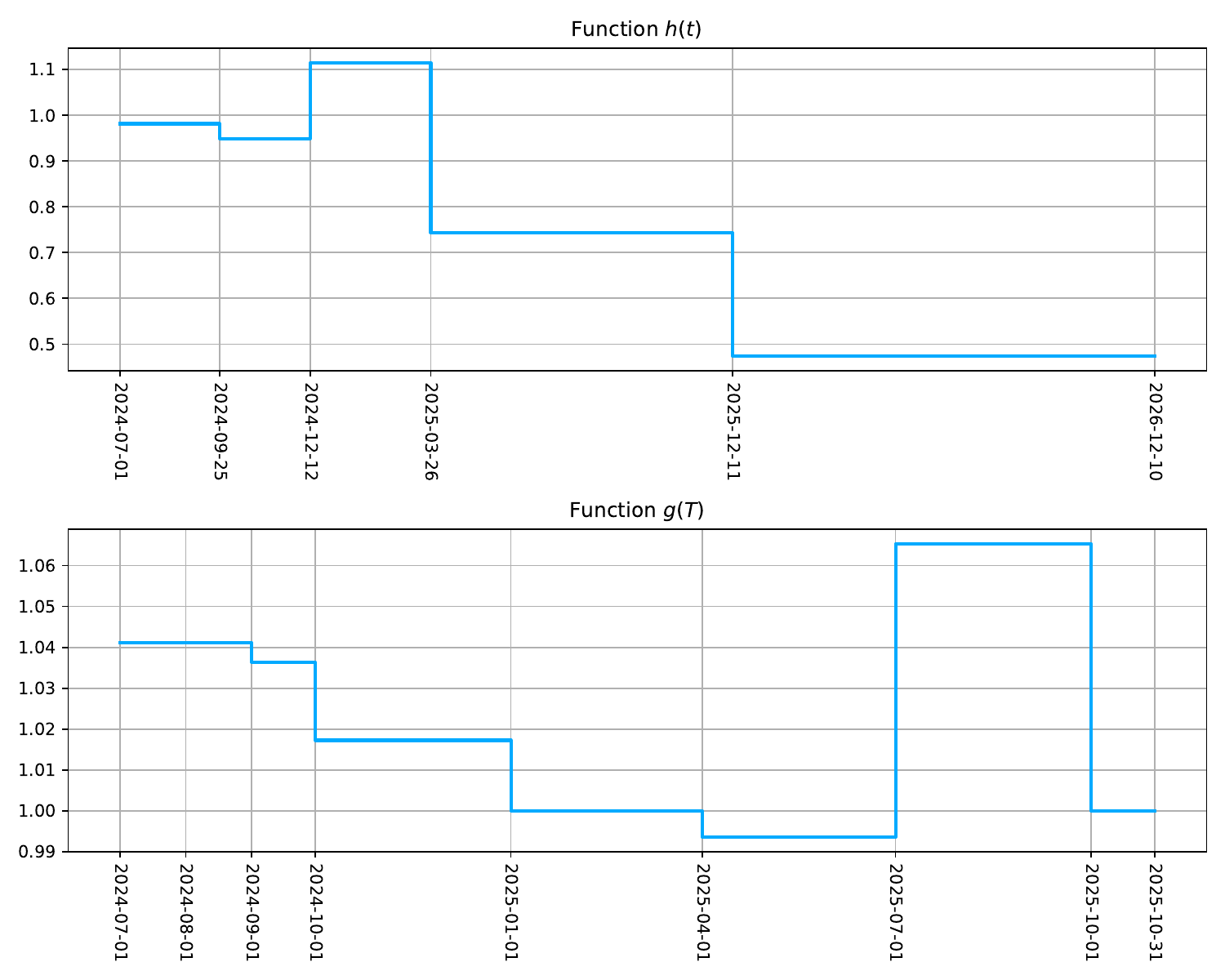} }}%
    \qquad
    \subfigure[\centering Smoothed functions $h$ and $g$.]{{\includegraphics[width=0.45\linewidth]{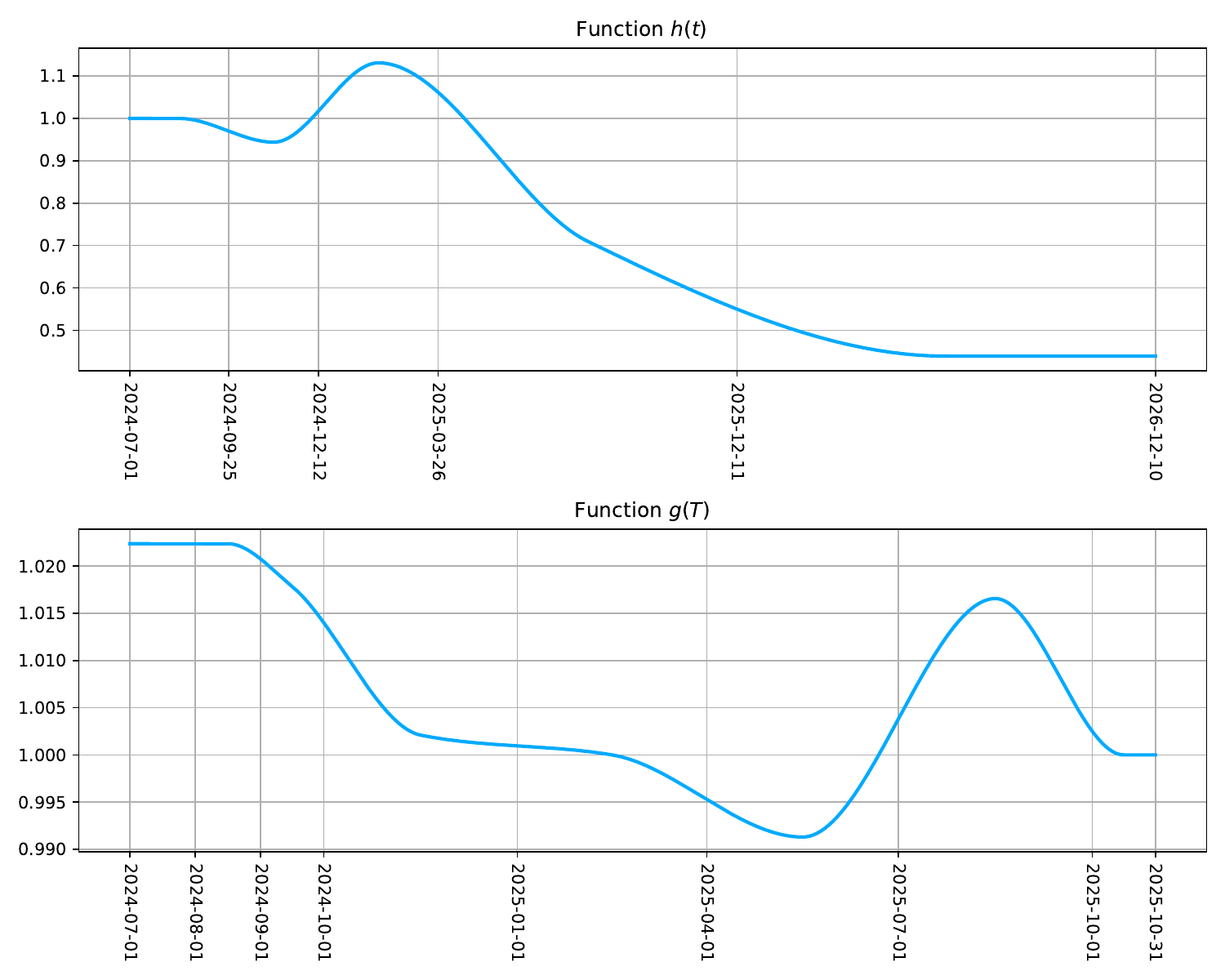}}}%
    \caption{Calibrated functions $\textcolor{black}{h}$ and $\textcolor{black}{g}$ with and without smoothing ((a) and (b) correspondingly).}%
    \label{fig:calibrated_g}%
\end{figure}

\begin{figure}[H]
\begin{center}    \includegraphics[width=0.6\linewidth]{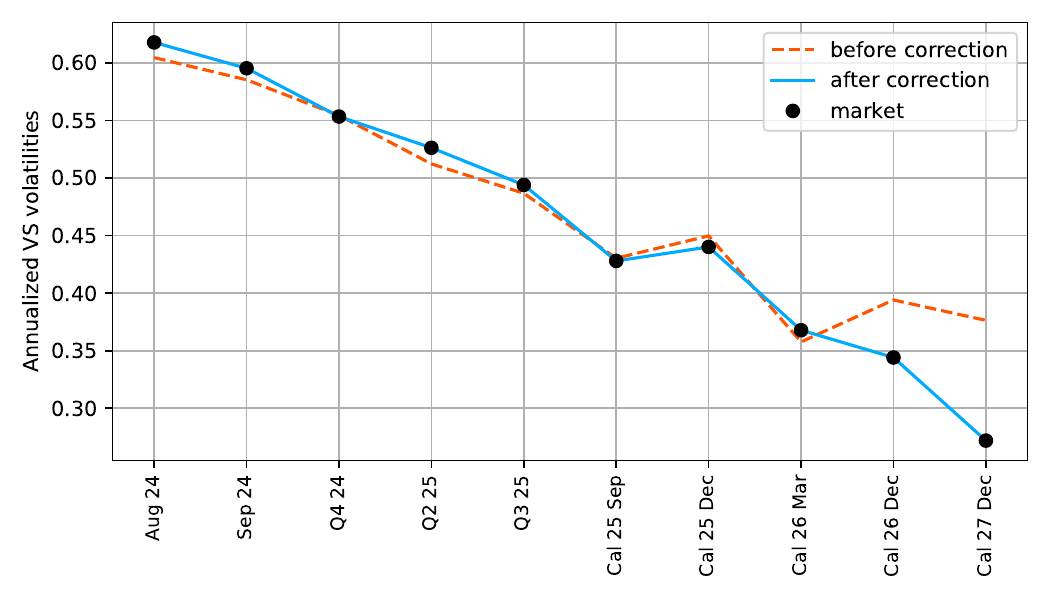}
    \caption{VS term structure fit before the correction (orange) and after the correction (blue).}
    \label{fig:vs_ts_fit}
\end{center}
\end{figure}

\subsubsection{Step 3: Implied smile shapes calibration}

{Finally, we calibrate the parameters $(c_i, x_i)_{i \in \{1, \ldots, M\}}$ of the stochastic variance process $V$, as well as the correlations $(\tilde\rho_i)_{i \in \{1, \ldots, N\}}$, using the procedure described in Section~\ref{sect:smile_shape_calib}.}

For the stochastic variance component calibration, we choose a model with $M = 3$ pseudo-factors in \eqref{eq:V_def} which allow us to cover different volatility timescales and consistently achieve an acceptable fit on various calibration sets. The advantage of using multiple pseudo-factors is demonstrated in Appendix \ref{section:attainable}, where we compare the quality of fit between the Heston model and the Lifted Heston model.

The calibrated parameters of the stochastic volatility are provided in Table \ref{calibrated_params}. For the calibrated values, the loss function defined by \eqref{eq:optimization_problem} equals $0.001754$. It is interesting to notice that the first mean-reversion coefficient is very close to zero and corresponds to very long mean-reversion periods, while the two others correspond approximately to the timescales of one month and 2.5 weeks respectively.

\begin{table}[H]
\begin{center}
\begin{tabular}{ c c} 
 \hline
 Parameter & Calibrated value\\
 \hline \hline
 \rule{0pt}{2ex} $c$ & $(0.492,\, 0.68,\, 2.79)$ \\
 
 \rule{0pt}{2ex} $x$ &  $(4.6\cdot 10^{-6},\,  9.712,\, 20.249)$\\ 
 
 \rule{0pt}{2ex} $\tilde\rho$ & $(0.648,\, -0.516, \, 0.16,\, -0.148,\,  0.541)$\\ 
 
 \hline
\end{tabular}
\end{center}
    \caption{Calibrated lifted Heston model parameters}
    \label{calibrated_params}
\end{table}

The implied volatility smiles in the calibrated model with smoothed $g$ and $h$ functions, are shown in Figure \ref{fig:iv_smiles_results}.  We also plot the bid-ask spread equal to 5\% as a realistic and even conservative proxy to the market spread observed in the power market. Thus, the plot demonstrates sufficiently high quality of fit for the model to be used for practical needs.

\begin{figure}[H]
\begin{center}    \includegraphics[width=1\linewidth]{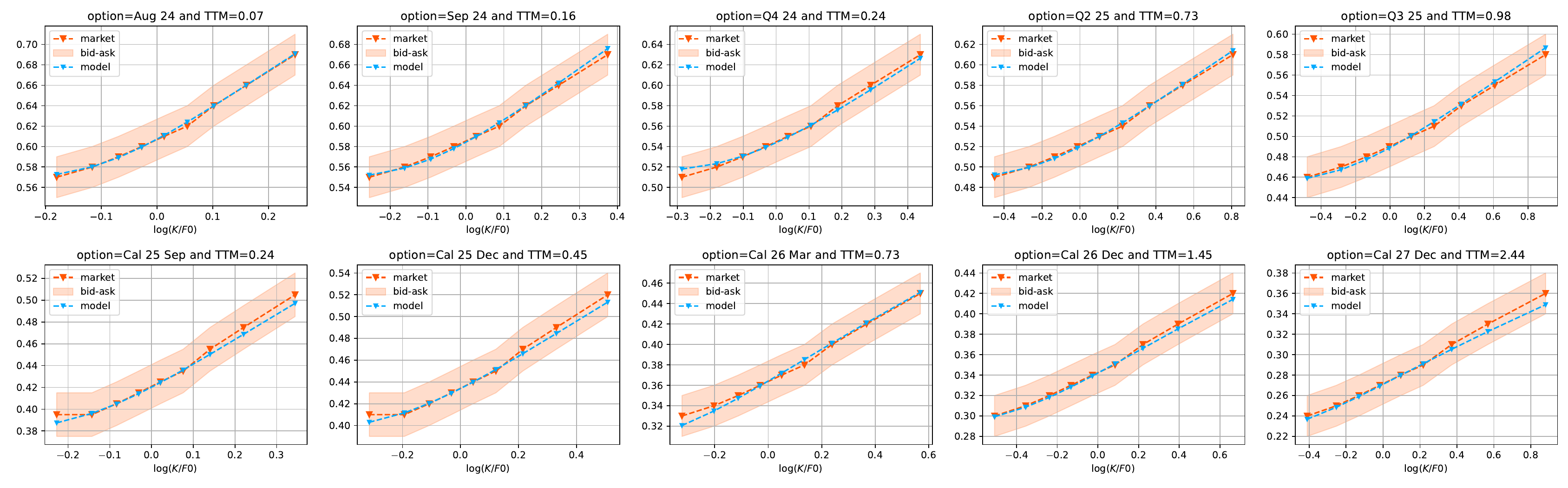}
    \caption{IV smiles in the calibrated model (blue) and market IV smiles (orange).}
    \label{fig:iv_smiles_results}
\end{center}
\end{figure}

\subsection{A posteriori validation of model approximations}\label{ss:validation_by_MC_of_kv_futures}

As discussed in Section \ref{section:KV_approx_presentation}, the use of the Kemna-Vorst (KV) approximated model futures $\tilde{F}$ from \eqref{eq:KV_def} leads to potential arbitrage opportunities for futures contracts with overlapping deliveries, e.g.~between monthly futures and the resulting quarter composed by such delivery months. However, we show in the next subsections that the futures prices' trajectories, as well as the futures correlation structures and implied volatility smiles are extremely close in the approximated model $\tilde{F}$ and in the exact model future $F$ from \eqref{eq:futures_contract_def}.

From the forward rate definition \eqref{eq:HJM_def}, the exact model is given explicitly by
\begin{align}
    F_t(T_s, T_e) = \frac{1}{T_e - T_s} \int_{T_s}^{T_e} f(0,T) \exp\Bigg( &-\frac{g^2(T)}{2} \int_0^t h^2(s)V_s \sigma^\top (s,T)  R \sigma (s,T)\, \d s \\ &+ g(T) \int_0^t h(s)\sqrt{V_s}  \sigma^\top (s,T)\, dW_s \Bigg) \d T.
\end{align}
Numerically, we consider a one-day discretization step, which corresponds in practice to the futures contracts with the shortest delivery period quoted in the futures market
\begin{align}
    F_t(T_s, T_e) & \approx \frac{\delta}{T_e - T_s} \sum_{i=1}^{N-1} f(t, T_i)\\
    T_s = T_1 < ... < T_N &= T_e \text{ and }\ T_{i+1} - T_i = \delta = \dfrac{1}{365}, \ i = 1, \ldots, N-1. 
\end{align}

\begin{remark}
    In real energy markets, the smallest possible delivery intervals $\Delta$ are typically 15 minutes, 30 minutes, or one hour. Consequently, the price of any futures contract $F_t(T_s, T_e)$ can be expressed as the average
    $$
    F_t(T_s, T_e) = \frac{\Delta}{T_e - T_s} \sum_{i=1}^{(T_e - T_s) / \Delta} f(t, T_s + i\Delta),
    $$
    of the prices $f(t, T_s + i\Delta)$ of all contracts with the smallest granularity $\Delta$ included in $[T_s, T_e]$. This implies that the futures curve is inherently discrete, whereas the HJM model provides a continuous approximation. By approximating the contract $F(T_s, T_e)$ using formula \eqref{eq:futures_discrete_approx}, we effectively perform an inverse approximation, representing a continuous curve by a discrete sum. For numerical feasibility in the subsequent Monte Carlo simulations, the value $\delta = \frac{1}{365}$, corresponding to one day, was used instead of the finer granularity of one hour or one quarter.
\end{remark}

Since the exact daily contracts are computed using a Monte Carlo scheme, the exact model becomes significantly slower in terms of pricing and simulation time than the approximated one. 

Besides the KV approximation, it is important to notice that the introduction of the function $g \not= 1$ prevents the futures volatility from being stationary. However, if $g \approx 1$, which is the case in our model as the variance swap volatilities are mostly calibrated at the first calibration step, then the futures contracts dynamics remains almost stationary. Furthermore, the function $g$ impacts the instantaneous correlations between absolute contracts. These correlations are affected by the KV approximation as well. In the following numerical experiments, we demonstrate that neither KV approximation nor the function $g$ have a significant impact on the futures correlation structure.

\subsubsection{Sample path trajectories errors}

As an illustration of the quality of this approximation, we provide in Figure~\ref{fig:comparison_between_kv_and_no_kv} the trajectories of the exact futures contracts $F$  and their approximations $\tilde{F}$, simulated with the same random numbers, for three different futures contracts: Sep 24, Q1 25 and Cal 25. Recall that $F$ is defined in \eqref{eq:futures_contract_def} with $f(t, T)$ explicitly given by \eqref{eq:f_explicit}. 
The mean $L^2$ distance on $[0, \theta]$ is defined as a sample Root Mean Squared Error (RMSE) over $m \in \mathbb{N}^{*}$ sample trajectories by
$$
\sqrt{\frac{1}{m} \sum_{i=1}^{m} \|F(\omega_{i})-\widetilde{F}(\omega_{i})\|_{L^2(\theta)}^2} := \sqrt{\frac{1}{m} \sum_{i=1}^{m} \int_{0}^{\theta} \left( F_{t}(\omega_{i}) - \widetilde{F}_{t}(\omega_{i}) \right)^{2} \d t} \approx \sqrt{\frac{1}{M} \sum_{i=1}^{m} \frac{\theta}{N_{\theta}} \sum_{j=1}^{N_{\theta}} \left( F_{t_{j}}(\omega_{i}) - \widetilde{F}_{t_{j}}(\omega_{i}) \right)^{2}},
$$
where $N_{\theta}$ is the number of steps in the daily subdivision of $[0, \theta]$. The estimated distance with $\theta = T_s$
between the approximated and exact trajectories estimated over $10^4$ simulations equals $0.0602$, $0.0897$ and $0.1898$ for three futures contracts correspondingly. We observed that this error depends both on the interval length $\theta$ as well as on the length of the delivery period. Both dependencies are non-linear, the first one is strictly increasing in $\theta$. As for second one, the error is greater for small (up to one-two months)  and for large intervals (two years or more) delivery periods. The worst-case RMSE numerically reaches $0.5\%$ of the initial futures value $F_{0}$.

\begin{figure}[H]
    \centering
    \includegraphics[width=0.9\linewidth]{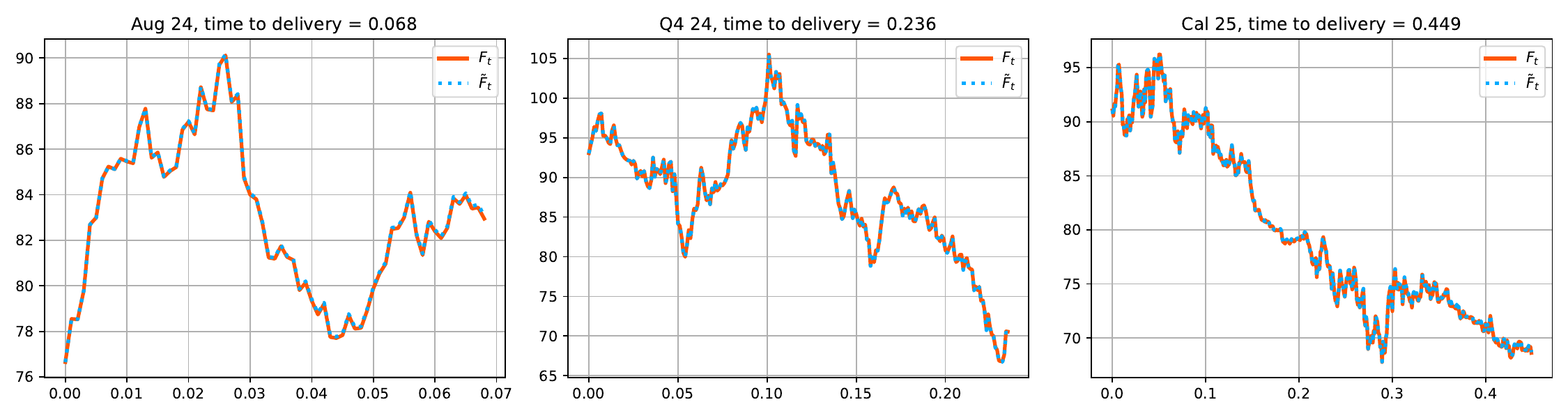}
    \caption{Sample paths of $\widetilde F(T_s, T_e)$ and $F(T_s, T_e)$ for the same random numbers for three futures contracts: monthly contract Sep 24 delivering during September 2024, quarterly contract Q1 25 delivering during the first quarter of 2025, and yearly (or calendar) contract Cal 25 delivering during the whole year 2025. In the figures, time to delivery is given in years.}
\label{fig:comparison_between_kv_and_no_kv}
\end{figure}

\subsubsection{Instantaneous correlations}\label{section:num_correls}

In this section, we assess numerically the impact of the function $g$ on the futures contracts' instantaneous correlations. Indeed,
unlike the function $h$ and the stochastic variance component $V$ which only modify the futures' volatility term structure but not their correlations, the function ${g}$ also impacts the correlations as it changes the approximated futures contract volatilities $\Sigma_{.}$ { defined by \eqref{eq:futures_contracts_volatility}}. Indeed, for $i, j \in \{ 1, \ldots, P \}$, the instantaneous correlation between the approximated futures contracts $\widetilde F_{.}^{i} := \widetilde F_{.}^{i} (T_s^i,\, T_e^i)$ and $\widetilde F_{.}^{j} := \widetilde F_{.}^{j}(T_s^j,\, T_e^j)$ obtained just after the first calibration step, {and implied by the historical futures correlation structure of rolling futures}, is defined by 
\begin{equation}\label{eq:inst_correl_hist}
     \widetilde\rho_{ij}^{\mathrm{hist}}(t) = \dfrac{\left\langle \d \log \widetilde{F}^i,\, \d \log \widetilde{F}^j\right\rangle_t}{\sqrt{\langle \d \log \widetilde{F}^i\rangle_t}\sqrt{\langle \d \log \widetilde{F}^j\rangle_t}} = \dfrac{\hat\Sigma^\top_{t}(T_s^i,\, T_e^i)R\hat\Sigma_{t}(T_s^j,\, T_e^j)}{\sqrt{\hat\Sigma_{t}^\top(T_s^i,\, T_e^i)R\hat
    \Sigma_{t}(T_s^i,\, T_e^i)}\sqrt{\hat\Sigma_{t}^\top(T_s^j,\, T_e^j)R\hat\Sigma_{t}(T_s^j,\, T_e^j)}}, \quad 0 \leq t \leq \min(T_s^i, T_s^j),
\end{equation}
where $\hat\Sigma_{.}(T_s^j,\, T_e^j)$ was defined by (\ref{eq:Sigma_hat}). However, after the second calibration step, the futures contract volatility is given by $\Sigma_{.}$, and the instantaneous correlation equals
\begin{equation}\label{eq:inst_correl_model}
     \widetilde\rho_{ij}(t) = \dfrac{\Sigma^\top_{t}(T_s^i,\, T_e^i)R\Sigma_{t}(T_s^j,\, T_e^j)}{\sqrt{\Sigma_{t}^\top(T_s^i,\, T_e^i)R\Sigma_{t}(T_s^i,\, T_e^i)}\sqrt{\Sigma_{t}^\top(T_s^j,\, T_e^j)R\Sigma_{t}(T_s^j,\, T_e^j)}}, \quad 0 \leq t \leq \min(T_s^i, T_s^j).
\end{equation}
Note that these two values coincide if and only if $\textcolor{black}{g(T)}$ is constant on $[T_s^i,\, T_e^i]$ and on $[T_s^j,\, T_e^j]$, as in this case, $\Sigma_{.}(T_s^i,\, T_e^i) = \textcolor{black}{\bar g_{i}}\hat\Sigma_{.}(T_s^i,\, T_e^i)$ and $\Sigma_{.}(T_s^j,\, T_e^j) = \textcolor{black}{\bar g_{j}}\hat\Sigma_{.}(T_s^j,\, T_e^j)$, where $\bar g_{i}$ and $\bar g_{j}$ denote the values of the function $g$ on the intervals $[T_s^i,\, T_e^i]$ and $[T_s^j,\, T_e^j]$ correspondingly.

However, we can show numerically that the difference between $\widetilde\rho_{ij}$ and $\widetilde\rho_{ij}^{\mathrm{hist}}$ is not significant from a practical point of view even if $g$ is not constant. Namely, we examine this difference for the model calibrated in Section \ref{S:calib_res} with the function $g$ given by Figure \ref{fig:calibrated_g}(b).
The figure \ref{fig:contract_corr_mat} shows the distances $\|\widetilde\rho_{ij} - \widetilde\rho_{ij}^{\mathrm{hist}}\|_\infty$ for each pair of contracts for the calibration set. In the worst case, the difference is less than 0.1\%. Thus, despite the introduction of the function $g$, the correlations in the calibrated model are still very close to the ones given by the historical calibration.

\begin{figure}[H]
    \begin{center}
        \includegraphics[width=0.5\linewidth]{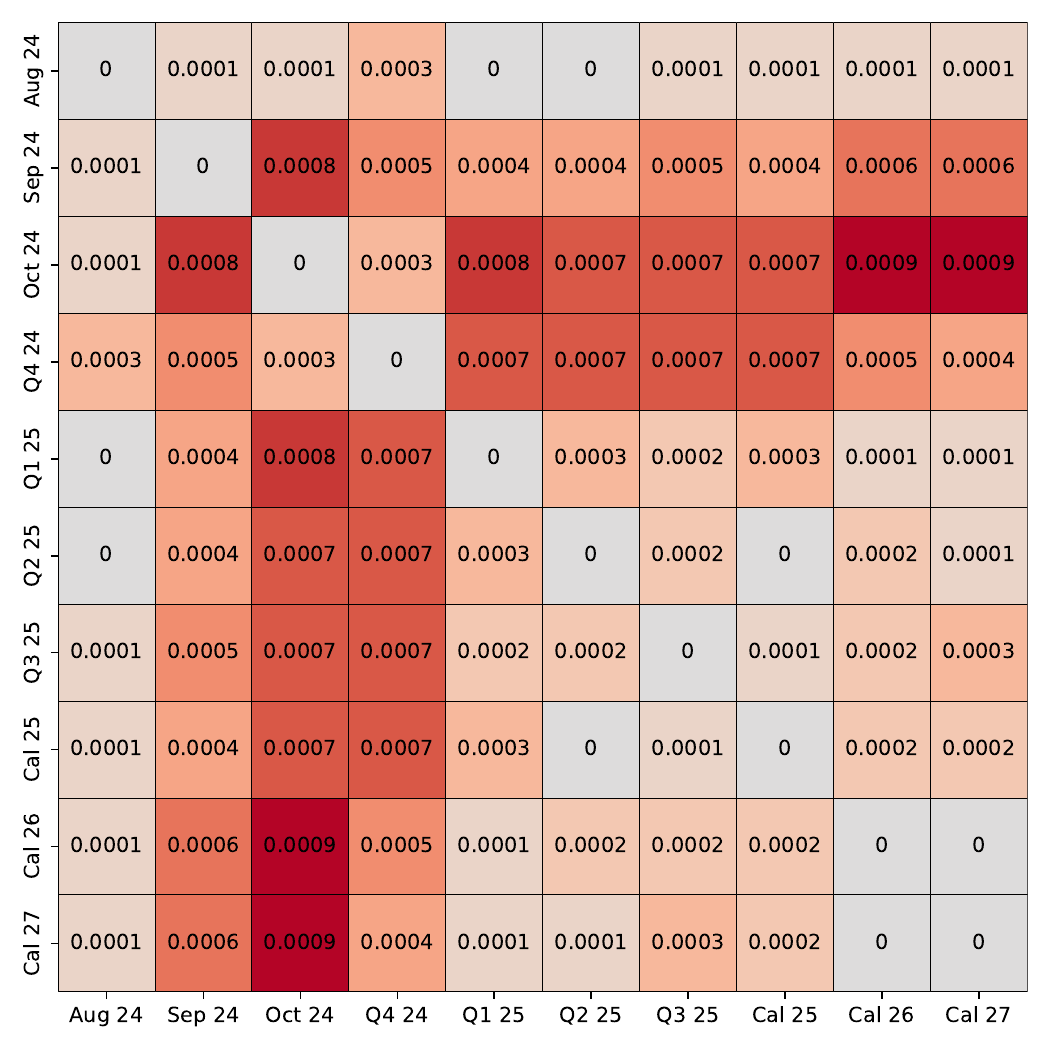}
    \end{center}
    \caption{Differences $\|\widetilde\rho_{ij} - \widetilde\rho_{ij}^{\mathrm{hist}}\|_\infty$ between historically calibrated and model instantaneous correlations.}
    \label{fig:contract_corr_mat}
\end{figure}

\subsubsection{Quality of the Kemna--Vorst approximation on the instantaneous correlations} \label{section:fwd_correls}
As we have just seen, the Kemna--Vorst approximation allows for an explicit computation of instantaneous correlations between futures contracts in \eqref{eq:inst_correl_model}. However, these quantities do not coincide with the correlations between exact futures contracts
$$
\rho_{ij}(t) := \dfrac{\left\langle \d \log {F}^i,\, \d \log {F}^j\right\rangle_t}{\sqrt{\langle \d \log {F}^i\rangle_t}\sqrt{\langle \d \log {F}^j\rangle_t}}, \quad 0 \leq t \leq \min(T_s^i, T_s^j),
$$
where $F^i$ and $F^j$ follow \eqref{eq:futures_contract_def} for $i, j \in \{ 1, \ldots, P \}$. More precisely
\begin{equation}
    \rho_{ij}(t) = \dfrac{{\Sigma^{\mathrm{exact}}_{t}}^\top(T_s^i, T_e^i)R{\Sigma^{\mathrm{exact}}_{t}}(T_s^j, T_e^j)}{\sqrt{{\Sigma^{\mathrm{exact}}_{t}}^\top(T_s^i, T_e^i)R
    {\Sigma^{\mathrm{exact}}_{t}}(T_s^i, T_e^i)}\sqrt{{\Sigma^{\mathrm{exact}}_{t}}^\top(T_s^j, T_e^j)R{\Sigma^{\mathrm{exact}}_{t}}(T_s^j, T_e^j)}}, \quad 0 \leq t \leq \min(T_s^i, T_s^j),
\end{equation}
with $\Sigma^{\mathrm{exact}}_{\cdot}(T_s, T_e) = \displaystyle \frac{\int_{T_s}^{T_e}g(T)\sigma(\cdot,T)f(\cdot,T) \d T}{\int_{T_s}^{T_e}f(\cdot,T) \d T}$, such that the exact correlation is a stochastic process, since $f$ is stochastic. However, as shown numerically in Figure \ref{fig:true_model_corr_diff}, that the $L^\infty$-difference between $\mathbb{E}[\rho_{ij}]$ and $\widetilde\rho_{ij}$ lies within two standard deviations of $\rho_{ij}$, which makes the approximated correlation \eqref{eq:inst_correl_model} a valid and tractable deterministic approximation of the exact instantaneous correlation.

In conclusion, one can control the difference error between the instantaneous correlations $\widetilde\rho_{ij}^{\mathrm{hist}}$ obtained by the first calibration step from historical futures' correlation structure and the fully calibrated exact model instantaneous correlations $\rho_{ij}$ using the triangle inequality
$$ 
\|\widetilde\rho_{ij}^{\mathrm{hist}} - \mathbb{E}[\rho_{ij}] \|_\infty = \|(\widetilde\rho_{ij}^{\mathrm{hist}} - \widetilde\rho_{ij}) - (\mathbb{E}[\rho_{ij}] - \widetilde\rho_{ij}) \|_\infty \leq \|\widetilde\rho_{ij}^{\mathrm{hist}} - \widetilde\rho_{ij}\|_\infty + \|\widetilde\rho_{ij} - \mathbb{E}[\rho_{ij}]\|_\infty,
$$
where the right-hand error quantities are respectively given in Figures \ref{fig:contract_corr_mat} and \ref{fig:true_model_corr_diff} across all liquid contracts. Such numerical measurements illustrate the overall consistency of the instantaneous correlations in the approximated model with respect to the exact one.

\begin{figure}[H]
    \begin{center}
    \includegraphics[width=0.8\linewidth]{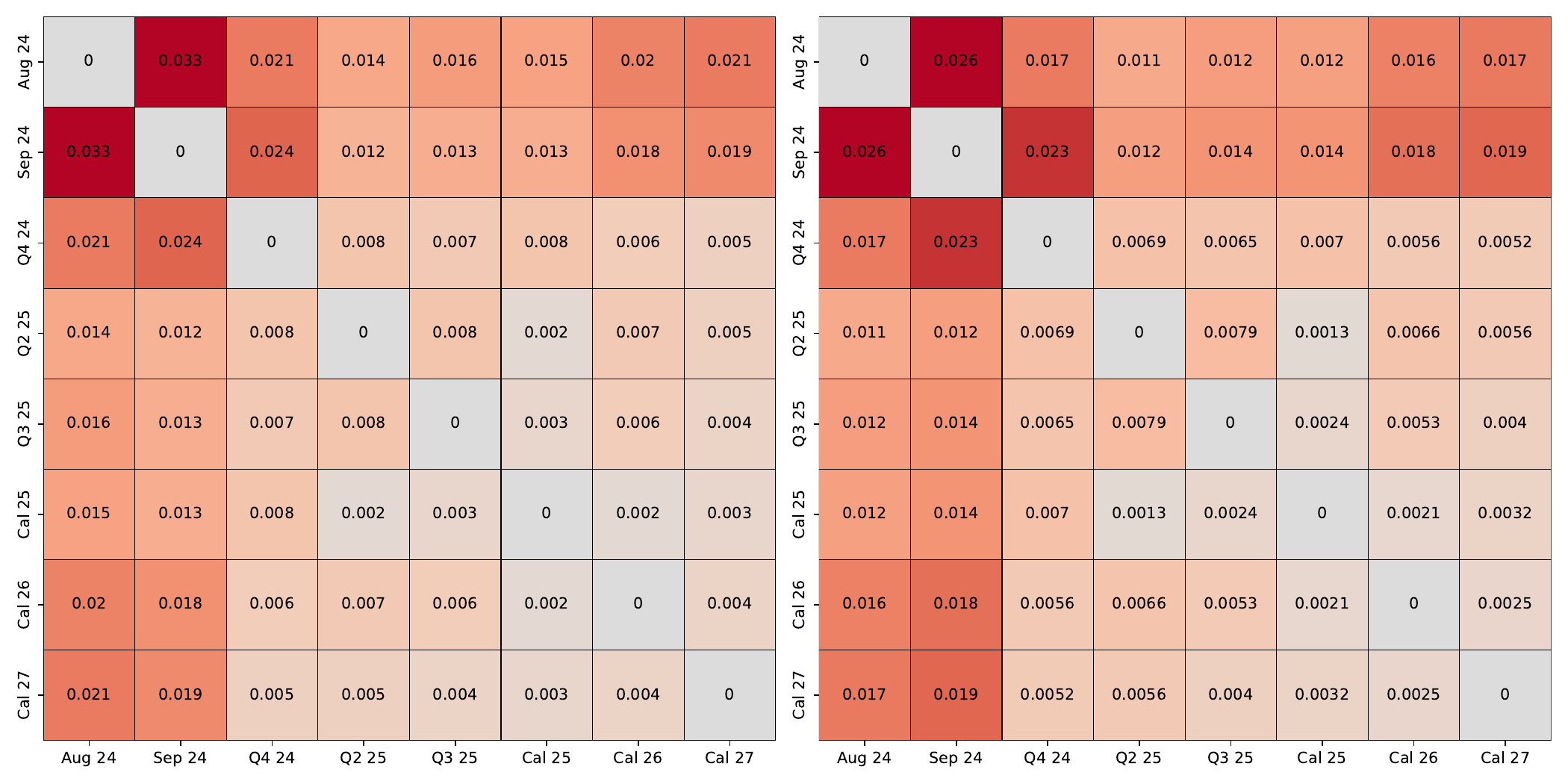}
    \end{center}
    \caption{Distance $\|\widetilde\rho_{ij} - \mathbb{E}[\rho_{ij}]\|_\infty$ (on the left) and its standard deviation (on the right) in the calibrated model.}
    \label{fig:true_model_corr_diff}
\end{figure}

\subsubsection{Quality of the Kemna--Vorst approximation on the smiles} \label{section:kemna_vorst} 

Our next goal is to show that the Kemna--Vorst approximation \eqref{eq:KV_def} provides precise enough pricing results. Since neither closed formula, nor an SDE for the futures price dynamics are available, the option prices can be computed only with the Monte Carlo scheme presented in Appendix \ref{section:monte_carlo_scheme} to generate the trajectories of $F$ and $\widetilde F$.

In order to visualize the approximation error, we benchmark the implied volatility smiles for the contracts Sep 24, Q4 24, and Cal 25 Dec with the calibrated model parameters given in Section \ref{s:implied_calib}. The same random numbers were used for both Monte Carlo, $10^5$ trajectories were simulated and we consider a daily grid for the instantaneous futures used to compute $F$. In the worst case, the difference in implied volatility between the Monte Carlo estimators is less than 0.2\%.

Moreover, we compare the values of the smile calibration loss function introduced further in \eqref{eq:optimization_problem} for the calibrated model. The loss function value in the approximated model is given by $0.001754$, while the loss function in the exact model is $0.001617$. 

It allows us finally to conclude that the Kemna-Vorst approximation is a valid tractable approximation for calibration of the exact model which then can be used for pricing and hedging purposes.

\begin{figure}[H]
\centering   
 \includegraphics[width=1\linewidth]{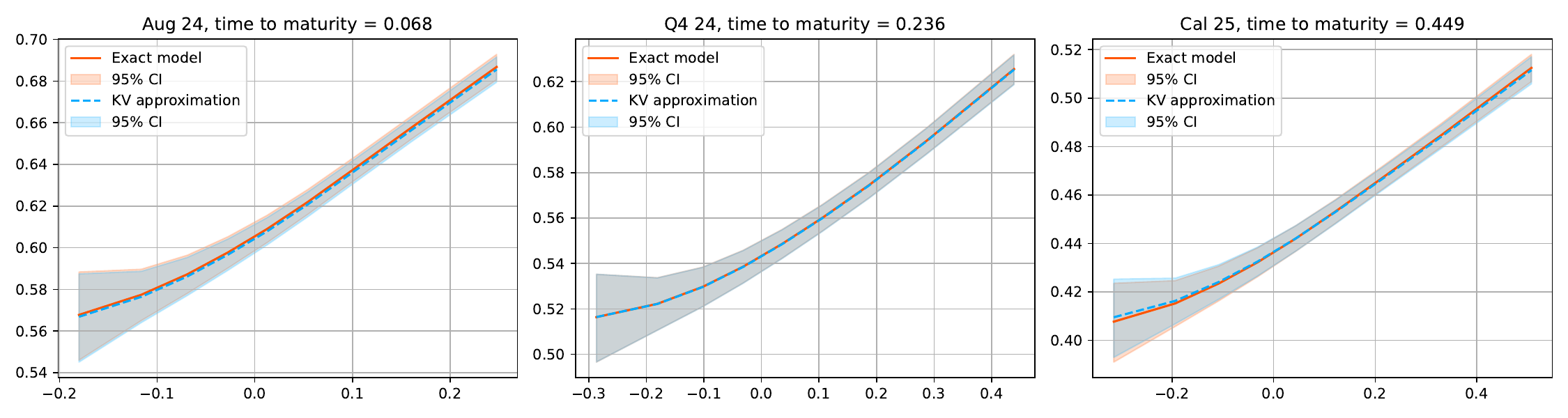}
 \caption{IV smiles generated with the same random numbers ($10^5$ simulations)}
\end{figure}

\subsection{Volatility hypercube extrapolation in the calibrated model}

In this section, we address the question of the implied volatility hypercube extrapolation once the model is fully calibrated and provide the volatility term structure induced by the model.

To illustrate the volatility term structure, we compute the ATM volatilities of options on daily contracts expiring in the day preceding the day of the delivery, as well as the ATM volatilities of (generated) monthly, quarterly and calendar contracts expiring three days before the delivery period start. The range of considered days cover the days from July 3, 2024 to July 3, 2025. The resulting  ATM volatilities of are shown in Figure \ref{fig:calibrated_vol_ts} (a), (b).

We also generate the implied volatility smiles for monthly futures contracts corresponding to the months between August 2024 and December 2026. The options maturity dates are taken equal to the 25-th day of the month preceding the delivery period. The generated smiles are shown in Figure \ref{fig:calibrated_vol_ts} (c).

\begin{figure}[H]
    \centering
    \subfigure[\centering ATM daily volatilities.]{{\includegraphics[width=0.4\linewidth]{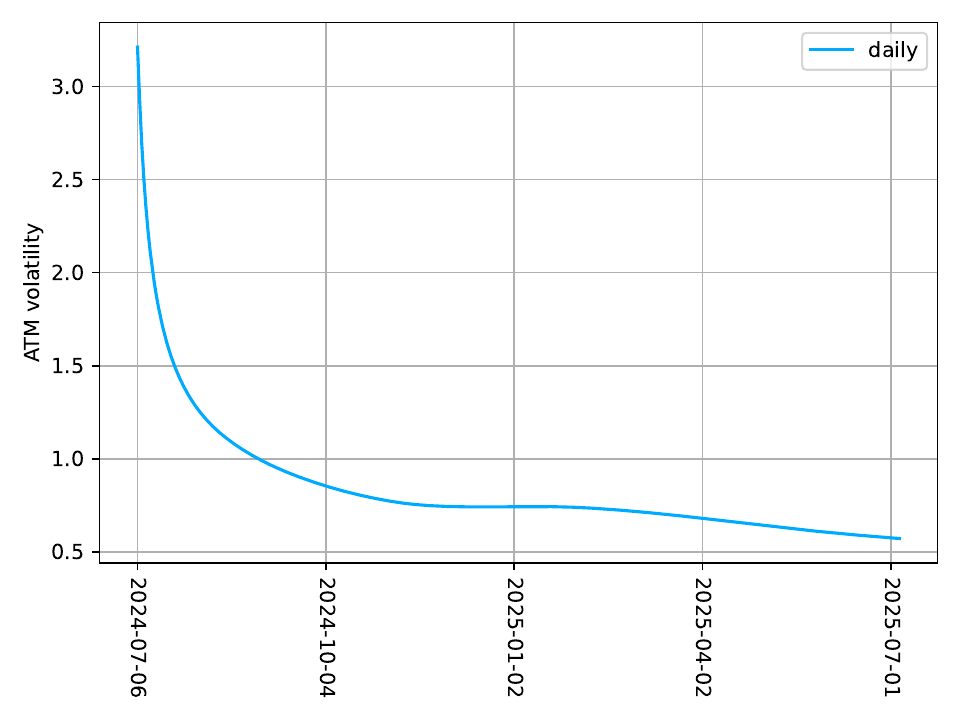}}}%
    \qquad
    \subfigure[\centering ATM monthly, quarterly and cal volatilities.]{{\includegraphics[width=0.4\linewidth]{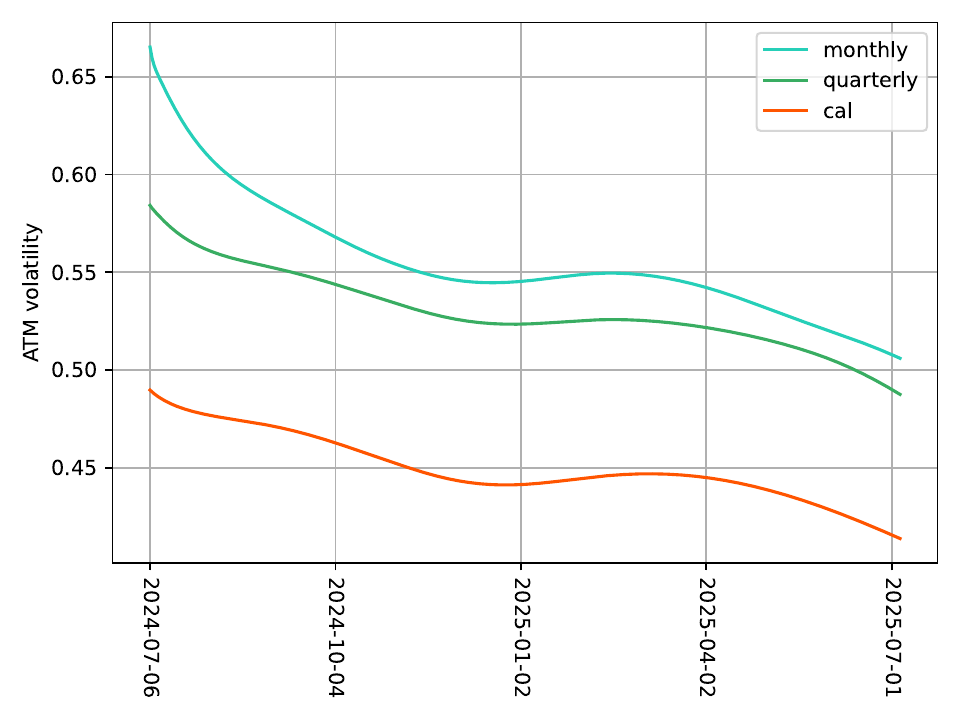}}}%
    \qquad
    \subfigure[\centering IV smiles from August 2024 to December 2026.]{{\includegraphics[width=0.5\linewidth]{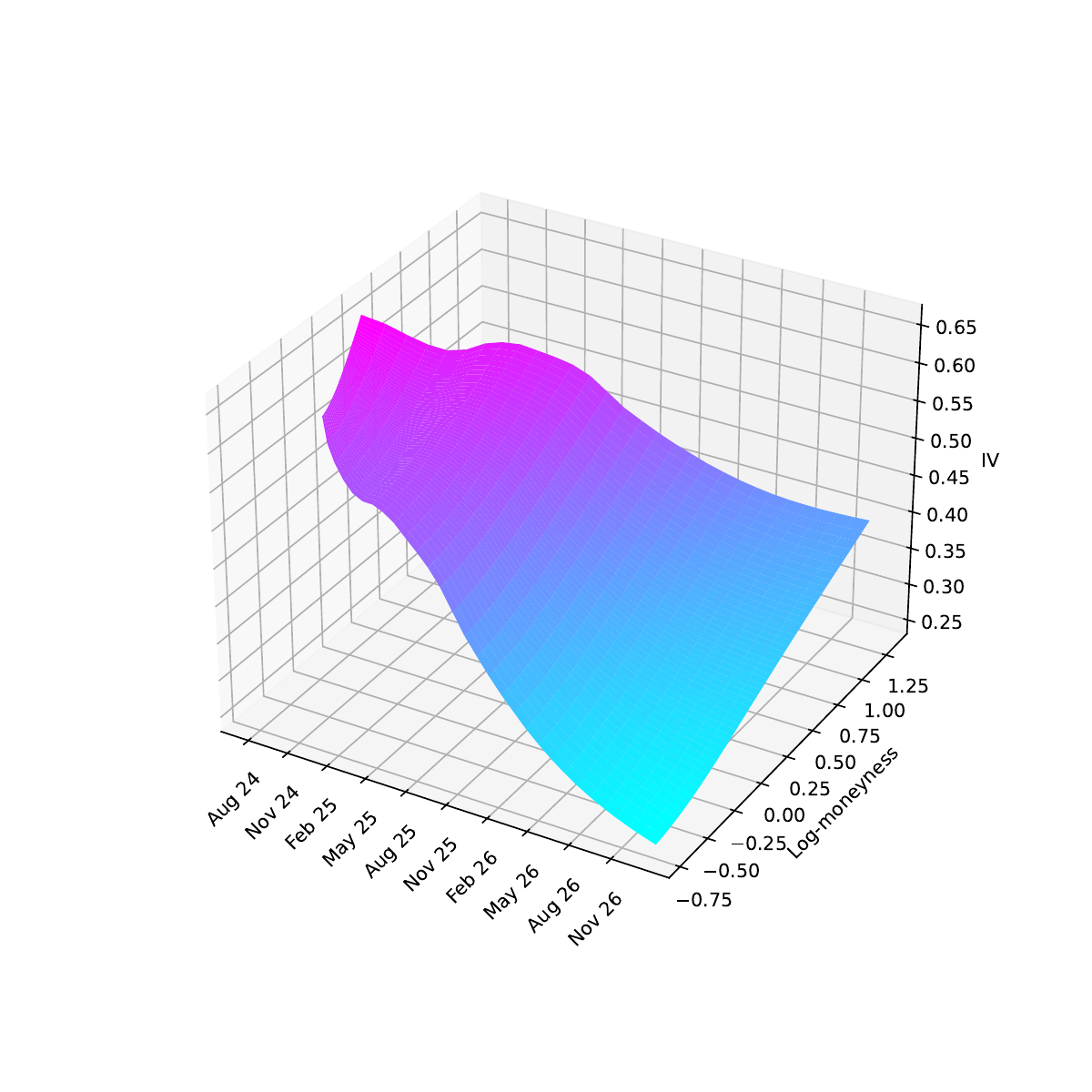} }}%
    \caption{ATM implied volatilities of daily contracts (a) and of monthly, quarterly and calendar contracts (b), and implied volatility smiles for monthly contracts (c) generated by the calibrated model.}%
    \label{fig:calibrated_vol_ts}%
\end{figure}

We highlight that the initial calibration of the variance swap volatilities at the first calibration step prevents the model from over-fitting the VS volatility term structure when calibrating the functions $g$ and $h$ in step 2). This makes possible a reasonable interpolation and extrapolation of the implied volatilities for the contracts and maturities not included in the calibration set. In Figure \ref{fig:extrapolation}, we plot the integrated variance for monthly (green) and quarterly (blue) contracts being calibrated (circles) as well as the ones not present in the calibration set and generated within the calibrated model (crosses). We observe that the calibrated model produces consistent interpolation and extrapolation values, and thus capable of completing the volatility hypercube.

\begin{figure}[H]
\begin{center}    \includegraphics[width=0.5\linewidth]{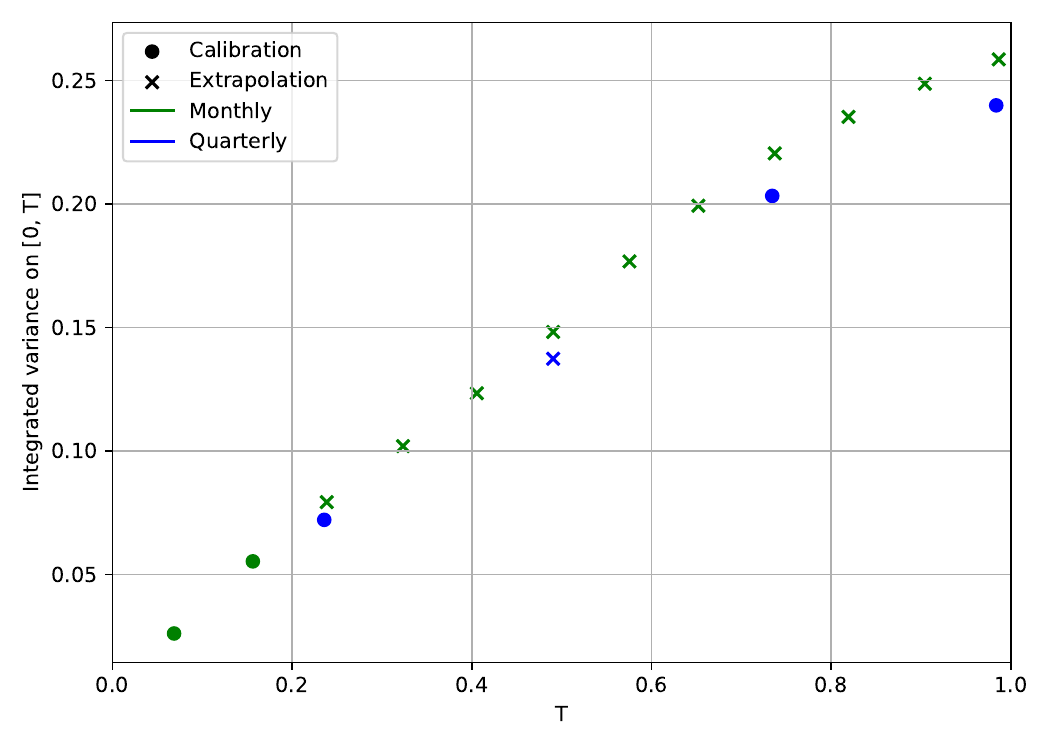}
    \caption{Integrated variance for monthly (green) and quarterly (blue) contracts as function of time to maturity $T$ within one year. Circles correspond to the calibrated contracts  Aug 24, Sep 24, Q4 24, Q2 25, Q3 25, while crosses correspond to the contracts represented by the calibrated HJM model (monthly contracts from Oct 24 to Jul 25 and Q1 25).}
    \label{fig:extrapolation}
\end{center}
\end{figure}

\appendix

\section{More calibration results}
\subsection{Attainable smiles}\label{section:attainable}

Since the Lifted Heston model reduces to the standard Heston model when \( M = 1 \), \( c_1 = 1 \), and \( x_1 = 0 \), the Lifted Heston model can reproduce all the smiles that the standard Heston model can generate. The inverse, however, is not true, as the standard Heston model often struggles to capture steep enough smiles for short maturities. We provide a comparison of the calibrations of both models using the previously introduced calibration set. To highlight the differences, only the smiles with the poorest calibration results are shown in Figure \ref{fig:comparison_heston_lifhes}.

\begin{figure}[H]
    \begin{center}
    \includegraphics[width=1\linewidth]{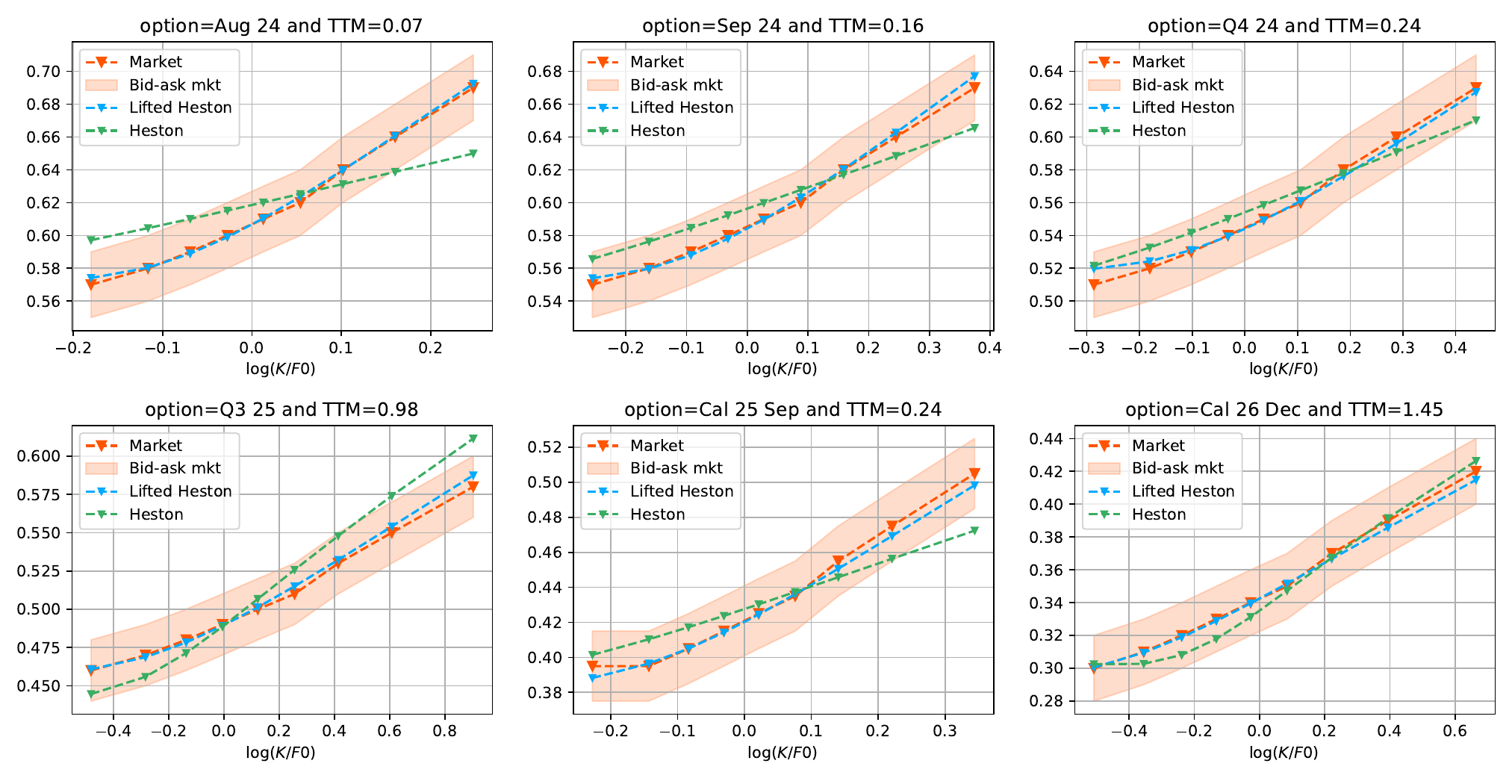}
    \caption{A comparison of the calibration results for the Heston model and the Lifted Heston model with $M = 3$ pseudo-factors on the same calibration set.}
    \label{fig:comparison_heston_lifhes}
    \end{center}
\end{figure}

\subsection{Calibration results for TTF market} \label{ss:ttf_calibration_results}

In this section, we provide the results of calibration performed on the TTF gas market on the $3^{rd}$ July 2024.

Note that options on quarterly and calendar gas futures contracts are quoted as strip options (e.g.~a price of an option on a quarter futures contract is a mean of option prices on monthly contracts forming this quarter). Thus, it is sufficient to calibrate only the smiles corresponding to the monthly futures contracts. It can be achieved by using only the function $h$ at the term structure calibration step, which is shown in Figure \ref{fig:h_TTF}. Thus, the historical correlations between futures remain unchanged and coincide perfectly with the correlations calibrated historically, since the function $g$ remains constant. {The calibrated VS volatility term structure is presented in Figure~\ref{fig:vs_ts_fit_ttf}.}

We display in Figure \ref{F:estimated_correlation_term_structure_TTF} the results of the PC analysis applied to the rolling TTF futures contracts' daily log returns observed from the $3^{rd}$ of July 2024 to the $3^{rd}$ of January 2023, and observe only $2$ PCs allow to reach $95\%$ of explained variance in this case, and $4$ factors to reach $99\%$, consistent again with the quality of fits we obtain in Figure \ref{F:joint_calibration_fit_plots_TTF}, and the observations by \citet[Section 6.2.2]{Andersen2010}. In fact, the high futures correlation structure observed in the upper left of Figure \ref{F:estimated_correlation_term_structure_TTF} stands in sharp contrast to the de-correlation between short-term and long-term observed in the lower left part of Figure \ref{F:estimated_correlation_term_structure} for the German power market, and is mainly due to the large storage capacities of gas in chambers, while electricity still cannot be stored at a large scale.

For the implied calibration at steps 2) and 3), we consider the four-factor model $1L2S1C$. The implied calibration results for six smiles on monthly contracts are presented below in Figure \ref{fig:smiles_TTF}, and the calibrated parameters of the stochastic volatility are provided in Table \ref{calibrated_params_TTF}.

The monthly contracts surface and the ATM volatilities are shown in Figure \ref{fig:surf_TTF}. Note that the term structure of the ATM volatility is not decreasing for short maturities. This is a consequence of the fact that the implied volatility corresponding to the first contract is $5\%$ lower that the volatilities of the other monthly contracts, possibly due to the seasonality. It is also interesting to notice the the difference between daily and monthly volatility is much smaller than in the power market thanks to the very high correlation between all the futures contracts.

\begin{figure}[H]
    \centering
    \begin{minipage}{0.4\textwidth}
        \centering
        \includegraphics[width=\textwidth, trim=0 10 100 10, clip]{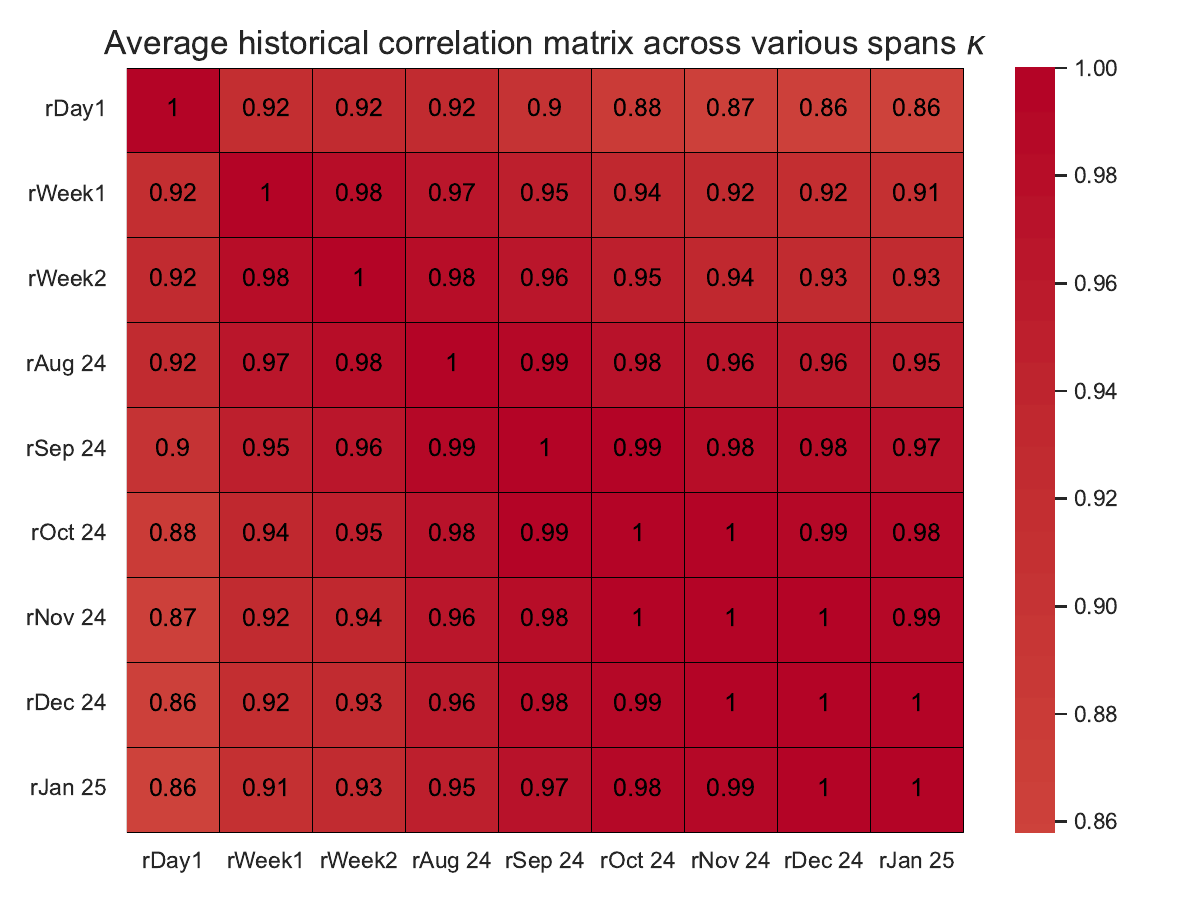} 
    \end{minipage}
    \hspace{0.1cm} 
    \begin{minipage}{0.45\textwidth}
        \centering
        \includegraphics[width=\textwidth, trim=10 0 10 10, clip]{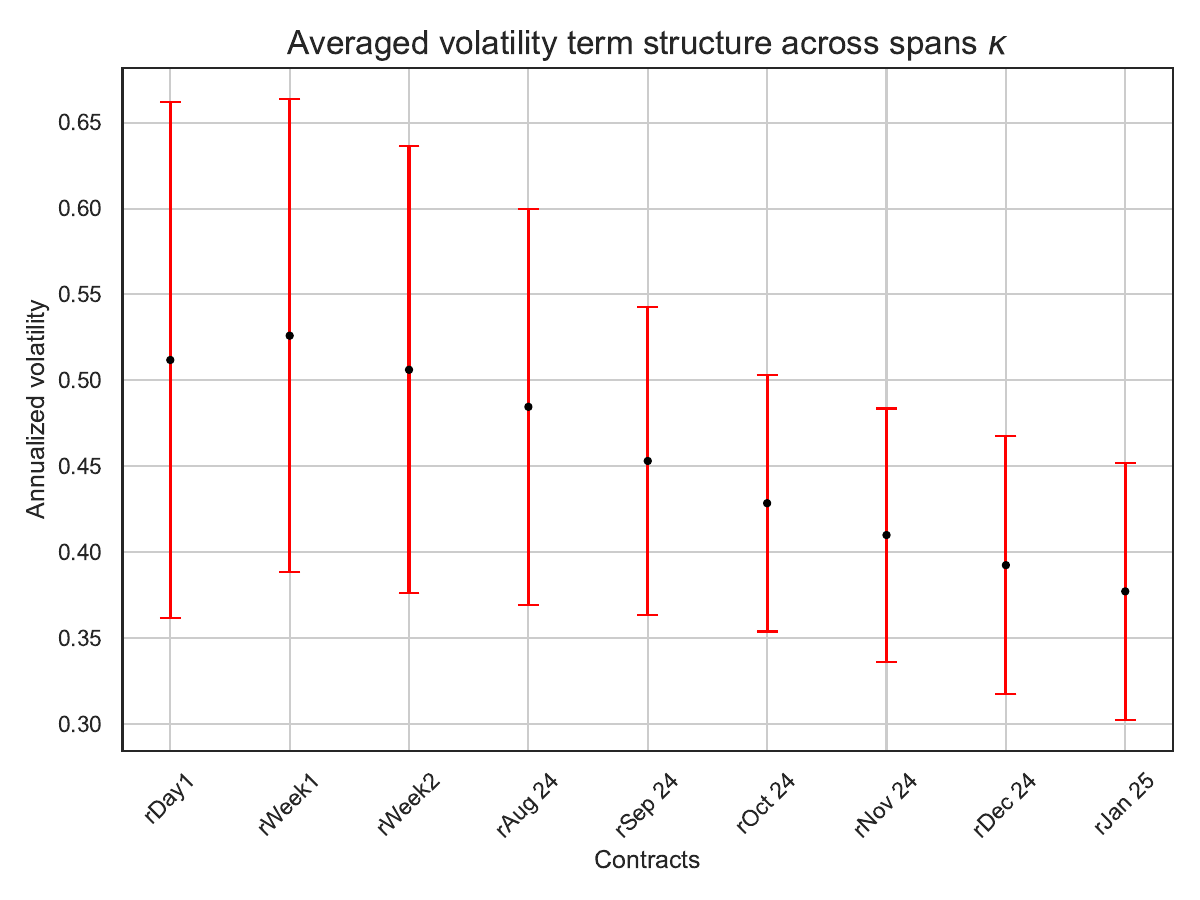} 
    \end{minipage}

    \vspace{0.5cm} 

    \begin{minipage}{0.6\textwidth}
        \centering
        \includegraphics[width=\textwidth, trim=10 0 10 10, clip]{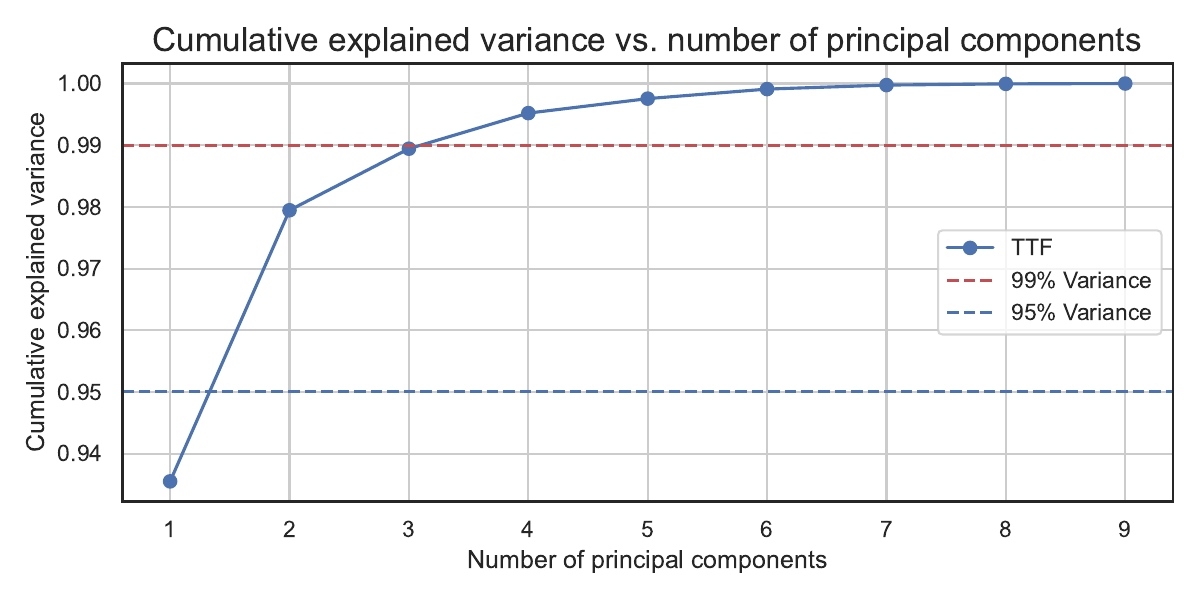} 
    \end{minipage}

    \caption{Rolling futures in TTF gas market: historical (upper left) futures correlation structure and (upper right) volatility term structure (both estimated by averaging across span parameters $\kappa$ going from $30$ to $365$ days, see Appendix \ref{ss:covariance_estimation}); (bottom) principal component analysis}
    \label{F:estimated_correlation_term_structure_TTF}
\end{figure}

\begin{figure}[H]
    \centering
    \adjustbox{trim={0cm 0.2cm 0cm 0.2cm},clip}{%
        \includegraphics[width=0.9\textwidth,angle=0]{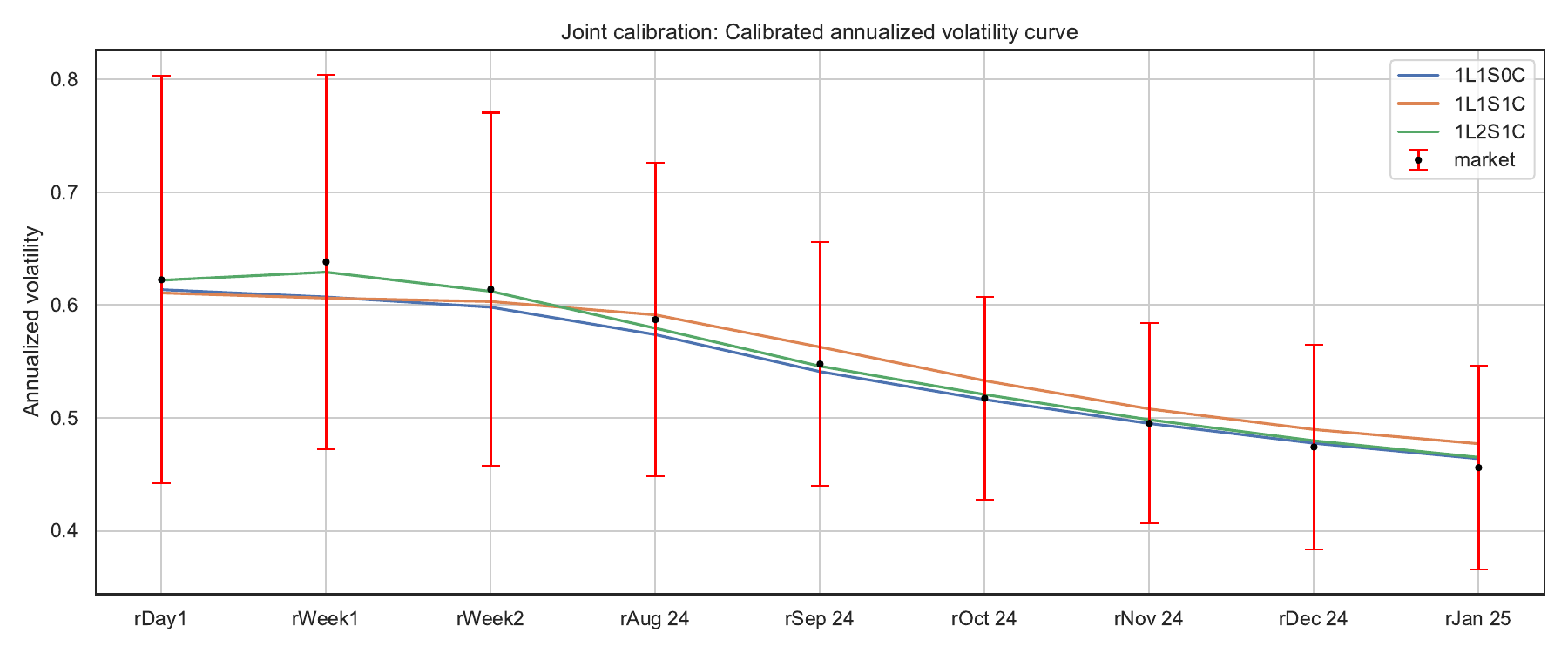}
    }

    \vspace{0.5cm} 

    \adjustbox{trim={0cm 0.2cm 0cm 0.2cm},clip}{%
        \includegraphics[width=0.9\textwidth,angle=0]{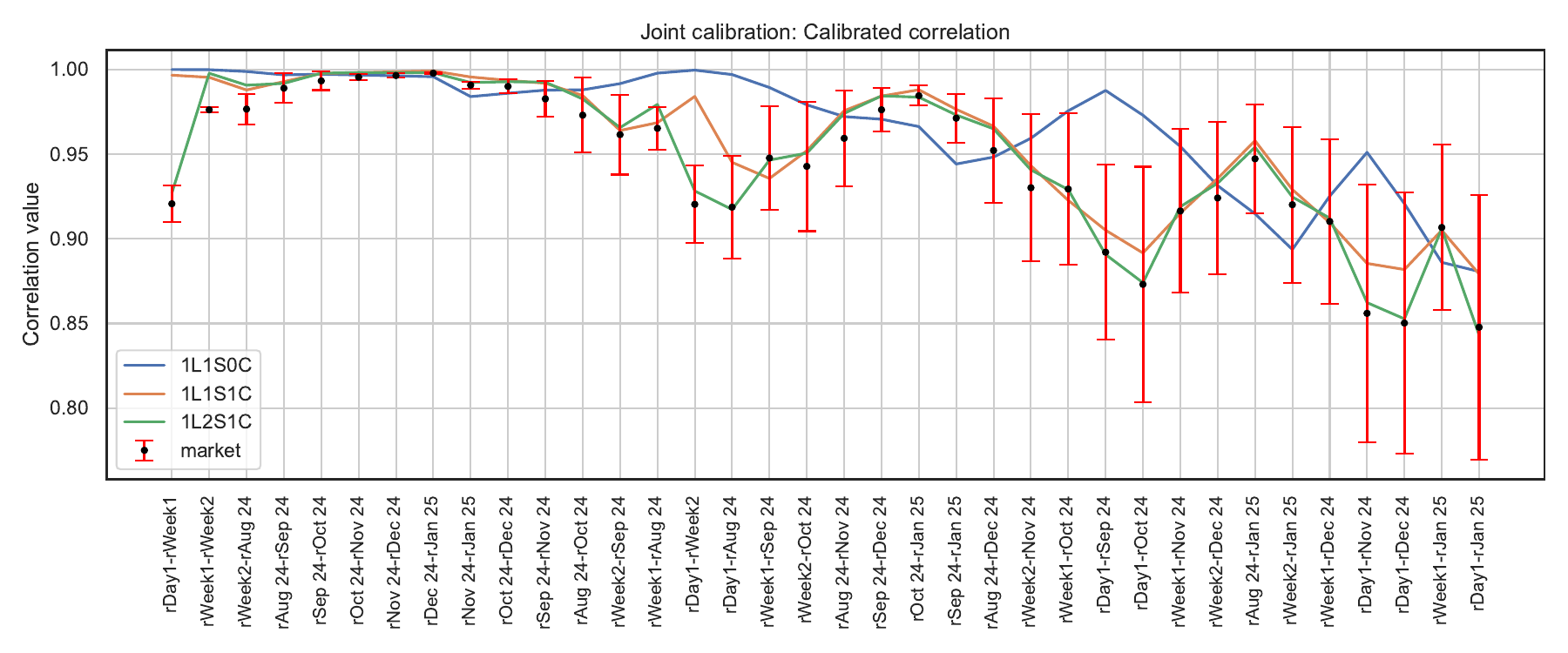}
    }



    \caption{Historical calibration fits for TTF, with $\lambda=0.99$, to (top) the historical realized volatility term structure and (bottom) the historical futures correlation structure.}
    \label{F:joint_calibration_fit_plots_TTF}
\end{figure}

\begin{figure}[H]
    \centering
    \adjustbox{trim={0.1cm 0.1cm 0.1cm 0.1cm},clip}{%
        \includegraphics[width=0.48\textwidth,angle=0]{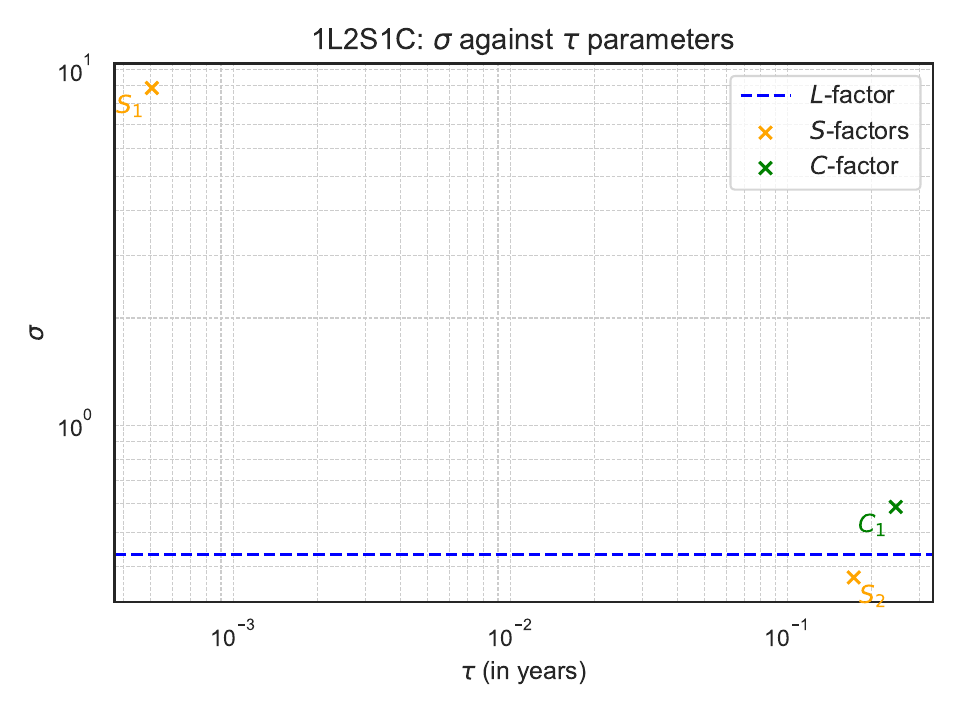}
    }
    \hspace{0.02\textwidth} 
    \adjustbox{trim={0.1cm 0.1cm 0.1cm 0.1cm},clip}{%
        \includegraphics[width=0.45\textwidth,angle=0]{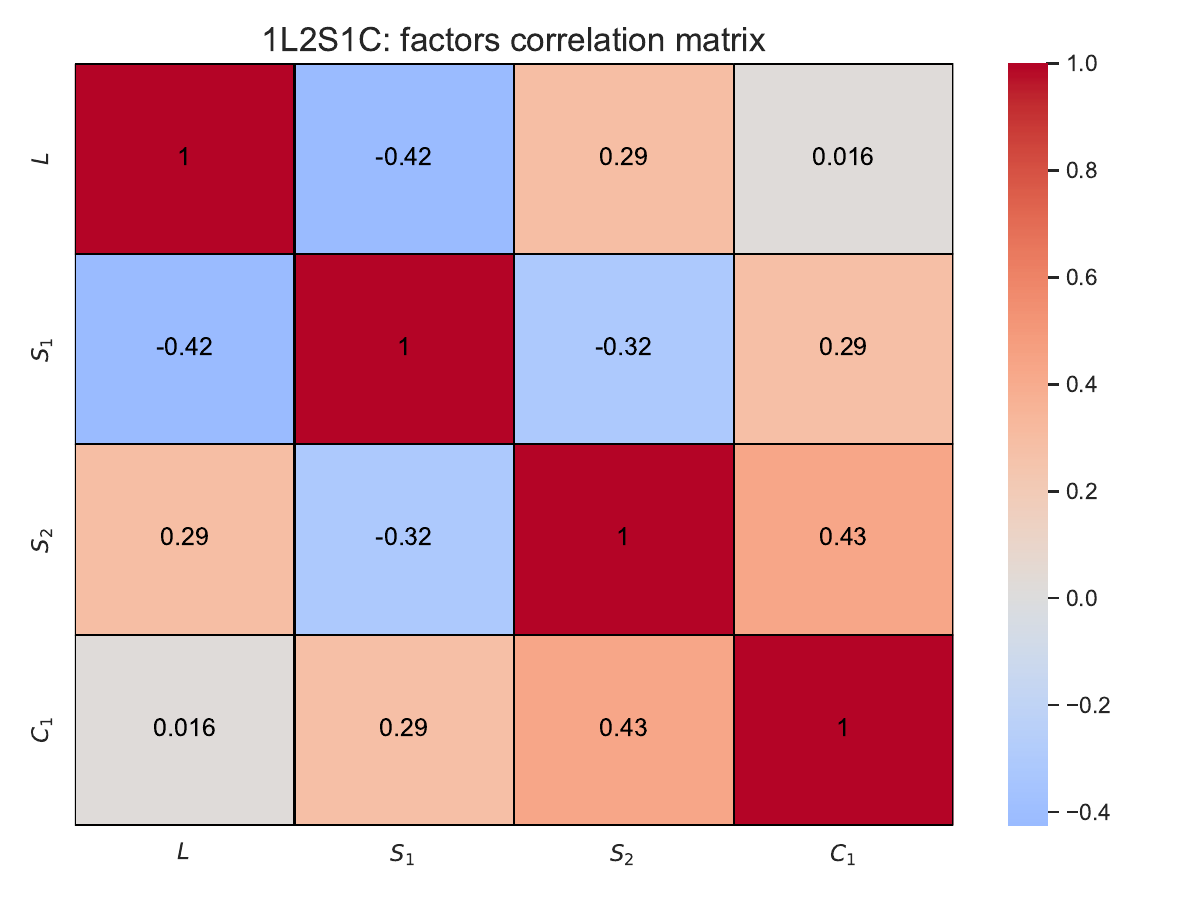}
    }

    \caption{Plots of calibrated $\left(\sigma_{i}\right)_{i}$ against $\left(\tau_{i}\right)_{i}$ parameters (left) and of the factors' correlation matrix (right) for the $1L2S1C$ model on TTF market data, when taking $\lambda=0.99$ in the loss function $J^{\lambda}$ \eqref{eq:def_loss}. Model parameters are, in $L$-$S$-$C$ order: $\left(\sigma_{i}\right)_{i} := [0.4331, 8.8686, 0.3727, 0.5886]$, $\left(\tau_{i}\right)_{i} := [0.0005, 0.1731, 0.2456]$.}
    \label{F:sigma_tau_and_correlation_1L_2S_1C_TTF}
\end{figure}

\begin{figure}[H]
\begin{center}
    \includegraphics[width=0.8\linewidth]{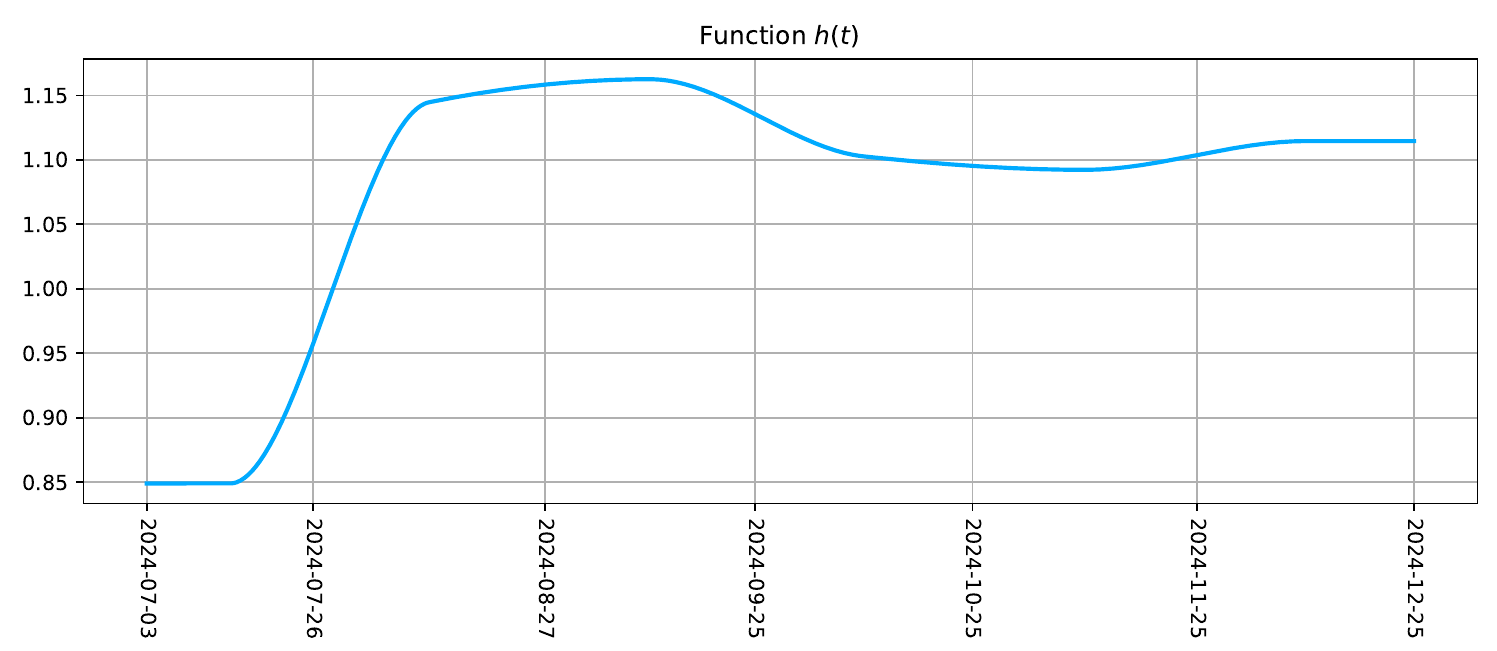}
    \caption{Calibrated function $h$.}
    \label{fig:h_TTF}
\end{center}
\end{figure}

\begin{figure}[H]
\begin{center}    \includegraphics[width=0.6\linewidth]{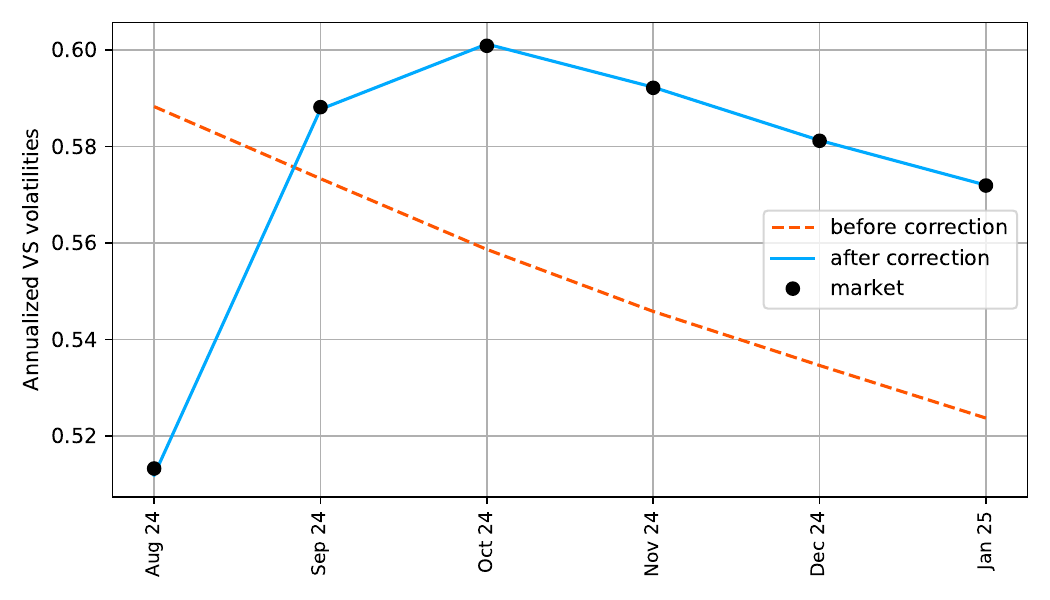}
    \caption{VS term structure fit before the correction (orange) and after the correction (blue).}
    \label{fig:vs_ts_fit_ttf}
\end{center}
\end{figure}

\begin{table}[H]
\begin{center}
\begin{tabular}{ c c} 
 \hline
 Parameter & Calibrated value\\
 \hline \hline
 \rule{0pt}{2ex} $c$ & (1.863, 1.155, 3.747) \\
 
 \rule{0pt}{2ex} $x$ &  (2.586,  4.919, 27.745) \\ 
 
 \rule{0pt}{2ex} $\tilde\rho$ & (0.76 , -0.267, -0.222, -0.272) \\ 
 
 \hline
\end{tabular}
\end{center}
    \caption{Calibrated lifted Heston model parameters}
    \label{calibrated_params_TTF}
\end{table}

\begin{figure}[H]
\begin{center}
    \includegraphics[width=1\linewidth]{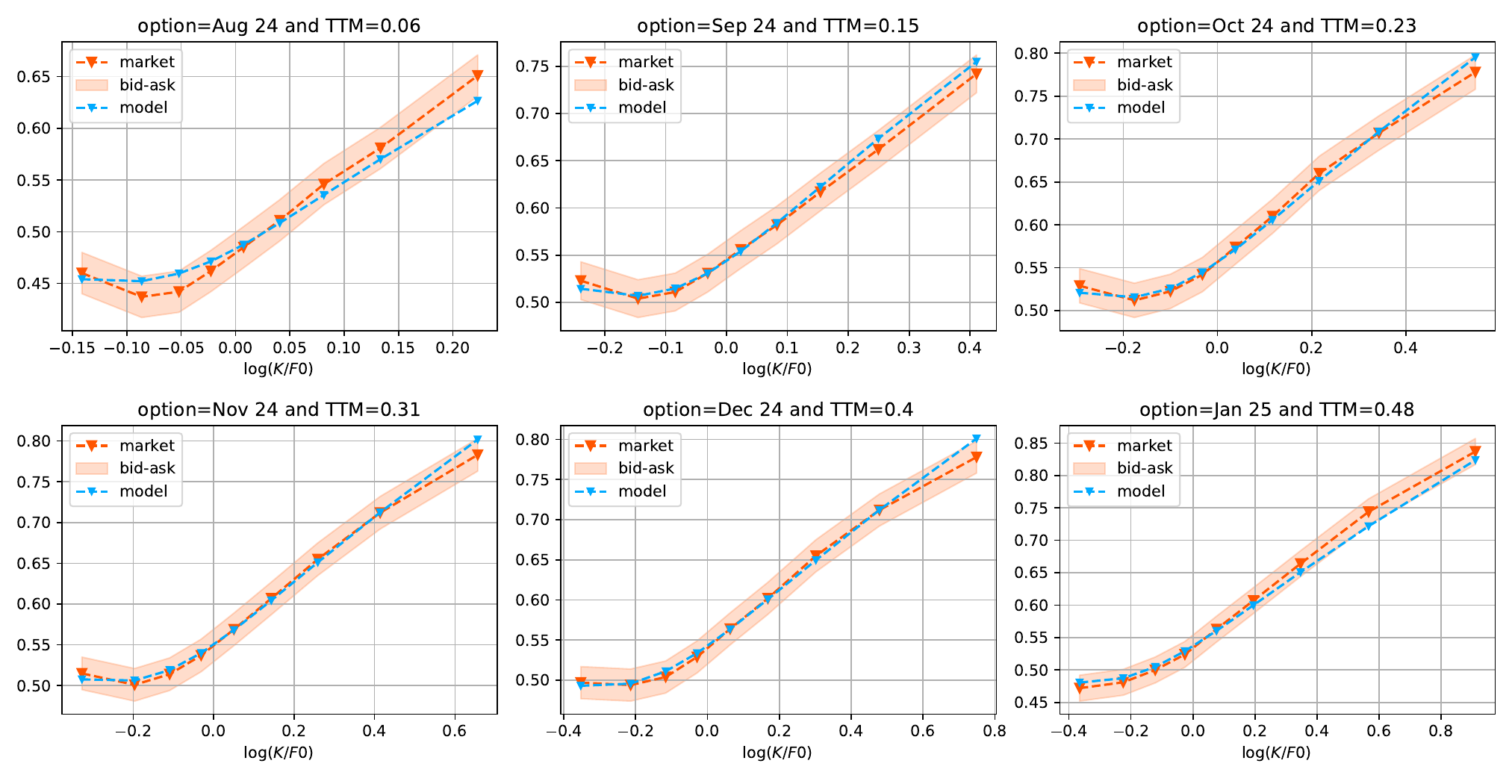}
    \caption{Calibrated IV smiles.}
    \label{fig:smiles_TTF}
\end{center}
\end{figure}

\begin{figure}[H]
    \centering
    \subfigure[\centering IV smiles from August 2024 to February 2026.]{{\includegraphics[width=0.47\linewidth]{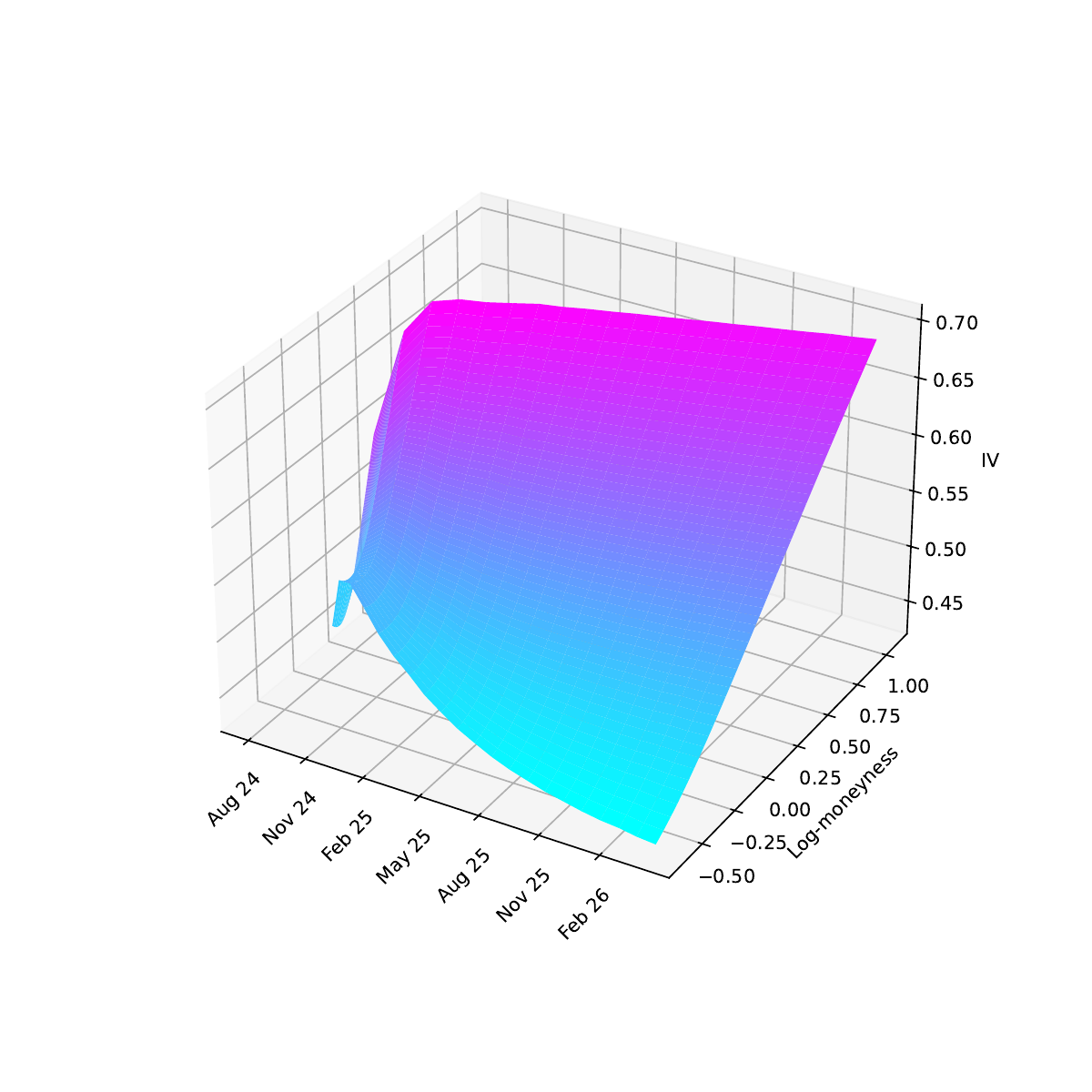} }}%
    \qquad
    \subfigure[\centering ATM volatilities from July 3, 2024 to July 3, 2027.]{{\includegraphics[width=0.47\linewidth]{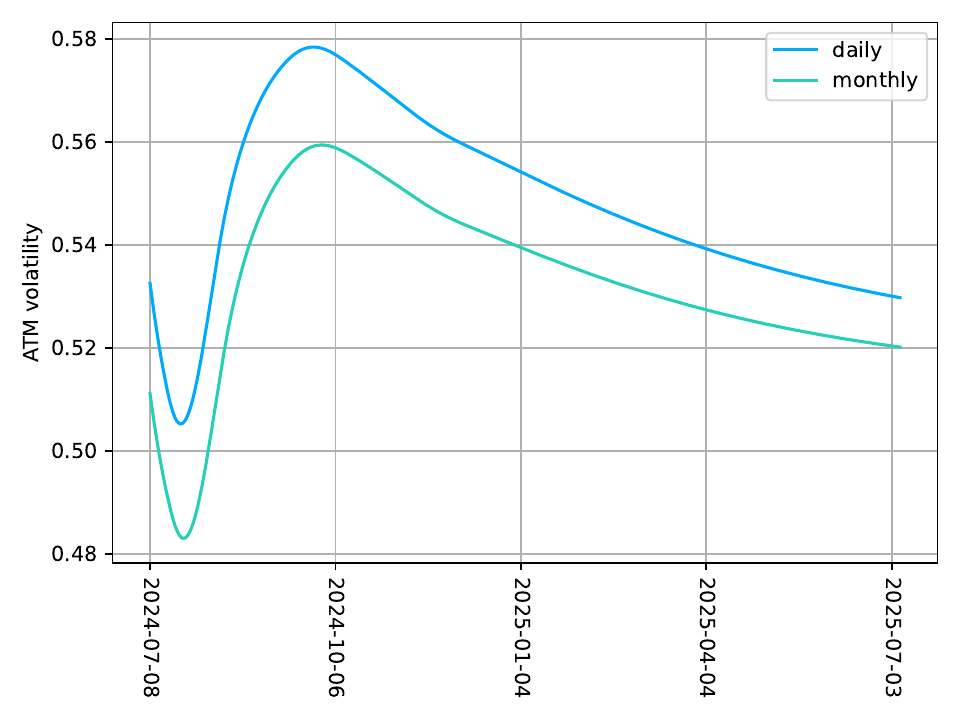}}}%
    \caption{Implied volatility smiles for monthly contracts and implied ATM volatilities of daily and monthly contracts generated by the calibrated model. ((a) and (b) correspondingly).}%
    \label{fig:surf_TTF}
\end{figure}

\subsection{A word on the additive model}\label {section:additive_model}

An additive HJM model with stochastic volatility can be considered as well. In this case, the dynamics is given by
\begin{equation}\label{eq:fwd_with_sto_vol_normal}
     \d f(t,T) = g(T)h(t)\sqrt{V_t}\sum_{i=1}^N \color{black} \sigma_{i}(t,T) \color{black} \d W_t^{i}, \quad t \in [0,\, T].
\end{equation}
However, we found several disadvantages preventing from its efficient usage.
As the market uses the Black-Scholes volatility quotation, instead of the well-known smile flattening effect, the additive model produces negatively skewed smiles for $T \gg 1$, which is hardly acceptable in the commodity market where the skew is typically positive, { i.e.~the ``inverse leverage'' as described in \cite{Andersen2010}}. 

We illustrate it with a simple experiment, calibrating the smile shapes of the contracts Aug 24 and Cal 25 Sep and plotting the smile corresponding to Cal 27 Dec, see Figures \ref{fig:normal_calib}--\ref{fig:log-normal_calib}. The functions $g$ and $h$ were calibrated to all the three smiles. Since the link between the forward variance and the log-contract is absent in the additive model, $g$ and $h$ were calibrated to match the ATM volatility level in this case, while for the multiplicative model we provide both calibration of the ATM volatilities and of the variance swap volatilities (Figures \ref{fig:log-normal_calib_atm} and \ref{fig:log-normal_calib} correspondingly). We observe that the additive model produces a completely inconsistent shape for the third, extrapolated smile. 
The fixed-point ATM volatility calibration algorithm stays the same, but volatility term structure and shape calibration cannot be separated and the calibration routine becomes much more time-consuming.

\begin{figure}[H]
\begin{center}
    \includegraphics[width=1\linewidth]{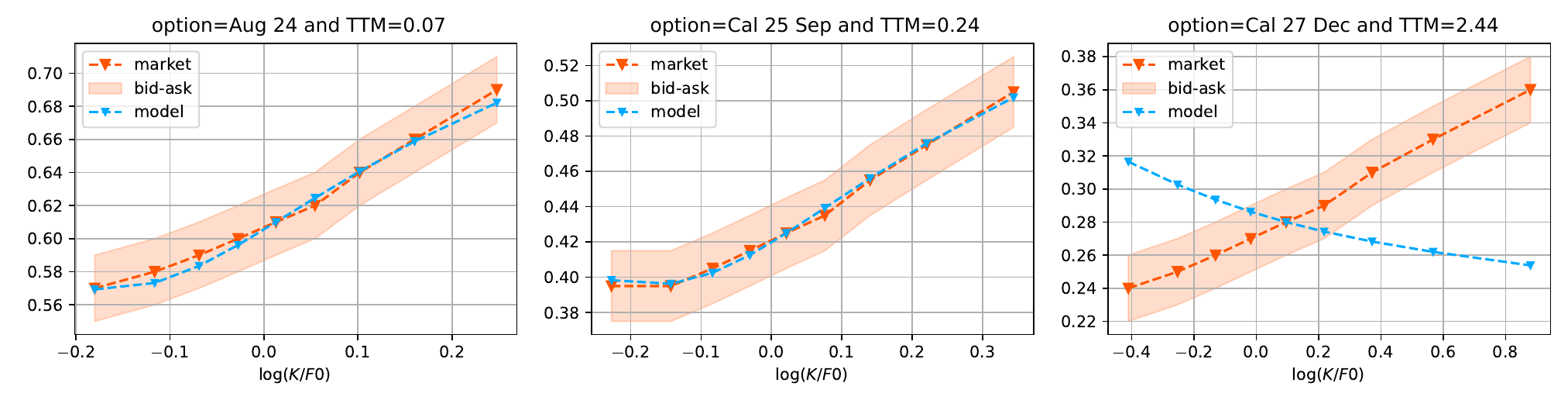}
    \caption{Additive model calibrated to smiles Aug 24 and Cal 25 Sep. \\ Term structure for all the three contracts is calibrated via the ATM volatility.}
    \label{fig:normal_calib}
\end{center}
\end{figure}

\begin{figure}[H]
\begin{center}
    \includegraphics[width=1\linewidth]{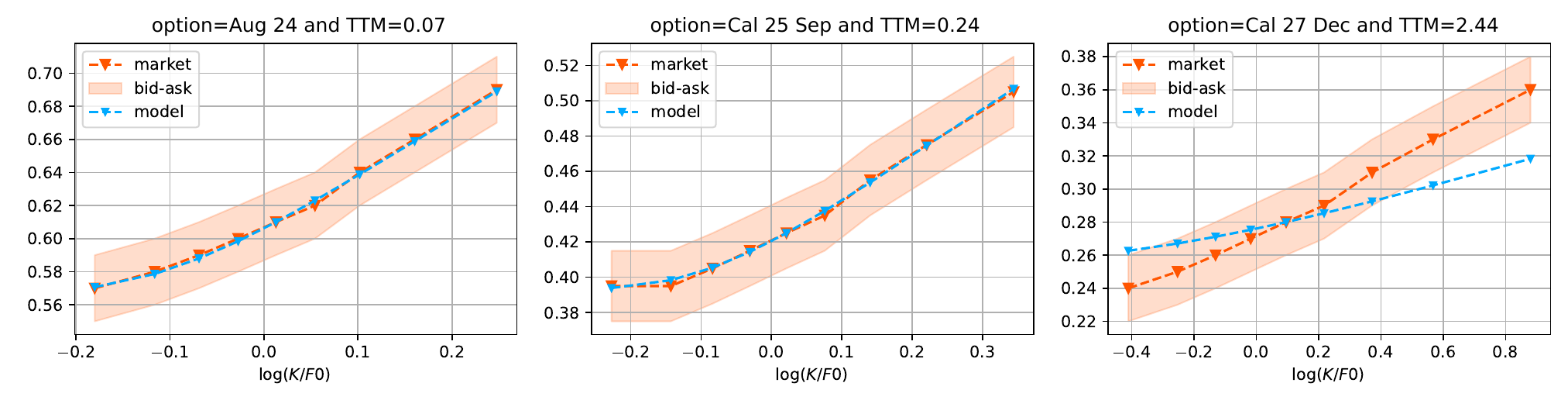}
    \caption{Multiplicative model calibrated to smiles Aug 24 and Cal 25 Sep. \\ Term structure for all the three contracts is calibrated via the ATM volatility.}
    \label{fig:log-normal_calib_atm}
\end{center}
\end{figure}

\begin{figure}[H]
\begin{center}
    \includegraphics[width=1\linewidth]{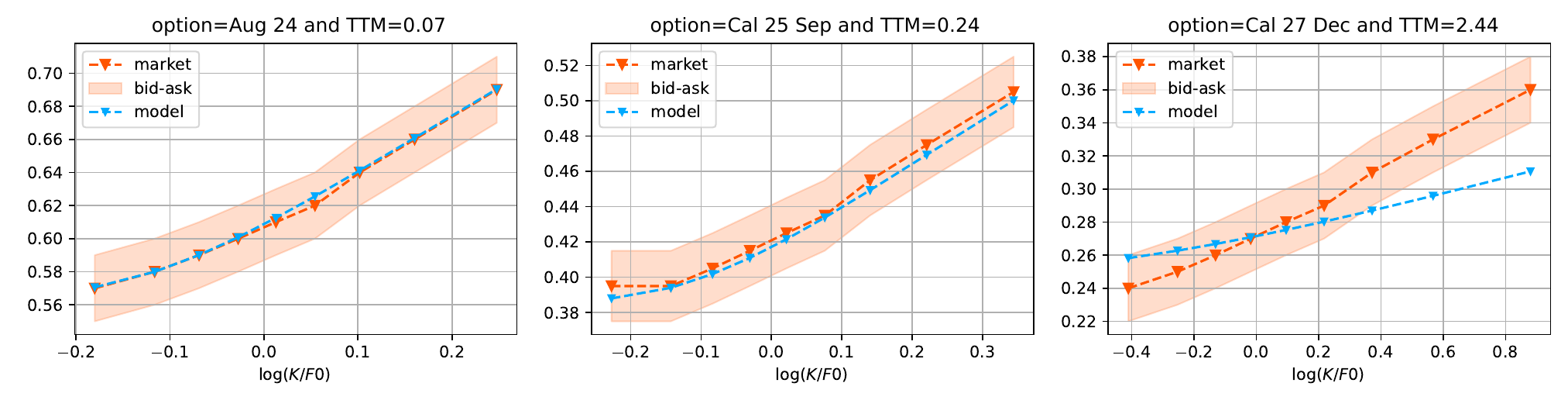}
    \caption{Multiplicative model calibrated to smiles Aug 24 and Cal 25 Sep. \\ Term structure for all the three contracts is calibrated via the variance swap volatility.}
    \label{fig:log-normal_calib}
\end{center}
\end{figure}

\section{Additional details for the step 1) calibration}
\subsection{Estimation of historical covariances} \label{ss:covariance_estimation}

Given two rolling futures contracts' indices $(i,j) \in \{ 1, \ldots, P_{\mathrm{hist}} \}^{2}$, we introduce the following normalized covariance estimator with exponentially decaying weights such that
\begin{equation}
    \widehat{\mathrm{Cov}}_{\kappa} \left( r^{\mathrm{mkt}, i}(\tau_{d}), r^{\mathrm{mkt}, j}(\tau_{d}) \right) := \frac{1}{\tau_{d}} \frac{1}{1-\sum_{h=0}^{H} w_{h}^2(\kappa)} \sum_{h=0}^{H} w_{h}(\kappa) \left( r_{t_{h}}^{\mathrm{mkt}, i}(\tau_{d}) - \bar{r}_{w(\kappa)}^{\mathrm{mkt}, i}(\tau_{d}) \right) \left( r_{t_{h}}^{\mathrm{mkt}, j}(\tau_{d}) - \bar{r}_{w(\kappa)}^{\mathrm{mkt}, j}(\tau_{d}) \right),
\end{equation}
with the exponentially decaying weighted average of log returns given by
\begin{equation}
    \bar{r}_{w(\kappa)}^{\mathrm{mkt}, n}(\tau_{d}) := \sum_{h=0}^{H} w_{h}(\kappa) r_{t_{h}}^{\mathrm{mkt}, n}(\tau_{d}), \quad n \in \{ 1, \ldots, P_{\mathrm{hist}}\},
\end{equation}
and where the weights are computed as follows
\begin{equation} \label{eq:weights_formula}
    w_{h}(\kappa) := \frac{\left( 1 - \alpha_{\kappa} \right)^{H-h}}{\sum_{k=0}^{H} \left( 1 - \alpha_{\kappa} \right)^{H-k}}, \quad h \in \left\{1, \ldots, H\right\}, \quad \alpha_{\kappa} := \frac{2}{\kappa + 1},
\end{equation}
for some time \textit{span} parameter $\kappa \geq 1$ (in number of days) controlling the decay rate of the exponential smoothing. In practice, we consider the covariance estimator of past log returns averaged across a family of $Z \in \mathbb{N}$ time-scale decays $\left( \kappa_{z} \right)_{z \in \{ 1, \ldots, Z \}}$ such that
\begin{equation} \label{eq:historical_covariance_matrix}
    C_{i,j}^{\mathrm{mkt}} := \frac{1}{Z} \sum_{z=1}^{Z} \widehat{\mathrm{Cov}}_{\kappa_{z}} \left( r^{\mathrm{mkt}, i}(\tau_{d}), r^{\mathrm{mkt}, j}(\tau_{d}) \right)
\end{equation}
as well as the element-wise confidence matrix $U$ whose $(i,j)^{th}$ entry is the standard deviation of the $(i,j)^{th}$ entries of the respective covariance matrices $\left( \widehat{\mathrm{Cov}}_{\kappa_{z}} \right)_{z \in \left\{1, \ldots, Z \right\}}$. We then scale $\Gamma \in \mathbb{R}_{+}^{P_{\mathrm{hist}}^{2}}$ weighting the Frobenius norm used in the calibration loss \eqref{eq:def_loss} as
\begin{equation} \label{eq:weight_specification}
    \Gamma \propto \frac{1}{\bar{U}+U},
\end{equation}
where the division is understood element-wise and $\bar{U}$ is the matrix with all entries constant equal to the average entries of $U$. This choice ensures more weights are given to the entries of the covariance matrix of returns which are less volatile across the different decay rates.


\begin{figure}[H]
    \centering
    \begin{minipage}{0.48\textwidth}
        \centering
        \includegraphics[width=\textwidth, trim=0 10 0 0, clip]{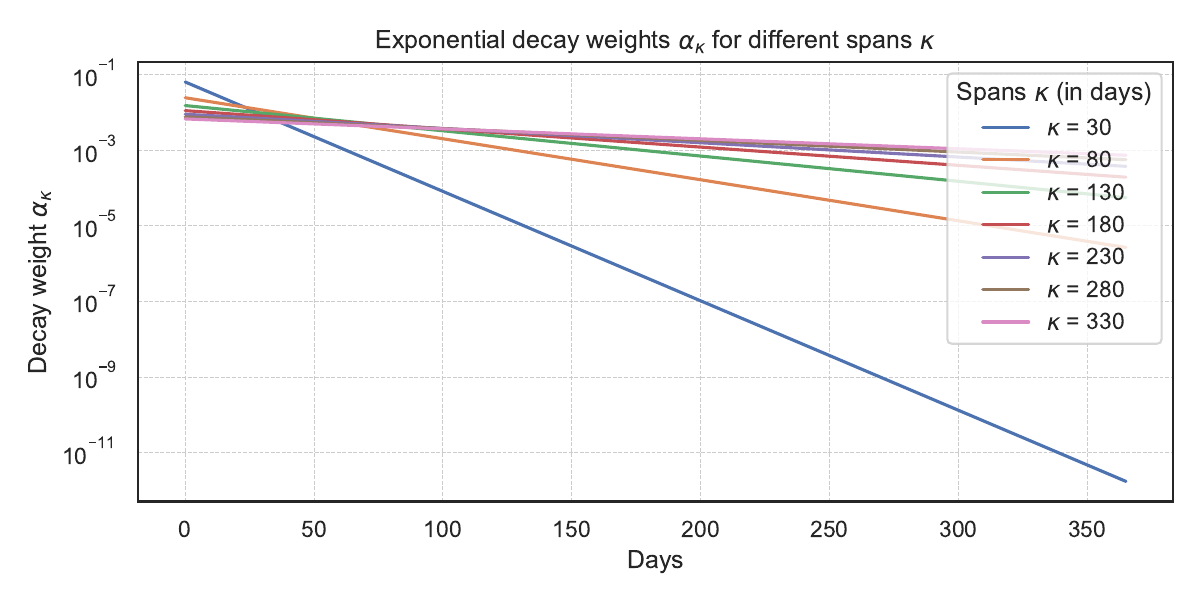} 
    \end{minipage}
    \hspace{0.1cm} 
    \begin{minipage}{0.45\textwidth}
        \centering
        \includegraphics[width=\textwidth, trim=10 0 110 10, clip]{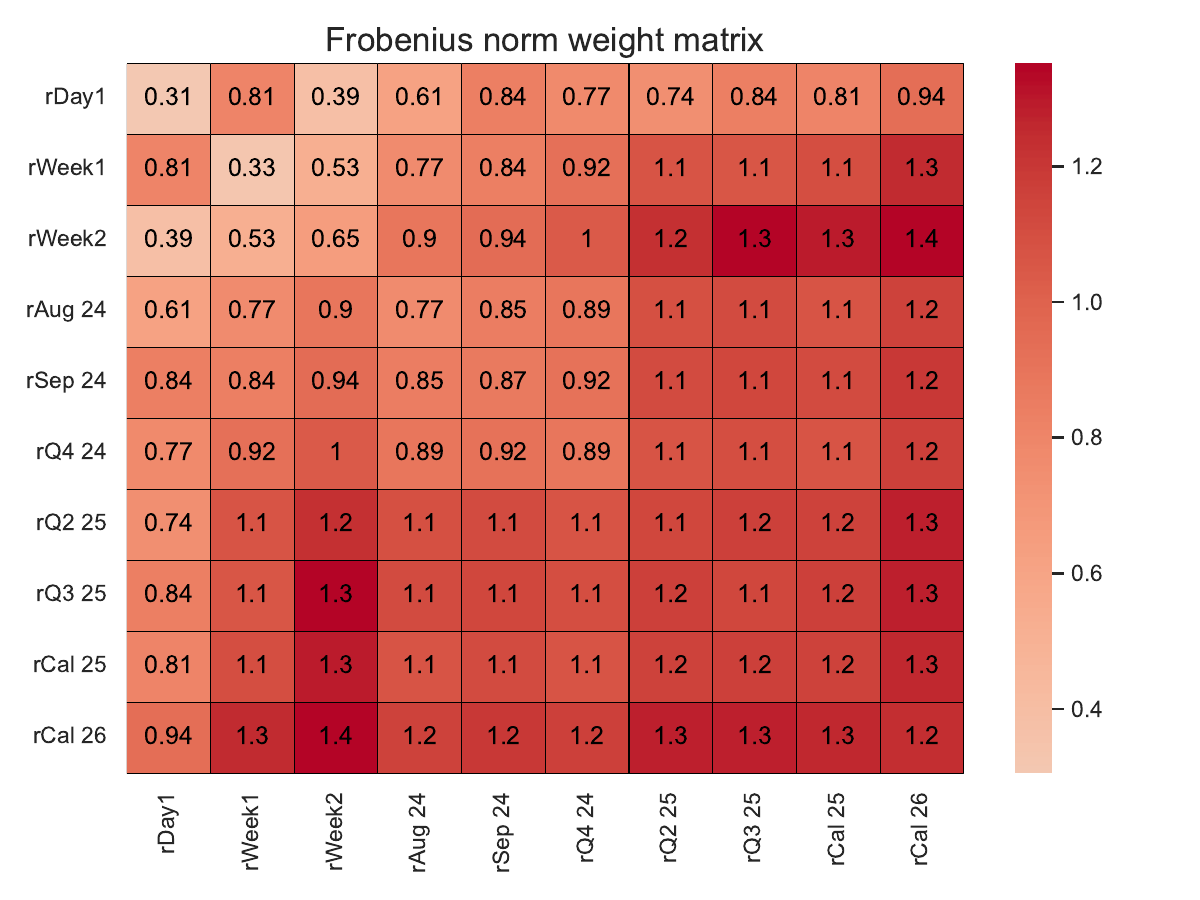} 
    \end{minipage}

    \caption{Left: exponentially decaying weights \eqref{eq:weights_formula} for various span parameters $\kappa \geq 1$; right: resulting weight matrix from \eqref{eq:weight_specification} used in the loss $J_{1}$ \eqref{eq:def_loss_covariance_fit}.}
    \label{F:weights_decay_and_frobenius}
\end{figure}

\subsection{Linear cone programming} \label{s:linear_cone_programming}

The inward minimization problem in \eqref{eq:iterative_minimisation_problem} is formulated as a linear cone program in terms of the variables
\begin{equation} \label{eq:x_seen_as_column_major_vector}
    \bar{x} := \left( \bar{x}_0, \; \left( \bar{x}_{p} \right)_{p \in \left\{ 1, \ldots, \frac{(N+N_{c}+1)(N+N_{c})}{2} \right\}} \right),
\end{equation}
where $\left( \bar{x}_{p} \right)_{p \in \left\{ 1, \ldots, \frac{(N+N_{c}+1)(N+N_{c})}{2} \right\}}$ is identified as a column-major ordered vector corresponding to the lower triangular part of the symmetric matrix variables $\left( x_{p,k}(\sigma, R) \right)_{p, k \in \left\{ 1, \ldots, N+N_{c} \right\}}$ from \eqref{eq:def_quadratic_variables} such that
\begin{align} \label{eq:linear_cone_program}
    & \min_{\bar{x} \in \bar U} c^T \bar{x} \\ \label{eq:linear_cone_program_admissible_set}
    & \bar U := \left\{ \bar{x} \in \mathbb{R}^{1+\frac{(N+N_{c}+1)(N+N_{c})}{2}} \; : \; G^{\lambda}\left( \tau(a) \right)\bar{x} + s = h, \; A\bar{x} = b, \; s \in C \right\},
\end{align}
with $C := C_{0} \times C_{1} \times C_{2}$ such that
\begin{align} \label{eq:non_negative_orthant}
    C_{0} & :=  \left\{ u \in \mathbb{R}^{N} \; : \; u_{i} \geq 0, \; i \in \left\{ 1, \ldots, N \right\} \right\}, \\ \label{eq:def_second_order_cone}
    C_{1} & := \left\{ (u_0, u_1) \in \mathbb{R} \times \mathbb{R}^{\frac{(N+N_{c}+1)(N+N_{c})}{2}} \; : \; u_0 \geq \| u_1 \|_{2} \right\}, \\
    C_{2} & := \left\{ \mathrm{vec}(u) \; | \; u \in \mathbb{S}_{+}^{N} \right\},
\end{align}
and where $\| . \|_{2}$ the standard Euclidean norm, and $\mathrm{vec}(u)$ denotes a symmetric matrix $u$ stored as a vector in column major order.

Indeed, the linear cone program \eqref{eq:linear_cone_program}--\eqref{eq:linear_cone_program_admissible_set} handles the various constraints on $(\sigma, R)$ stated in the inner optimization problem from \eqref{eq:iterative_minimisation_problem} such that:
\begin{itemize}
    \item the non-negative orthant $C_{0}$ \eqref{eq:non_negative_orthant} captures constraints of the form $\sigma_{i}^{\mathrm{upper}} \geq \sigma_{i} \geq 0, \; i \in \left\{ 1, \ldots, N \right\}$, where typically $\sigma_{1}^{\mathrm{upper}}$ is chosen to be a small multiple of the rolling futures contract's annualized volatility associated to the latest delivery to constraint $\sigma_{L}$,
    
    \item we specify $c := \left( 1, 0_{\mathbb{R}^{\frac{(N+N_{c}+1)(N+N_{c})}{2}}} \right)$ so that ``$\bar{x}_{0}$ plays the role of $u_0$'' in the second-order cone $C_{1}$ from \eqref{eq:def_second_order_cone}, and we define accordingly the $\tau(a)$-dependent quantity $G^{\lambda}\left( \tau(a) \right)$ using the weights \eqref{eq:weights_tau_cov}--\eqref{eq:weights_tau_vs_var}, and $h$ with the respective target market values from the loss definition \eqref{eq:loss_functional_quadratic_in_x} so as to ensure both the quadratic loss function $J^{\lambda}$ is minimized and $\left(x_{i,j}\right)_{i,j \in I} \in \mathbb{S}_{+}^{N}$, where $I$ denotes the set of indices relative to $\left\{ L, S, C_{1} \right\}$ state variables, i.e.~$\left(x_{i,j}\right)_{i,j \in I}$ is obtained by withdrawing the columns and rows involving the state variable $C_{2}$ from the matrix $\left( x_{p,k} \right)_{p, k \in \left\{ 1, \ldots, N+N_{c} \right\}}$,

    \item finally, we specify $(A,b)$ accordingly to ensure the equality constraints \eqref{eq:equality_constraints_c_factors} with respect to the state variables of the $C$-factors.
\end{itemize}

In practice, \eqref{eq:linear_cone_program} can be solved numerically by any linear cone program solver such as \textit{conelp} from CVXOPT, see \href{https://cvxopt.org/userguide/coneprog.html}{https://cvxopt.org/userguide/coneprog.html} for additional details, and from which we borrowed the notations.

\section{Proof of Theorem \ref{Thm:fixed_point}} \label{S:thm_fixed-point}
\begin{proof}
Without loss of generality, suppose that $[T_s^j, T_e^j] \subset [T_s^1, T_e^1]$ for $j \in \{2, \ldots {P_{\mathrm{imp}}}\}$, since the delivery periods non included in $[T_s^1, T_e^1]$ will not impact the function $h$. Hence, they can be eliminated from the fixed-point iteration algorithm to be calibrated once at the end.

Given the vector $g = (\bar g_{1}, \ldots, \bar  g_{{P_{\mathrm{imp}}} - 1}) \in \R^{P_{\mathrm{imp}} - 1}$, we construct the vector $h = (h_1, \ldots, h_{N_1}) \in\R^{N_1}$ by \eqref{eq:h_calibration}:
\begin{equation}\label{eq:h_formula}
    h_k^2(g) = \dfrac{\mathrm{VS}^1_{T^1_k} - \mathrm{VS}^1_{T^1_{k-1}}}{\int_{T^1_{k-1}}^{T^1_k} \Sigma_{s}^\top(T_s^1, T_e^1)R\Sigma_{s}(T_s^1, T_e^1)\,\d s}, \quad k \in \{1, \ldots, N_1\}.
\end{equation}
Since $\Sigma_{s}(T_s^1, T_e^1)$ is affine in $g$
\begin{equation}
    \Sigma_{s}(T_s^1, T_e^1) = \sum_{i=2}^{P_{\mathrm{imp}}} \omega_{1}^i \textcolor{black}{\bar g_{i}} \hat\Sigma_{s}(T_s^i, T_e^i) + \left(1-\sum_{i=2}^{P_{\mathrm{imp}}} \omega_{1}^i\right)\hat\Sigma_{s}\left([T_s^1, T_e^1]\setminus \cup_{i=2}^{P_{\mathrm{imp}}} [T_s^i, T_e^i]\right),
\end{equation}
the equation \eqref{eq:h_formula} can be rewritten as
\begin{equation}\label{eq:h_formula_2}
    h_k^2(g) = \dfrac{\mathrm{VS}^1_{T^1_k} - \mathrm{VS}^1_{T^1_{k-1}}}{g^\top Q_k g + 2g^\top p_k + r_k}, \quad k \in \{1, \ldots, N_1\},
\end{equation}
where $Q_k \in\R^{({P_{\mathrm{imp}}} - 1)\times ({P_{\mathrm{imp}}} - 1)}, \quad p_k \in\R^{P_{\mathrm{imp}} - 1},$ and $r_k \in \R$ are defined by
\begin{align}
    (Q_k)_{ij} &= \omega_{1}^i\omega_{1}^j \int_{T^1_{k-1}}^{T^1_k} \hat\Sigma_{s}^\top(T_s^i, T_e^i)R\hat\Sigma_{s}(T_s^j, T_e^j)\,\d s, \quad i, j \in \{2, \ldots, {P_{\mathrm{imp}}}\}, \\
    (p_k)_i &= \omega_{1}^i \Bigl(1-\sum_{l=2}^{P_{\mathrm{imp}}} \omega_{1}^l\Bigl) \int_{T^1_{k-1}}^{T^1_k} \hat\Sigma_{s}^\top(T_s^i, T_e^i)R\hat\Sigma_{s}([T_s^1, T_e^1]\setminus \cup_{l=2}^{P_{\mathrm{imp}}} [T_s^l, T_e^l])\,\d s, \quad i \in \{2, \ldots, {P_{\mathrm{imp}}}\}, \\
    r_k &= \Bigl(1-\sum_{l=2}^{P_{\mathrm{imp}}} \omega_{1}^l\Bigl)^2 \int_{T^1_{k-1}}^{T^1_k} \hat\Sigma_{s}^\top([T_s^1, T_e^1]\setminus \cup_{l=2}^{P_{\mathrm{imp}}} [T_s^l, T_e^l])R\hat\Sigma_{s}([T_s^1, T_e^1]\setminus \cup_{l=2}^{P_{\mathrm{imp}}} [T_s^l, T_e^l])\,\d s
\end{align}
Since we have assumed that the contracts have positive instantaneous correlations, we obtain $(Q_k)_{ij} > 0$ and $(p_k)_i > 0$ for all $i, j \in \{2, \ldots, {P_{\mathrm{imp}}}\}$. Moreover, by the linear independence hypothesis on the division, $[T_s^1, T_e^1]\setminus \cup_{i=2}^{P_{\mathrm{imp}}} [T_s^i, T_e^i] \not= \varnothing$, so that $r_k > 0$.

The new iteration of $g$ calculated by \eqref{eq:g_calibration_1} is then given by
\begin{equation}\label{eq:psi_formula}
    \psi_i(g) = \sqrt{\dfrac{\mathrm{VS}_{T^i}^i}{\sum\limits_{k=1}^{N_1}h_k^2(g)\int\limits_{[0,\,T^i]\cap[T_{k-1},\, T_k]}\hat\Sigma_{s}^\top(T_s^i, T_e^i)R\hat\Sigma_{s}(T_s^i, T_e^i)\,\d s}} = \sqrt{\dfrac{\mathrm{VS}_{T^i}^i}{S_i^\top h^2(g)}}, \quad i \in \{2, \ldots, {P_{\mathrm{imp}}}\}.
\end{equation}
where we have denoted $S_i = \left(\int\limits_{[0,\,T^i]\cap[T_{k-1},\, T_k]}\hat\Sigma_{s}^\top(T_s^i, T_e^i)R\hat\Sigma_{s}(T_s^i, T_e^i)\,\d s\right)_{k \in \{1,\ldots,N_1\}}$. Note all that the elements of $S_i$ are positive.

\textbf{(i) Existence.} We will apply the Brouwer's theorem to establish the existence of the fixed-point. Since all the coefficients are positive, the mapping $\psi: \R_+^{P_{\mathrm{imp}} - 1} \to\R_+^{P_{\mathrm{imp}} - 1}$ is continuous. Thus, it is sufficient to find a convex compact $K \subset \R_+^{P_{\mathrm{imp}} - 1}$, such that $\psi(K) \subset K$. We will look for a compact of the form $K_R = \{x \in \R_+^{P_{\mathrm{imp}} - 1}\colon\ \|x\| \leq R\}$ and we will show that there exists an $R > 0$ big enough, such that $\psi(K_R) \subset K_R$. 

Indeed, if $g \in K_R$, then
\begin{equation}
    h_k^2(g) \geq \dfrac{\mathrm{VS}^1_{T^1_k} - \mathrm{VS}^1_{T^1_{k-1}}}{\|Q_k\| \|g\|^2 + 2\|g\| \|p_k\| + r_k} \geq \dfrac{\mathrm{VS}^1_{T^1_k} - \mathrm{VS}^1_{T^1_{k-1}}}{\|Q_k\| R^2 + 2 \|p_k\| R + r_k}.
\end{equation}
The norm of $\psi$ can be bounded:
\begin{equation}
    \|\psi(g)\|^2 \leq \sum_{i=2}^{P_{\mathrm{imp}}}\dfrac{\mathrm{VS}_{T^i}^i}{\sum\limits_{k=1}^{N_1}S_i^k \dfrac{\mathrm{VS}^1_{T^1_k} - \mathrm{VS}^1_{T^1_{k-1}}}{\|Q_k\| R^2 + 2 \|p_k\| R + r_k}}
\end{equation}
Hence,
\begin{equation}
    \lim_{R\to\infty}\dfrac{\|\psi(g)\|^2}{R^2} \leq \sum_{i=2}^{P_{\mathrm{imp}}}\dfrac{\mathrm{VS}_{T^i}^i}{\sum\limits_{k=1}^{N_1}S_i^k \dfrac{\mathrm{VS}^1_{T^1_k} - \mathrm{VS}^1_{T^1_{k-1}}}{\|Q_k\|}} < 1,
\end{equation}
where we have used \eqref{eq:fixed-point_condition}. Thus, there exists $R$, such that for $g \in K_R$, we have $\|\psi(g)\| < R$, and $\psi(K_R) \subset K_R$.

\textbf{(ii) Stability.} To prove stability of the fixed-point, we linearize $\psi$ and study its Jacobian matrix $\dfrac{\partial\psi}{\partial g}$. Taking the derivative of \eqref{eq:psi_formula}, we obtain
\begin{equation}
    \dfrac{\partial\psi_i(g)}{\partial g_j} = -\dfrac{1}{2} \sqrt{\dfrac{\mathrm{VS}_{T^i}^i}{S_i^\top h^2(g)}}\dfrac{1}{S_i^\top h^2(g)}\sum_{k=1}^{N_1}S_i^k \dfrac{\partial}{\partial g_j} h_k^2(g) = 
    \dfrac{\psi_i(g)}{S_i^\top h^2(g)}\sum_{k=1}^{N_1}S_i^k h_k^2(g)\dfrac{2g^{\top} Q_k^j + 2(p_k)_j}{2g^\top Q_k g + 4g^\top p_k + 2r_k},
\end{equation}
where $Q_k^j$ denotes the $j$-th row of $Q_k$. If $g^*$ is a fixed-point of $\psi$, then $\psi_i(g^*) = g_i^*$, and we have
\begin{equation}\label{eq:A_contraction_ineq}
    \sum_{j=2}^{P_{\mathrm{imp}}} \dfrac{1}{g_i^*}\dfrac{\partial\psi_i(g^*)}{\partial g_j}g_j^* = \dfrac{1}{S_i^\top h^2(g^*)}\sum_{k=1}^{N_1}S_i^k h_k^2(g^*)\underbrace{\dfrac{2g^{*\top} Q_k g^* + 2 g^\top p_k}{2g^{*\top} Q_k g^* + 4g^{*\top} p_k + 2r_k}}_{<1} < 1, \quad i \in \{2, \ldots, {P_{\mathrm{imp}}}\},
\end{equation}
as all the coefficients in this expression are positive. \eqref{eq:A_contraction_ineq} implies that the matrix $A := \mathrm{diag}(g^*)^{-1} \dfrac{\partial\psi}{\partial g}(g^*) \mathrm{diag}(g^*)$ is a contraction in $(\R^{P_{\mathrm{imp}} - 1},\, \|\cdot\|_\infty)$. Thus, there exists a neighborhood $U$ of $g^*$ such that the iterations $g^n$ converge for any $g^0 \in U$. Indeed, defining $e_n = g_n - g^*$, we obtain in $U$
\begin{equation}
    e_{n + 1} = \mathrm{diag}(g^*)A\mathrm{diag}(g^*)^{-1} e_n,
\end{equation}
so that $\|\mathrm{diag}(g^*)^{-1} e_n\|_\infty$ tends to 0 as $n\to\infty$, and $\|e_n\|_\infty\to 0$.

\textbf{(iii) Uniqueness.} We establish uniqueness only for the case $N_1 = 1$. By \eqref{eq:psi_formula}, the fixed-point equation $\psi(g) = g$ reads
\begin{equation}
    \psi_i(g) = \sqrt{\dfrac{1}{S_i^1}\dfrac{\mathrm{VS}_{T^i}^i}{S_i^1 h^2_1(g)}} = \sqrt{\dfrac{1}{S_i^1}\dfrac{\mathrm{VS}_{T^i}^i}{\mathrm{VS}^1_{T^1_1} - \mathrm{VS}^1_{T^1_{0}}}}\sqrt{g^\top Q_1 g + 2g^\top p_1 + r_1}= \alpha(g)e_i,
\end{equation}
where
\begin{equation}
    \mathbf{e} := \left(\sqrt{\dfrac{1}{S_i^1}\dfrac{\mathrm{VS}_{T^i}^i}{\mathrm{VS}^1_{T^1_1} - \mathrm{VS}^1_{T^1_{0}}}}\right)_{i \in \{2, \ldots, {P_{\mathrm{imp}}}\}}, \quad \alpha(g) := \sqrt{g^\top Q_1 g + 2g^\top p_1 + r_1}.
\end{equation}
Is is clear that the fixed point $g$, if exists, should be of the form
\begin{equation}
    g = \beta\mathbf{e}, \quad \beta > 0,
\end{equation}
so that the fixed-point problem in $\R^{P_{\mathrm{imp}}-1}$ is reduced to the one-dimensional fixed-point problem for the mapping
\begin{equation}
    \psi_\beta: \R_+ \to \R_+, \quad \beta \mapsto \alpha(\beta\mathbf{e}) = \sqrt{\mathbf{e}^\top Q_1 \mathbf{e}\beta^2 + 2\mathbf{e}^\top p_1\beta + r_1},
\end{equation}
which is equivalent to the quadratic equation
\begin{equation}\label{eq:fixed-point_qudratic_eq}
    (1 - \mathbf{e}^\top Q_1 \mathbf{e})\beta^2 - 2\mathbf{e}^\top p_1\beta - r_1 = 0.
\end{equation}
The condition \eqref{eq:fixed-point_condition} guarantees that $\mathbf{e}^\top Q_1 \mathbf{e} < 1$, so that the equation \eqref{eq:fixed-point_qudratic_eq} admits a unique positive root $\beta^*$ and the unique fixed-point of $\psi$ is given by $g^* = \beta^*\mathbf{e}$.
\end{proof}

\section{Monte Carlo simulation scheme} \label{section:monte_carlo_scheme}
The European call option prices $C(T,K)$ can also be computed using the Monte Carlo method. At each step $t$ of our discrete-time grid, given $(U_t^i)_{i \in \{1,\ldots, M\}}, V_t,$ and $F_t$ and a time-step $\Delta t$, we simulate $(U_{t+\Delta t}^i)_{i \in \{1,\ldots, M\}}$, then $V_{t+\Delta t}$, and finally $F_{t+\Delta t}$. For the processes $U_t^i$, we start with the following semi-implicit Euler discretization scheme of \eqref{eq:U_def}:
\begin{equation}
     U_{t+\Delta t}^i - U_{t}^i  = - x_i  U_{t+h}^i\, \Delta t - \lambda V_t \,\Delta t + \nu \sqrt{V_t}(B_{t + \Delta t} - B_t),
\end{equation}
   which leads to
\begin{equation}
    U_{t+\Delta t}^i  = \frac{1}{1+x_i \Delta t} \bigl(U_t^i - \lambda V_t \Delta t + \nu \sqrt{V_t}(B_{t + \Delta t} - B_t) \bigr) \\
\end{equation}

We choose this semi-implicit Euler scheme following \cite{lifted2019}, since it gives more stable results than standard Euler scheme which explodes for large mean-reversion coefficients $x_i$. 

We obtain the variance $V_{t+ \Delta t}$ directly by \eqref{eq:V_def} ensuring that the variance process is floored at zero to prevent negative values. For $\log F_{t+ \Delta t}$, standard Euler scheme is used. We sum up the simulation scheme with the following formulae:
\begin{align*}
    U_{t+\Delta t}^i &= \frac{1}{1+x_i \Delta t}(U_t^i - \lambda V_t \Delta t + \nu \sqrt{V_t}(B_{t+\Delta t} - B_t)), \quad i \in \{1, \ldots, M\}, \\
    V_{t+\Delta t} &= \left(m_0(t+\Delta t) + \sum_{i=1}^M c_i U_{t+\Delta t}^i\right)^+, \\
     \log F_{t+\Delta t}  &= \log F_{t} - \frac{1}{2} h(t)^ 2V_t \Sigma_t^\top R \Sigma_t \Delta t + h(t)\sqrt{V_t}\Sigma_t^\top(W_{t+\Delta t} - W_t).
\end{align*}

\bibliographystyle{plainnat}
\bibliography{refs.bib}

\end{document}